\pdfoutput=1
\documentclass[11pt]{article}
\usepackage[margin=1.1in]{geometry}
\usepackage[T1]{fontenc}
\usepackage{lmodern}
\usepackage{amsmath,amssymb,amsthm}
\usepackage{microtype}
\usepackage{xcolor}
\usepackage{enumitem}
\usepackage{array}
\usepackage{booktabs}
\usepackage{hyperref}
\newtheorem{theorem}{Theorem}[section]
\newtheorem{lemma}[theorem]{Lemma}
\newtheorem{proposition}[theorem]{Proposition}
\newtheorem{corollary}[theorem]{Corollary}
\theoremstyle{definition}
\newtheorem{definition}[theorem]{Definition}
\newtheorem{remark}[theorem]{Remark}
\newtheorem{example}[theorem]{Example}
\newcommand{\eps}{\varepsilon}

\newcommand{\RM}{\mathrm{RM}}          %
\newcommand{\barr}{\mathrm{bar}}
\newcommand{\eK}{e}                     %

\newcommand{\rk}{\operatorname{rank}}

\title{Lowered crossover points\\ of Anthropic's subexponential 3-SAT algorithm}
\author{Ruhollah Majdoddin\\ \normalsize\texttt{r.majdodin@gmail.com}}
\date{}
\begin{document}
\maketitle
{\renewcommand\thefootnote{}\footnotetext{\textit{Date}: October 30, 2026.}}
\begin{abstract}
The October 2026 refutations of the Exponential Time Hypothesis count satisfying assignments in time $2^{o(n)}$, but
an asymptotic bound does not say from which input size, the \emph{crossover point}, the algorithms beat the best known
ones. We compute the crossover point of the algorithm of \cite{anthropic}, and we show why it is so large. That
algorithm contracts a complete-graph tensor network on $K$ nodes by signed sums of networks, each contracted in its own
order, applied recursively at astronomically large $K$.

We make the \#3-SAT recursion of \cite{anthropic} explicit and lower its crossover points, computed from the main term of the proved bound, by proved changes (least
admissible group sizes, an explicit constant for its pieces, a contraction-tree base case, an exact count of edge bits,
a clustering lemma for the top level). The main theorem of \cite{anthropic} has two parts, a bound in the number $m$ of
clauses and a bound in the number $n$ of variables. For the bound in $m$, for \#3-SAT at clause density one ($m=n$),
the crossover point drops to about $2^{420}$ clauses from $2^{33858}$; for the bound in $n$, for general 3-SAT, our
estimate drops to $2^{8.6\cdot10^{9}}$ variables from $2^{2.2\cdot10^{12}}$. We price every lever of the construction,
to show where the distance comes from. With integer tables the recursion computes, for every $j$, the number of
assignments satisfying exactly $j$ clauses directly, in time $2^{O(m/\sqrt{\log m})}$. We observe that this refutes
the Sparse Max-3-SAT Hypothesis of Alman and Vassilevska Williams; the refutation already follows from either
refutation of the Exponential Time Hypothesis by the circuit reduction of Impagliazzo, Paturi and Zane, but neither
paper states it. At clause density one we lower the crossover point of the spectrum computation against the best
known Max-3-SAT algorithms to about $2^{1016}$ clauses from $2^{68364}$.

The distance is structural, not a matter of constants. A polylogarithmic decay of the exponent per edge bit in $K$
forces crossover points exponential in a power of $1/\beta$, where $\beta$ is the exponent per edge bit needed to win,
while a polynomial decay would make them logarithmic in $1/\beta$; the constants we improve change neither. We define a rewrite-and-merge class that contains the recursion
and all our constructions, and prove that every grid algorithm in it, the recursion as written included, has exponent
at least $1/\log_2K$ per edge bit, so through the reduction of \cite{anthropic} no such algorithm can gain for general
3-SAT below about $2^{6000}$ variables (main-term model). For two balanced homogeneous merges we prove that the exponent
is at least $(\log_2K)^{-3}$, which excludes running times $2^{n^{1-c}}$. The key step uses quantum functionals to show
that configurations on at most three cells of two merges cannot trade cost between them. The class leaves one door
open, configurations that meet three or more groups of a merge; we argue that a usable crossover point needs it, and
that opening it is not enough.

Independently of SAT, we formalise the algorithm of \cite{anthropic} as a model of computation, Signed Network Sums,
which contains rank-one methods and contraction trees as special cases. On three tables it collapses to rank-one
methods, but at $K=4$ one-level constructions contract the network in $N^{23/6}$ if $\omega=2$ and in about $N^{3.93}$
on the current (preprint) matrix-multiplication curve, below the bound $N^4$ of every rank-one method and every
contraction tree.
\end{abstract}

\begin{quote}\small\itshape
\textbf{Use of AI.} This paper was written with Anthropic's Claude models, Claude Opus 5.5 and Claude Fable 5.1,
through Claude Code, in October 2026. Under the author's direction they searched the literature, searched for and
developed the proofs, wrote and ran the scripts behind the computed values, drafted and revised the text, and checked
it against the cited sources; the research questions, the choice among the directions proposed and the decision on
every statement are the author's. The author takes full responsibility for the content.
\end{quote}
\tableofcontents
\section{Introduction}\label{sec:intro}

In October 2026 two papers refuted the Exponential Time Hypothesis \cite{anthropic,openai}. Both count the
satisfying assignments of a 3-CNF formula in time $2^{o(n)}$, and both say plainly that their algorithms are not
meant to be run. The algorithm of \cite{anthropic} reduces counting to a single algebraic problem, which we call the
\emph{clique sum}: given $K$ nodes, a string of $\lambda$ bits on every edge of the complete graph $K_K$, and an
arbitrary table at every node indexed by the strings on its edges, compute
\[
  Z \;=\; \sum_{\text{all edge strings}} \;\prod_{a} W_a\bigl(\text{strings at }a\bigr).
\]
Equivalently, $Z$ is the contraction of the complete-graph tensor network with bond dimension $N=2^{\lambda}$. The
paper contracts it by what we call \emph{Signed Network Sums}: it rewrites the product of the equality tests on a set of edges as
an exact signed sum of terms, each of which is cheap to contract in its own order, merges groups of nodes term by
term, and recurses on the merged instance. One level squares the number of nodes, and the exponent per edge bit falls
by a constant factor per level, which yields a running time $2^{O(n\log\log n/\sqrt{\log n})}$ for 3-SAT.

\paragraph{Why an explicit threshold.} A bound $2^{o(n)}$ refutes a hypothesis about the asymptotic exponent, but it
says nothing about any particular input size: it does not tell whether the algorithm beats the best known algorithm
for 3-SAT at a size one could ever write down, or only beyond every such size. The authors of \cite{anthropic} remark
that their bound in the number of clauses beats trying all assignments only for $\log m>8000$; against the best known
algorithms, and for general 3-SAT, the distance is much larger. The size from which the proved bound wins, which we
call the \emph{threshold} (the crossover point of the bound against the best known algorithm), is the quantity that measures how far the refutation is from a usable algorithm. We compute it from the main term
of the bound, which gives a lower bound on it (Definition~\ref{def:p2-threshold}). Computing it
explicitly does two things that an $O$-bound cannot. It tells which constant of the construction contributes how much
of the distance, so that one can see which improvements matter and which do not. And it turns the question whether
a cleverer algorithm of the same kind could have a small threshold into a precise question, which we answer for the
grid algorithms, a class that contains all known constructions.

\paragraph{Main results.} First (Section~\ref{sec:part2}), we make the recursion of \cite{anthropic} explicit and
lower its thresholds by changes that are each proved. The main theorem of \cite{anthropic} has two parts, a bound
$2^{O(km/\sqrt{\log m})}$ in the number $m$ of clauses and a bound $2^{O(k^2n\log\log n/\sqrt{\log n})}$ in the
number $n$ of variables, which covers general formulas through sparsification. For the bound in $m$, for \#3-SAT at
clause density one ($m=n$), the threshold drops to about $2^{420}$ clauses from $2^{33858}$
(Table~\ref{tab:p2-density}); for the bound in $n$, for general 3-SAT, our estimate drops to $2^{8.6\cdot10^{9}}$
variables from $2^{2.2\cdot10^{12}}$ (Table~\ref{tab:p2-general}).\footnote{Independently, Tulsyan~\cite{tulsyan}
evaluates the same recursion in a note of 29~October~2026, against exhaustive search, whereas we compare with the
best known algorithms. For \#3-SAT he obtains, for the counting route of the Lean development of~\cite{anthropic} as
written (with the polynomial degree of its Turing machine set to one), $2^{33858}$ clauses at density one and
$2^{552040}$ at $m=4.26n$, which agrees with column~(P) of Table~\ref{tab:p2-density} and so checks it independently;
with the free parameters of the same lemmas re-optimised, $2^{7163}$ and $2^{245009}$; and with further changes that
he labels heuristic, among them a base case by fast matrix multiplication and $x_0=10$, $2^{1161}$ and $2^{26621}$.
Our thresholds, all proved, are $2^{417}$ and $2^{10766}$, which is also below his heuristic values. Our tree base
case (Lemma~\ref{lem:p2-tree}) and $x_0=7$ (Lemma~\ref{lem:p2-c}) prove two of the changes he uses heuristically; the
exact total $U$, Lemma~C (Proposition~\ref{prop:p2-top}), the lower bound of Section~\ref{sec:part3} and the results
of Section~\ref{sec:part1} are outside the scope of his note. For Appendix~D of~\cite{openai} he reads a condition of
its text as a hypothesis of the proof, which puts its crossover beyond $2^{2^{85}}$; Section~\ref{sec:p2-levers}
explains why our estimate sets that condition aside.}
The algorithm of OpenAI~\cite{openai} has larger crossover points still: from the constants in its text we estimate
about $2^{4.4\cdot10^{10}}$ clauses at clause density one for its Appendix~D route (Section~\ref{sec:p2-levers}), and
its main-text route needs at least $2^{1.09\cdot10^{10}}$ variables, since below that size it is exhaustive
search~\cite{szeider}.\footnote{Szeider~\cite{szeider} computes where the two algorithms of~\cite{openai} stop being
exhaustive search: below $2^{1.09\cdot10^{10}}$ variables for the main-text algorithm and below $2^{131071}$ variables
and clauses for that of Appendix~D. For the algorithm of~\cite{anthropic} he evaluates its level recurrence against
exhaustive search and finds $2^{275616}$ clauses at $m=5n$, a value our model of Section~\ref{sec:p2-table}
reproduces exactly (\texttt{calc/r123\_szeider\_check.py}). We compare with the best known algorithms throughout,
which gives larger thresholds.} Its crossover point grows only polynomially in the exponent it
must save, so it lies outside the class of our lower bound (Section~\ref{sec:part3}), but its constant is $10^2$ to
$10^4$ times too large: even with improvements we could justify but did not prove, it stays near $2^{3.7\cdot10^5}$.

Second (Section~\ref{sec:p2-maxsat}), with integer tables the same recursion computes the Max-3-SAT spectrum, the
number of assignments satisfying exactly $j$ clauses for every $j$, in time $2^{O(m/\sqrt{\log m})}$. We observe that
this refutes the Sparse Max-3-SAT Hypothesis of Alman and Vassilevska Williams~\cite{avw20}. The refutation already
follows from either refutation of the Exponential Time Hypothesis by the circuit reduction of~\cite[Theorem~5]{ipz01},
but neither paper states it. At clause density one we lower the threshold of the spectrum computation against the best
known Max-3-SAT algorithms to about $2^{1016}$ clauses from $2^{68364}$ (Table~\ref{tab:p2-maxsat}).

Third (Section~\ref{sec:part3}), a lower bound explains these sizes. We define a class of rewrite-and-merge algorithms
that contains the recursion of \cite{anthropic} and all our constructions, and prove that every grid algorithm in it,
the recursion as written included, has exponent at least $1/\log_2K$ per edge bit on clique sums with $K$ nodes, that
is, time $2^{(1-o(1))S/\log_2K}$ with $S$ edge bits (Theorem~\ref{thm:p3-grid},
Corollary~\ref{cor:p3-sb-anthropic}). Through the reduction of \cite{anthropic} this means that no grid algorithm
gains for general 3-SAT below about $2^{6000}$ variables (Corollary~\ref{cor:p3-grid-threshold}, in the main-term
model of Section~\ref{sec:p3-dichotomy}), whatever its pieces and constants. So the distance is not a matter of
constants: a small threshold needs an exponent per edge bit that decays polynomially in $K$
(Proposition~\ref{prop:p3-dichotomy}), and we argue that this needs pieces whose configurations meet three or more
groups of a merge, which the class excludes and no known construction provides (Section~\ref{sec:p3-open}).

Fourth (Section~\ref{sec:part1}), independently of SAT, we study the clique sum itself at a fixed small number of
nodes, that is, the contraction of the complete-graph tensor network on $K$ nodes with bond dimension $N$. We formalise
the algorithm of \cite{anthropic} as a model of computation, Signed Network Sums (Definition~\ref{pI:def:ns}), which
contains rank-one methods and contraction trees as special cases, and ask whether it is stronger than both. On at most
three tables it is not. At $K=4$ it is, if the current matrix-multiplication curve, which rests on two preprints,
holds: one-level constructions contract the network in time $N^{23/6}$ if $\omega=2$ and in about $N^{3.93}$ on that
curve, below the bound $N^4$ that holds for every rank-one method and every contraction tree. On the peer-reviewed
curve of~\cite{vxxz} they beat the best contraction tree computed from the same bounds, but not $N^4$.

\paragraph{The measure.} For an algorithm on the clique sum we write its running time as $2^{e\cdot S}$, where
$S=\binom K2\lambda$ is the number of edge bits, and call $e$ the \emph{exponent per edge bit}; $e(K)$ is the best
exponent available at $K$ nodes. Every algorithm reads a node table, so $e(K)\ge 2/K$ (the input floor). A
method that bounds tensor rank pays the flattening rank of $K_K$ across a balanced cut, about $1/2$ per edge bit, and a
contraction tree pays the width of its widest edge, about $4/9$ per edge bit for large $K$. The reduction from 3-SAT turns a bound $e(K)<\barr$ into a running time
below that of the best known algorithms, where the \emph{bar} $\barr$ is fixed by the number of edge bits per
variable; for general 3-SAT it is of order $10^{-4}$. The three parts of the paper look at $e(K)$ at small $K$, at the
astronomically large $K$ of the recursion, and at the reason why the two ends do not meet.

\paragraph{The small end (Section~\ref{sec:part1}).} On at most three tables the model of Signed Network Sums collapses to
rank-one methods: it gives nothing beyond matrix multiplication there. At four nodes it already goes beyond rank. One
level of the construction (a $2\times2$ grid, the eight-square identity on the crossing diagonals, row and column
merges, rectangular matrix multiplication for the merged tables) contracts $K_4$ in $N^{23/6}$ if $\omega=2$, and
the pieces of Appendix~B of \cite{anthropic} on long blocks lower this further; on the current matrix-multiplication curve,
which rests on two 2026 preprints, the best value at a fixed block length is about $N^{3.93}$. Every rank-one method and every contraction
tree pays at least $N^{4}$ at $K=4$, and the best rank bound known for $K_4$ is $N^{4.638}$ \cite{cvz19}. At
$K=6,9,16$ the same one-level constructions provably beat every contraction tree at $\omega=2$; with the pieces of
Appendix~B of \cite{anthropic}, in the limit of long blocks, the margin stays a roughly constant fraction of the
tree's cost at $K=4,6,9$ while the gap to the input floor grows.

\paragraph{The large end (Section~\ref{sec:part2}).} The thresholds stated above come from changes that leave the
structure of the recursion of \cite{anthropic} intact: the least admissible group size per level, the constant $c$
of the paper's Theorem~B traced to $c=30$, a contraction-tree base case instead of brute force, the exact number of
edge bits of the clause network in place of the paper's bound, and a clustering lemma for the top level, with the
paper's bookkeeping constant $C_0$ bounded by $810$. Each change is either proved here or proved
in~\cite{anthropic} and only counted more carefully; the general-3-SAT values are estimates from the paper's closed
level formula, and some of them are conditional on a named lemma or construction. The Max-3-SAT spectrum is subexponential in $n$ only at
bounded density: the bound in $n$ for general formulas does not carry over, since it rests on a sparsification that
keeps only the satisfying assignments, and dense Max-3-SAT is untouched. We price every lever (per-level factor, balancing, slack schedule,
sparsification density, the parity-check kill) and show that the factor $\log\log n$ in the general bound is the
logarithm of the sparsification density.

\paragraph{The gap (Section~\ref{sec:part3}).} Why do the gains at four nodes not become a fast recursion? The
organising fact is a dichotomy for thresholds: a polylogarithmic decay of $e(K)$ gives thresholds exponential in a
power of $1/\barr$, a polynomial decay would give thresholds logarithmic in $1/\barr$; the constants that Section~\ref{sec:part2}
improves change neither. We then prove two bounds in the rewrite-and-merge class. Every grid algorithm (levels on
$a\times b$ grids with any pieces on the rectangles; all known constructions are of this kind, including the
recursion of \cite{anthropic} as written, for which the bound holds up to a negligible factor) has
$1/e(K)\le\log_2K$, which would be attained by ideal pieces if they exist; this is the bound behind the threshold
$2^{6000}$ above. Every algorithm with two balanced homogeneous merges has $1/e(K)\le(\log_2K)^3$, so no running time
$2^{n^{1-c}}$ arises. The proof has two new ingredients: configurations of edges that lie in at most three of the four
cells cut out by two merges cannot trade cost between the merges (we prove this with the quantum functionals of
Christandl, Vrana and Zuiddam; flattening-rank inequalities alone do not suffice, as a triangle shows), and a
counting bound for the configurations that can. Along the way we show that sharing computation across terms cannot
beat a table-size floor, and argue that table-size arguments at any fixed $K$ reduce to a question about four-leg pieces, which
is the open window at $K=4$. A four-party rank inequality conjectured by Cadney, Huber, Linden and Winter
and proved by Song, Chen, Sun and Hu \cite{chlw,scsh} removes the conjecture from our three-cut size floor at $K=4$
(the floor itself is computed on a grid of parameters). The class
leaves one door open: configurations that meet three or more groups of a merge. We say precisely why opening it is
necessary, and why it is not sufficient, for a usable threshold.

\paragraph{What is and is not claimed.} Every statement is marked as a theorem with proof, a computed value (with the
script that computes it; numerical searches give lower estimates of a maximum), an argued claim, or a conditional
statement with its condition. All thresholds are \emph{main-term} thresholds (Definition~\ref{def:p2-threshold}): lower
bounds on the size from which a bound wins, with the lower-order terms bounded separately where we can. Values on the
current matrix-multiplication curve rest on preprints; we give the
peer-reviewed curve of \cite{vxxz} beside them where it matters. No SAT running time in this paper is usable; the
point of the explicit constants is to show where the distance comes from.

\paragraph{Organisation.} Sections~\ref{sec:part1}, \ref{sec:part2} and~\ref{sec:part3} treat the
small end, the large end and the gap; each opens with its own roadmap. Section~\ref{sec:part2} does not depend on Section~\ref{sec:part1}, and
Section~\ref{sec:part3} uses Section~\ref{sec:part2} only through the bar and the main-term model. Section~\ref{sec:open} collects the open
problems. Appendices~\ref{pI:app}, \ref{app:p2} and~\ref{app:p3-size} hold the longer tables of these three sections.
The computed values come from Python scripts in the repository
\url{https://github.com/Majdoddin/anthropic-3sat-crossover}; script names such as \texttt{calc/r123\_levers.py} are
relative to its top folder, and its README lists which script computes which value.
Our appendices are not to be confused with those of \cite{anthropic}: ``Appendix~B pieces'' and ``App.~B'' always mean
the pieces of Appendix~B of \cite{anthropic}, which have per-level factor $\rho=1/\sqrt2$ and are used throughout,
not our Appendix~\ref{app:p2}.

\newcommand{\pIH}{\mathrm{H}}                       %
\newcommand{\pIcur}{\mathcal{C}_{\mathrm{cur}}}     %
\newcommand{\pIone}[1]{[\,#1\,]}                    %
\newcommand{\pIK}{K_4}

\section{The small end: Signed Network Sums at a fixed number of nodes}\label{pI:sec}\label{sec:part1}

This section studies the model of the October 2026 refutations of the Exponential Time Hypothesis at a fixed, small
number $K$ of nodes. The question is whether Signed Network Sums are a genuinely new model or rank decompositions
and contraction trees in disguise: if they gained nothing at small $K$, the recursion of \cite{anthropic} would owe
its subexponential bound to the recursion alone. They do gain, from four nodes on, and the values computed here are
also the reference points for the size arguments of Section~\ref{sec:part3}. Section~\ref{pI:sec:model} defines the problem and three models of computation for it, and
places them on a ladder of floors. Section~\ref{pI:sec:three} shows that on at most three tables the model of the
refutations gives nothing beyond rank decompositions. Section~\ref{pI:sec:four} proves that at $K=4$ it goes below
the floor $N^4$ that binds every rank-one method and every contraction tree. Section~\ref{pI:sec:large} treats
$K=6,9,16$, and Section~\ref{pI:sec:picture} collects the values per edge bit in one table.

\subsection{The model and the ladder}\label{pI:sec:model}

\paragraph{The clique sum.}
Fix $K\ge 3$ and $\lambda\ge 1$ and put $N=2^\lambda$. Every edge $\{u,v\}$ of the complete graph on $[K]$ carries a
string $z_{uv}\in\{0,1\}^\lambda$. Node $v$ holds a table $W_v$, an arbitrary function (with values in $\mathbb{Q}$,
say) of the $K-1$ strings on its edges; write $z_v=(z_{uv})_{u\ne v}$. The \emph{$(K,\lambda)$-clique sum} is
\begin{equation}\label{pI:eq:cs}
  Z \;=\; \sum_{(z_{uv})} \prod_{v\in[K]} W_v(z_v).
\end{equation}
Equivalently, every node holds its own copy of each of its edge strings and the sum carries an equality test
$\pIone{z_v(u)=z_u(v)}$ for every edge; we use this form when an identity is applied to equality tests. The instance
has $S=\binom K2\lambda$ edge bits and $K$ tables of $N^{K-1}$ entries each. For $K=3$, $Z=\operatorname{tr}(ABC)$
for three $N\times N$ matrices.

As a $K$-linear form in the tables, $Z=\langle T_K,W_1\otimes\dots\otimes W_K\rangle$, where $T_K$ is the
\emph{graph tensor} of the complete graph with bond dimension $N$: the tensor product over all edges $\{u,v\}$ of
the $N\times N$ identity matrix placed on the legs of $u$ and $v$ \cite{cvz19}.

\paragraph{Cost exponents.}
We count arithmetic operations. An algorithm has \emph{exponent $e$ per edge bit} if, for every fixed $K$ and every
$\eps>0$, it computes $Z$ in time $2^{(e+\eps)S}$ for all tables and all large $\lambda$; equivalently in time
$N^{E+\eps}$ with $E=e\binom K2$. We state values both as exponents $E$ of $N$ and as $e$; brute force has $e=1$.
For a class of algorithms, $e(K)$ denotes the infimum of the exponents of its members.

\paragraph{Three models.}
All three evaluate tensor networks; they differ in how many networks and in how they are contracted.

\begin{definition}[Contraction trees]\label{pI:def:tree}
A \emph{contraction tree} for the clique sum is a rooted binary tree whose leaves are the $K$ tables. Each inner
node $x$ with leaf set $L(x)$ stands for the tensor obtained by summing the product of the tables in $L(x)$ over all
strings on edges inside $L(x)$; it is indexed by the strings on the edges between $L(x)$ and $[K]\setminus L(x)$, and
it is computed (materialised) from the tensors of the two children, for instance by a rectangular matrix product.
The cost of the tree is at least the largest number of entries of a tensor it materialises. This is the model of
Austrin, Kaski and Kubjas \cite{akk}; the cost of a step is bounded above by fast rectangular matrix multiplication.
\end{definition}

The second model is the one in which fast matrix multiplication and the bounds on graph tensors of \cite{cvz19}
live: the clique sum is written as a sum of products of linear forms, one in each table.

\begin{definition}[Rank-one methods]\label{pI:def:rank}
A \emph{rank-one method} writes $Z=\sum_{r=1}^R\prod_{v}\ell_{r,v}(W_v)$ with linear forms $\ell_{r,v}$, i.e.\ it
uses a rank decomposition of $T_K$ with $R$ terms, and costs at least $R$. Algorithms obtained from bounds on the
border rank or the asymptotic rank of $T_K$ (with interpolation or tensor powers) are included: their cost is at
least the border rank, respectively the asymptotic rank, of the tensor they decompose.
\end{definition}

The third model is the one of the refutation, which mixes the first two: many networks, each contracted by its own
tree.

\begin{definition}[Signed Network Sums]\label{pI:def:ns}
A \emph{Signed Network Sum} algorithm writes $Z=\sum_q c_q Z_q$ with fixed scalars $c_q$, where each $Z_q$ is the value of
a tensor network whose tensors are computed from the input tables (by re-indexing, merging nodes, multiplying by
fixed tensors), and each term is contracted along its own contraction tree. Its cost is at least the total number of
entries of all tensors it materialises. This is the model of Anthropic's refutation \cite{anthropic}: the equality
tests are expanded by an exact identity into a signed sum of products of pieces, and in each term a different set
of nodes is merged.
\end{definition}

A rank-one term is a network without edges, and a contraction tree is a Signed Network Sum with one term; so both are
special cases of Signed Network Sums. Neither is defined as a part of the other. The following proposition gives the
lower bounds that each model has to beat; we call them floors, and their order is the ladder of the subsection's
title.

\begin{proposition}[The ladder]\label{pI:prop:ladder}
For every $K\ge 3$:
\begin{enumerate}[label=(\alph*)]
\item every algorithm that computes $Z$ for all tables (in particular every Signed Network Sum algorithm) has
  $e\ge 2/K$ (\emph{input floor});
\item every rank-one method has $e\ge \lfloor K^2/4\rfloor/\binom K2$, and this holds for rank, border rank and
  asymptotic rank alike (\emph{flattening floor});
\item every contraction tree has $e\ge c(K-c)/\binom K2$ with $c=\lceil K/3\rceil$ (\emph{width bound}).
\end{enumerate}
\end{proposition}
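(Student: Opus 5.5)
The three floors come from three separate arguments, and I would take them in the order (a), (b), (c).

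\emph{Input floor (a).} The plan is an adversary argument on a single table. $Z$ is linear in $W_1$, with coefficient vector $\sum_{\text{other strings}}\prod_{v\ne1}W_v$. Choose $W_2,\dots,W_K$ so that this coefficient is nonzero at every one of the $N^{K-1}$ entries of $W_1$. Taking all of them identically $1$ makes every coefficient a positive constant. An algorithm that never reads some entry of $W_1$ returns the same value when that entry is changed, so the output is wrong for one of the two inputs. Hence every correct algorithm spends at least $N^{K-1}$ operations, and in any case reads that many input entries. Since $N^{K-1}=2^{(K-1)\lambda}$ and $S=\binom K2\lambda$, this gives $e\ge (K-1)/\binom K2=2/K$. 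For Signed Network Sums the bound follows even more directly, because materialising the input tables already costs this much.

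\emph{Flattening floor (b).} Split $[K]$ into $A$ and $B$ with $|A|=\lfloor K/2\rfloor$, so the cut carries $\lfloor K^2/4\rfloor$ edges. Flatten $T_K$ across this cut, grouping the legs of the nodes in $A$ against those of the nodes in $B$. The edges inside $A$ or inside $B$ contribute identity factors within one side and can be discarded. The cut edges give the identity matrix of size $N^{\lfloor K^2/4\rfloor}$, so the flattening rank is exactly $N^{\lfloor K^2/4\rfloor}$. Flattening a rank-one term gives a rank-one matrix, so rank is bounded below by flattening rank. Flattening rank is lower semicontinuous, so the bound passes to border rank. It is also multiplicative under tensor powers, so it passes to asymptotic rank: the flattening of $T_K^{\otimes m}$ has rank $N^{m\lfloor K^2/4\rfloor}$. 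Dividing the exponent by $\binom K2$ gives the claimed $e$.

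\emph{Width bound (c).} Any rooted binary tree on $K$ leaves has a node $x$ with $K/3\le|L(x)|\le 2K/3$. To find it, walk down from the root, always into the larger child, until the leaf set first has size at most $2K/3$; that leaf set still has size at least $K/3$. The materialised tensor at $x$ has $N^{|L(x)|(K-|L(x)|)}$ entries. Over integers $s\in[\lceil K/3\rceil,\lfloor 2K/3\rfloor]$ the concave function $s(K-s)$ is smallest at the endpoints. The endpoints are symmetric under $s\mapsto K-s$, and $K-\lceil K/3\rceil$ is at most $2K/3$, so the minimum is at least $c(K-c)$ with $c=\lceil K/3\rceil$. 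Since the tree's cost is at least the size of each tensor it materialises, this gives (c).

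The main obstacle is the integer bookkeeping in (c): I must check that the walk-down node really has $|L(x)|\ge\lceil K/3\rceil$. It does, because the parent has more than $2K/3$ leaves and $x$ is its larger child, so $|L(x)|>K/3$ and integrality rounds this up. The remaining steps are routine.
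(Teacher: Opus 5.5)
Your proposal is correct and follows the paper's proof in all three parts. In (a) you use the same unread-entry argument, counting only the $N^{K-1}$ entries of $W_1$ rather than all $KN^{K-1}$, which gives the same exponent $2/K$. In (b) you use the same cut flattening, with lower semicontinuity for border rank and multiplicativity for asymptotic rank. In (c) you find the balanced node by walking into the larger child, where the paper takes a lowest node with at least $K/3$ leaves. Your justification of the endpoint step in (c) states the wrong direction; the clean finish is $\lfloor 2K/3\rfloor=K-\lceil K/3\rceil$, so $c\le s\le K-c$ and hence $s(K-s)\ge c(K-c)$.
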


\begin{proof}
(a) $Z$ is linear in each table and every entry of every table has a nonzero coefficient (set all other tables to the
all-ones table). An arithmetic computation that never reads an input entry computes a value independent of it, so it
reads all $KN^{K-1}$ entries and performs at least $KN^{K-1}/2$ operations. Per edge bit this is
$(K-1)\lambda/(\binom K2\lambda)=2/K$.

(b) For a set $A\subset[K]$ consider the flattening of $T_K$ that groups the legs of the nodes in $A$ against the
rest. Every edge inside $A$ or inside its complement contributes a factor of rank one, and every edge across
contributes an $N\times N$ identity; since rank is multiplicative under Kronecker products, the flattening has rank
$N^{|A|(K-|A|)}$, which is $N^{\lfloor K^2/4\rfloor}$ for $|A|=\lfloor K/2\rfloor$. A decomposition with $R$ rank-one
terms gives $R\ge N^{\lfloor K^2/4\rfloor}$. Border rank is bounded below by every flattening rank because matrix
rank is lower semicontinuous; asymptotic rank is, because flattening ranks are multiplicative under tensor powers.

(c) Let $x$ be a node of the tree with $|L(x)|\ge K/3$ none of whose children has this property; it exists, since the
root has it. If $x$ is a leaf then $K\le 3$ and $s:=|L(x)|=1$. Otherwise both children have fewer than $K/3$ leaves,
so $s<2K/3$. In both cases $s\ge c$ and $K-s\ge c$. The tensor at $x$ is indexed by the $s(K-s)$ strings between
$L(x)$ and its complement, and since the tables are arbitrary it has $N^{s(K-s)}$ entries. As $s\mapsto s(K-s)$ is
symmetric about $K/2$ and increasing below it, $s(K-s)\ge c(K-c)$.
\end{proof}

The floors at the values of $K$ used below are in Table~\ref{pI:tab:floors}. Anthropic computes the limit $4/9$ of
the width bound \cite[Appendix~G]{anthropic}; \cite{akk} prove the lower bound $\Omega(N^{\mathrm{bw}})$ for trees
in terms of branchwidth, of which (c) is the special case we need. The tree and rank-one floors coincide at $K=4$,
where every $2|2$ cut is crossed by four edges: both are $N^4$, $e=2/3$. No lower bound above the input floor is
known for Signed Network Sums at any fixed $K$, and Anthropic's recursion shows that $e(K)\to 0$ for Signed Network Sums as
$K\to\infty$ \cite{anthropic}. The ladder therefore reads
\[
  \tfrac2K\;\le\; e_{\mathrm{NS}}(K)\;\le\;\min\bigl(e_{\mathrm{tree}}(K),\,e_{\mathrm{rank}}(K)\bigr),
  \]
  \[ e_{\mathrm{tree}}(K)\ge \tfrac{c(K-c)}{\binom K2}\to\tfrac49,\quad
  e_{\mathrm{rank}}(K)\ge\tfrac{\lfloor K^2/4\rfloor}{\binom K2}\to\tfrac12 .
\]
Reaching the tree floor needs good rectangular matrix multiplication at every step of the tree; reaching the
flattening floor needs the asymptotic rank of $T_K$ to equal its largest flattening rank, which for $K=3$ is the
statement $\omega=2$.

\begin{table}[t]
\centering
\begin{tabular}{rcccc}
\toprule
$K$ & edge bits per $\lambda$ & input floor $2/K$ & width bound (trees) & flattening floor (rank-one)\\
\midrule
3  & 3   & $2/3$ & $2/3$ & $2/3$\\
4  & 6   & $1/2$ & $2/3$ & $2/3$\\
6  & 15  & $1/3$ & $8/15\approx 0.533$ & $3/5$\\
9  & 36  & $2/9\approx 0.222$ & $1/2$ & $5/9\approx 0.556$\\
16 & 120 & $1/8$ & $1/2$ & $8/15\approx 0.533$\\
$\infty$ & & $0$ & $4/9$ & $1/2$\\
\bottomrule
\end{tabular}
\caption{Floors per edge bit (Proposition~\ref{pI:prop:ladder}). The width bound is $c(K-c)/\binom K2$ with
$c=\lceil K/3\rceil$: $2,4,8,18,60$ edges for $K=3,4,6,9,16$.}\label{pI:tab:floors}
\end{table}

\subsection{Three tables: the model is rank-one}\label{pI:sec:three}

A crossing, the object that Anthropic's identity rewrites, is a pair of equality tests on two disjoint pairs of
nodes; it needs four nodes. On three tables there is none, and the per-term choice of contraction order is empty:
a network on three nodes is contracted by multiplying two of its tensors and pairing the result with the third. The
following theorem makes this precise. It covers every problem that is a trilinear form in three tables, such as the
$K=3$ clique sum, (min,+)-type products after the usual encodings, and pruned matrix products, in which the output is
restricted to a given support.

Let $\mathcal T(W_1,W_2,W_3)$ be a trilinear form in three tables. A \emph{three-node network term} is given by
finite index sets $I,J,L$ and linear maps $\alpha,\beta,\gamma$ that send $W_1,W_2,W_3$ to matrices
$A=\alpha(W_1)\in\mathbb{Q}^{I\times J}$, $B=\beta(W_2)\in\mathbb{Q}^{J\times L}$, $C=\gamma(W_3)\in\mathbb{Q}^{L\times I}$;
its value is $\operatorname{tr}(ABC)$. Re-indexing, splitting an index into several edges, parallel edges, fixed
factors that depend on the strings of one node, and restriction to a support are all linear maps of this kind. An
index shared by all three nodes (a hyperedge with values in a set $H$) is also covered: replace $I,J,L$ by
$I\times H$, $J\times H$, $L\times H$ and the three tensors by block-diagonal matrices, whose triple product has trace
$\sum_{h}\operatorname{tr}(A_hB_hC_h)$. The tensors of the network are images of the three input tables; no table is
factored into several nodes.

\begin{theorem}[Three-table collapse]\label{pI:thm:collapse}
Let an algorithm compute $\mathcal T=\sum_{q}c_q\operatorname{tr}(A_qB_qC_q)$ as a signed sum of three-node network
terms, where in each term the first contraction (the product of two of the three matrices) is computed by a bilinear
algorithm with $R_q$ multiplications, and the final pairing with the third matrix is arbitrary. Then
\[
  \mathcal T=\sum_q\sum_{r=1}^{R_q} c_q\, u_{q,r}(W_1)\,v_{q,r}(W_2)\,w_{q,r}(W_3)
\]
with linear forms $u,v,w$. Hence $\mathcal T$ has tensor rank at most $\sum_qR_q$, the algorithm is a rank-one
method for $\mathcal T$ with at most as many terms as it performs multiplications, and its cost is at least the
tensor rank of $\mathcal T$. For the $K=3$ clique sum the cost is at least $R(\langle N,N,N\rangle)\ge N^{\omega}$,
so $e\ge\omega/3$, and a single contraction tree attains $e=\omega/3$.
\end{theorem}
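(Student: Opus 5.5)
The plan is to expand each term separately and then collect. Fix a term $q$ and suppose the first contraction is, say, $AB$; the cases $BC$ and $CA$ follow by cyclic invariance of the trace. A bilinear algorithm for $AB$ with $R_q$ multiplications gives linear forms $f_{q,r}$ on $\mathbb{Q}^{I\times J}$, linear forms $g_{q,r}$ on $\mathbb{Q}^{J\times L}$, and matrices $M_{q,r}\in\mathbb{Q}^{I\times L}$ with
\[
AB=\sum_{r=1}^{R_q} f_{q,r}(A)\,g_{q,r}(B)\,M_{q,r}.
\]
The final pairing with $C$ may be computed in any way, but its value is $\operatorname{tr}(ABC)$, which is linear in $C$. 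So
\[
\operatorname{tr}(A_qB_qC_q)=\sum_r f_{q,r}(A_q)\,g_{q,r}(B_q)\,\operatorname{tr}(M_{q,r}C_q).
\]
Now set $u_{q,r}=f_{q,r}\circ\alpha_q$, $v_{q,r}=g_{q,r}\circ\beta_q$ and $w_{q,r}=\operatorname{tr}(M_{q,r}\gamma_q(\cdot))$. Each is a linear form in a single table, because the maps $\alpha_q,\beta_q,\gamma_q$ are linear. Summing over $q$ with the weights $c_q$ gives the displayed identity. The hyperedge case needs nothing extra, since it was already reduced to block-diagonal matrices of the same form.

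The rank consequences follow directly. The identity exhibits $\mathcal T$, viewed as a trilinear form and hence as a tensor, as a sum of $\sum_q R_q$ rank-one trilinear forms. So $\operatorname{rank}\mathcal T\le\sum_q R_q$. The algorithm performs at least $R_q$ multiplications in term $q$, so its cost is at least $\sum_q R_q\ge\operatorname{rank}\mathcal T$, and it is a rank-one method in the sense of Definition~\ref{pI:def:rank}.

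For the $K=3$ clique sum, $Z=\operatorname{tr}(W_1W_2W_3)$ with $N\times N$ matrices, up to the transposes forced by the leg order. As a tensor this is the matrix multiplication tensor $\langle N,N,N\rangle$. Its rank satisfies $R(\langle N,N,N\rangle)\ge N^{\omega}$, since $\omega$ is the infimum of $\log_N R(\langle N,N,N\rangle)$ and rank is submultiplicative under Kronecker powers. With $S=3\lambda$ this gives $e\ge\omega\lambda/(3\lambda)=\omega/3$.

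For attainment, take the tree that multiplies $W_1W_2$ with a fast algorithm in $N^{\omega+\eps}$ operations and then pairs the result with $W_3$ in $O(N^2)$ operations. This gives cost $N^{\omega+\eps}$, that is, $e=\omega/3$.

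I expect the only delicate step to be the claim that an \emph{arbitrary} final pairing cannot escape the count. The point is that the count charges only the $R_q$ multiplications of the bilinear first step. The pairing only computes a value that we know is linear in $C$, and we rewrite that value rather than the computation, so no structure of the pairing computation is needed. A related worry is that the terms may use non-bilinear, for instance division-based, algorithms for $AB$. This is excluded by the hypothesis, and Strassen's division elimination would cover it in any case.
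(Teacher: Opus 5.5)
Your proposal is correct and follows the paper's proof essentially step by step. You expand the first product by the bilinear identity and absorb $\alpha_q,\beta_q,\gamma_q$ into the linear forms. You observe that the final pairing only contributes the linear form $\operatorname{tr}(M_{q,r}\,\cdot)$ in the third table, however it is computed. For $K=3$ you use $R(\langle N,N,N\rangle)\ge N^{\omega}$ via tensor powers, and a single fast matrix product for attainment. Your aside on division-based algorithms lies outside the hypothesis, and there division elimination together with the passage from quadratic to bilinear algorithms gives the bound only up to a factor $2$, as in Remark~\ref{pI:rem:collapse}(2).
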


\begin{proof}
Fix a term and suppose, after renaming, that it first computes $AB$. A bilinear algorithm with $R_q$ multiplications
for the map $(A,B)\mapsto AB$ is an identity $AB=\sum_{r=1}^{R_q}\phi_r(A)\psi_r(B)M_r$ with linear forms
$\phi_r,\psi_r$ and fixed matrices $M_r\in\mathbb{Q}^{I\times L}$. Then
$\operatorname{tr}(ABC)=\sum_r\phi_r(A)\psi_r(B)\operatorname{tr}(M_rC)$, and with $u_{q,r}=\phi_r\circ\alpha_q$,
$v_{q,r}=\psi_r\circ\beta_q$, $w_{q,r}=\operatorname{tr}(M_r\,\gamma_q(\cdot))$ every term is a sum of $R_q$ products
of linear forms, one in each table. The final pairing contributes no further product of inputs: it is bilinear in $(AB,C)$, so the identity holds
however the pairing is computed. Summing over $q$ gives the displayed identity, which holds for all tables because the
algorithm is exact; so it is a rank decomposition of $\mathcal T$ with $\sum_q R_q$ terms. Each multiplication is an
operation, so the cost is at least $\sum_qR_q\ge R(\mathcal T)$.

For the $K=3$ clique sum, $\mathcal T$ is the matrix multiplication tensor $\langle N,N,N\rangle$ (as the trilinear form
$\operatorname{tr}(ABC)$), and $R(\langle n,n,n\rangle)\ge n^\omega$ for every $n$: otherwise taking tensor powers of a
decomposition would give $\omega<\omega$. One contraction tree, a single matrix product followed by a pairing, costs
$N^{\omega+\eps}$ for every $\eps>0$.
\end{proof}

The theorem has two hypotheses, bilinear first contractions and nodes that are images of the input tables; the
following remark says why neither loses much and what the theorem does not say.

\begin{remark}\label{pI:rem:collapse}
(1) The per-term choice of order is immaterial also in a weaker sense: the three trees of a term compute the three
products $AB$, $BC$, $CA$ of the same triple, and the trilinear form $\operatorname{tr}(ABC)$ of
$\langle n_1,n_2,n_3\rangle$ is invariant under cyclic shifts and, by transposition
($\operatorname{tr}(ABC)=\operatorname{tr}(C^{\top}B^{\top}A^{\top})$), under reversal; so the rank, and hence the
exponent, of a rectangular product does not depend on which pair is multiplied first. (2) The restriction to bilinear
first contractions loses little: every algorithm for matrix multiplication with exponent below $\omega+\eps$ that we
use comes from a rank decomposition, and in general the bilinear complexity of a bilinear map is at most twice its
multiplicative complexity \cite{bcs}. (3) The theorem is relative to the rank-one model: it says that on three
tables the model of Signed Network Sums cannot beat the best rank decomposition of the target, not what that rank is. (4) It needs the network's nodes to be images of the input tables. A table that could be
factored into several nodes would give a network on more nodes, which is a different problem.
\end{remark}

At $K=3$ the ladder collapses: input floor, flattening floor and width bound are all $2/3$, and all three models
achieve exactly $\omega/3$.

\subsection{Four nodes: below the rank floor}\label{pI:sec:four}

Four nodes are the first case in which the identity of \cite{anthropic} applies, and the first in which rank-one
methods and contraction trees both pay a floor, $N^4$, above the input floor $N^3$. This subsection shows that one
level of the construction of \cite{anthropic} goes below $N^4$: first at $\omega=2$ with the eight-square identity
(Theorem~\ref{pI:thm:k4}), then on the known matrix-multiplication curves with three further constructions
(Corollary~\ref{pI:cor:k4}), and it ends with what is open between $N^3$ and these values.

\paragraph{Setting.}
Name the nodes $a,a'$ (top row) and $b,b'$ (bottom row). The side edges are $aa'$ and $bb'$, the column edges $ab$
and $a'b'$, and the two diagonals $ab'$ and $a'b$. Write $x$ and $v$ for the copies of the string of $ab'$ held at
$a$ and at $b'$, and $y$ and $u$ for the copies of the string of $a'b$ held at $a'$ and at $b$; the other four edges
carry one string each. Then
\begin{equation}\label{pI:eq:k4}
  Z=\sum W_a(z_{aa'},z_{ab},x)\,W_{a'}(z_{aa'},z_{a'b'},y)\,W_b(z_{bb'},z_{ab},u)\,W_{b'}(z_{bb'},z_{a'b'},v)\,
  \pIone{x=v}\pIone{y=u},
\end{equation}
and the factor $\pIone{x=v}\pIone{y=u}$ is the \emph{crossing}. The three $2|2$ cuts are rows $\{a,a'\}|\{b,b'\}$,
columns $\{a,b\}|\{a',b'\}$ and diagonals; the crossing crosses the first two.

By Proposition~\ref{pI:prop:ladder} the input floor is $N^3$, and every rank-one method and every contraction tree
costs at least $N^4$. The best contraction tree merges $a$ with $a'$ by summing $z_{aa'}$, which is the product
$\langle N^2,N,N^2\rangle$, does the same for $b,b'$, and pairs two tables of $N^4$ entries: cost
$N^{2\omega(1,1/2,1)+\eps}$. If $\alpha\ge 1/2$, this is $N^{4+\eps}$.
The best rank bound known for $T_4$ is that of Christandl, Vrana and Zuiddam: exponent per edge at most
$\log_7(9/2)\approx0.7729$ for every $K\ge4$ \cite[Cor.~1.2.6]{cvz19}, i.e.\ $N^{4.638}$ at $K=4$, independent of
$\omega$.

\paragraph{The eight-square identity.}
The construction rewrites the crossing as a signed sum of terms, each of which factors cheaply across one of the two
cuts that the crossing crosses, so that each term can be contracted by merging rows or by merging columns. Cut $x,y,u,v$ into $t=\lambda/3$ blocks of three bits ($x=(x_1,\dots,x_t)$ and so on). Anthropic's form of the
eight-square identity \cite[Lemma in §2.4, eq.~(12)]{anthropic} provides a sign function
$\sigma:\{0,1\}^3\times\{0,1\}^3\to\{\pm1\}$ (the multiplication table of the octonions in a basis $e_p$ with
$e_pe_q=\sigma(p,q)e_{p\oplus q}$) with
\begin{equation}\label{pI:eq:eight}
  2\,\pIone{x_s=v_s}\pIone{y_s=u_s} = R_s+C_s,\qquad
  \begin{aligned}
  R_s&=\pIone{x_s\oplus y_s\oplus u_s\oplus v_s=0}\,\sigma(x_s,y_s)\,\sigma(v_s,u_s),\\
  C_s&=\pIone{x_s\oplus y_s\oplus u_s\oplus v_s=0}\,\sigma(x_s,u_s)\,\sigma(v_s,y_s).
  \end{aligned}
\end{equation}
Multiplying over the blocks, $2^t\pIone{x=v}\pIone{y=u}=\sum_{Q\subseteq[t]}\prod_{s\notin Q}R_s\prod_{s\in Q}C_s$.
The \emph{row choice} $R_s$ factors across the row cut through three bits:
$R_s=\sum_{k\in\{0,1\}^3}\bigl(\pIone{x_s\oplus y_s=k}\sigma(x_s,y_s)\bigr)\bigl(\pIone{u_s\oplus v_s=k}\sigma(v_s,u_s)\bigr)$,
the first factor a function of the top row's strings and the second of the bottom row's. Symmetrically the
\emph{column choice} $C_s$ factors across the column cut through three bits. Across its own wrong cut each choice
keeps six bits on each side.

\begin{theorem}[$K=4$ at $\omega=2$]\label{pI:thm:k4}
If $\omega=2$, the $(4,\lambda)$-clique sum can be computed in $N^{23/6+\eps}$ operations for every $\eps>0$. Thus
$e(4)\le 23/36\approx0.639$ for Signed Network Sums, against the floor $2/3$ of every rank-one method and every
contraction tree.
\end{theorem}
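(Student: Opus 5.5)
The plan is to apply one level of the construction of \cite{anthropic} to the $2\times2$ grid: rewrite the crossing by the eight-square identity \eqref{pI:eq:eight} on the blocks, and contract each resulting term by merging the two rows or the two columns. The rectangular products that arise are costed at $\omega=2$, where $\langle n,m,p\rangle$ costs $(nm+mp+np)^{1+o(1)}$. I expect the target to come out as a balance between the cost of the most expensive terms and their number.

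\textbf{Step 1 (terms and their shape).} Expand $2^t\pIone{x=v}\pIone{y=u}=\sum_{Q\subseteq[t]}\prod_{s\notin Q}R_s\prod_{s\in Q}C_s$. Possibly I rewrite only a fraction $\theta$ of the blocks and keep the remaining blocks as equality tests; this $\theta$ is a parameter to optimise later. Fix a term with $|Q|=f t$. For $f\le 1/2$ I merge $a$ with $a'$ and $b$ with $b'$; the case $f>1/2$ is symmetric with columns.

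The merged top table is indexed by $z_{ab}$, $z_{a'b'}$, the XOR labels $k_s$ of the row-choice blocks ($3$ bits each), and both strings $x_s,y_s$ of the column-choice blocks ($6$ bits each). So it has $N^{3+f}$ entries, and the final pairing of top against bottom costs $N^{3+f}$.

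The merge itself sums $z_{aa'}$ together with $x_s$ for $s\notin Q$, under the constraint $y_s=x_s\oplus k_s$ and with the sign $\sigma(x_s,y_s)$ attached. For each fixed value of the $k$-labels this is a rectangular product $\langle N^{1+f},N^{2-f},N^{1+f}\rangle$ once the sign is absorbed into one factor. The $k$-labels therefore act as a hyperedge. Summing the per-$k$ costs gives an exponent $c(f)$.

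\textbf{Step 2 (summing over terms).} Group the terms by $|Q|$. There are $\binom{t}{ft}\le N^{H(f)/3}$ terms of each type, and the scalars $2^{-t}$ are exact. The total exponent is
\[
\max_f\Bigl(H(f)/3+c(f)\Bigr).
\]
The target value $23/6=7/2+1/3$ is exactly the contribution of the balanced terms $f=1/2$: these have $H=1$, and their cost is $N^{3+1/2}$ provided the merge does not dominate the pairing there. I then need to check that no other $f$ exceeds $23/6$.

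\textbf{Main obstacle.} The obstacle is the unbalanced terms with $f$ near $0$. A naive per-$k$ product, as above, re-reads a table of $N^3$ entries for each of $N^{1-f}$ labels and so costs $N^{4-f}$. This is too much, although only few terms are unbalanced. I would try three fixes in this order.

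First, apply the identity only to a fraction $\theta$ of the blocks, chosen so that every term has a large enough share of blocks of the good kind for its merge.

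Second, if that is not enough, avoid the $k$-enumeration. Since $\sigma(x_s,y_s)$ depends only on the block, the map from the pair $(x_s,y_s)$ to $(x_s,k_s)$ is a bijection. The merged table can therefore be formed by a single product whose inner index is $z_{aa'}$ alone and whose outer indices carry $x$ and $y$, followed by re-indexing and summing out $x$. For small $f$ this gives cost $N^{4}$ only on a vanishing set of terms, which suggests a combined choice of row or column merge together with $\theta$.

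Third, optimise $\theta$ and the split point numerically, with a script, and then verify the maximum analytically: the exponent is piecewise linear plus $H(f)/3$, so a derivative check at the breakpoints suffices. The goal is to confirm that the maximum is $23/6$, attained at $f=1/2$. The comparison with the floor $2/3$ then follows from Proposition~\ref{pI:prop:ladder}, since $23/6$ divided by the $\binom42=6$ edges gives $e(4)\le 23/36$.
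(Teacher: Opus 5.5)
Your setup is right: the expansion, the merged table of $N^{3+f}$ entries, and the entropy count. You also locate the difficulty correctly. With the labels $k_s$ shared by both factors, a term with $|Q|=ft$ costs $N^{4-f}$, and $\max_f\bigl(H(f)/3+4-f\bigr)\approx4.057$, attained at $f=1/9$. But none of your three fixes removes this, so the proposal does not reach $23/6$.

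\emph{Fix 1 (rewrite only a fraction $\theta$ of the blocks).} A raw block crosses both the row cut and the column cut with both of its equalities. So it adds six bits to the merged table under either merge. The terms that are balanced on the rewritten blocks then have tables of $N^{4-\theta/2}$ entries, and there are about $N^{\theta/3}$ of them. Writing these tables alone costs $N^{4-\theta/6}>N^{23/6}$ for every $\theta<1$. With your per-label products, the best choice $\theta\approx0.82$ gives about $N^{3.864}$: below $N^4$, but not $23/6$.

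\emph{Fix 2 (one product with inner index $z_{aa'}$ alone).} This is $\langle N^2,N,N^2\rangle$ and writes a table of $N^4$ entries. Costs of terms add, so a single such term, or a single such table shared by all terms, already exceeds $N^{23/6+\varepsilon}$. That this happens ``only on a vanishing set of terms'' does not help.

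\emph{Fix 3 (numerical optimisation).} This is not a proof. Here it would also optimise over constructions that cannot reach $23/6$.

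The missing idea is a substitution inside the product that turns the labels into outer indices, split evenly between the two factors. Your observation that $(x_s,y_s)\mapsto(x_s,k_s)$ is a bijection is the right ingredient, but it must be used before the product, not after it. Split the row-choice blocks $[t]\setminus Q$ into halves $P_1,P_2$. Substitute $x_s=k_s\oplus y_s$ for $s\in P_1$ and $y_s=k_s\oplus x_s$ for $s\in P_2$. Then:
\begin{itemize}
\item $F=W_a\prod_{s\in P_1}\sigma(k_s\oplus y_s,y_s)$ is indexed by $(z_{ab},k_{P_1},x_Q)$ against the inner index $(z_{aa'},y_{P_1},x_{P_2})$;
\item $G=W_{a'}\prod_{s\in P_2}\sigma(x_s,k_s\oplus x_s)$ is indexed by that inner index against $(z_{a'b'},k_{P_2},y_Q)$.
\end{itemize}
Both are signed re-indexings of the input tables. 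Each label now occurs in one factor only, so $T_Q$ is a single product $\langle N^{3/2+f/2},N^{2-f},N^{3/2+f/2}\rangle$. At $\omega=2$ it costs $N^{\max(3+f,\,7/2-f/2)+\varepsilon}$.

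Putting all row-choice blocks on one side instead gives outer dimensions $2$ and $1+f$, and the cost is exactly your $N^{4-f}$. So it is the even split of the labels that matters. With the balanced split the count closes without any $\theta$. For $f\le1/3$, $H(f)/3+7/2-f/2\le1/3+7/2=23/6$. On $[1/3,1/2]$, $H(f)/3+3+f$ increases to $23/6$ at $f=1/2$.
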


\begin{proof}
Write $Z=2^{-t}\sum_QZ_Q$, where $Z_Q$ is \eqref{pI:eq:k4} with the crossing replaced by
$\prod_{s\notin Q}R_s\prod_{s\in Q}C_s$. Fix $Q$ with $|Q|=at$ and $a\le1/2$; for $a>1/2$ exchange the roles of rows
and columns, which is the mirror image (in the column arrangement the column choice is the cheap one).

\emph{Merge the rows.} Using the factorisation of $R_s$, $Z_Q=\sum T_Q\cdot B_Q'$ where the top table
\[
  T_Q\bigl(z_{ab},z_{a'b'},(k_s)_{s\notin Q},(x_s,y_s)_{s\in Q}\bigr)=\sum_{z_{aa'}}\;\sum_{\substack{x_s,y_s\,(s\notin Q)\\ x_s\oplus y_s=k_s}}
  W_aW_{a'}\prod_{s\notin Q}\sigma(x_s,y_s)
\]
has $N\cdot N\cdot 8^{t-at}\cdot64^{at}=N^{3+a}$ entries, the bottom table $B_Q$ is defined in the same way from
$W_b,W_{b'}$ with $\sigma(v_s,u_s)$ and indices $(u_s,v_s)_{s\in Q}$, and $B_Q'$ is $B_Q$ with the $64\times64$ kernel
$C_s$ applied on each block $s\in Q$. Applying the $at$ kernels one block at a time costs $64\,t\,N^{3+a}$, and the
final pairing costs $N^{3+a}$.

\emph{Build $T_Q$ by one rectangular product.} Split the row-choice blocks $[t]\setminus Q$ into halves $P_1,P_2$. For
$s\in P_1$ substitute $x_s=k_s\oplus y_s$, for $s\in P_2$ substitute $y_s=k_s\oplus x_s$. Then
$T_Q=\sum F\cdot G$, summed over the \emph{inner} index $(z_{aa'},(y_s)_{s\in P_1},(x_s)_{s\in P_2})$, where
\begin{align*}
  F&=W_a\prod_{s\in P_1}\sigma(k_s\oplus y_s,y_s)\quad \text{indexed by }\bigl(z_{ab},k_{P_1},x_Q;\ \text{inner}\bigr),\\
  G&=W_{a'}\prod_{s\in P_2}\sigma(x_s,k_s\oplus x_s)\quad \text{indexed by }\bigl(\text{inner};\ z_{a'b'},k_{P_2},y_Q\bigr).
\end{align*}
Every argument of $W_a$ and $W_{a'}$ is determined by these indices, so $F$ and $G$ are signed re-indexings of the
input tables, built in time equal to their size. In exponents of $N$, the outer dimensions are
$p=1+\tfrac{1-a}2+a=\tfrac32+\tfrac a2$ on each side and the inner dimension is $q=1+(1-a)=2-a$; $F$ and $G$ have
$N^{p+q}=N^{7/2-a/2}$ entries. If $\omega=2$, the product $\langle N^p,N^q,N^p\rangle$ costs $N^{\max(2p,p+q)+\eps}$
(cut the larger dimensions into square blocks). The bottom table is built in the same way.

\emph{Count.} The cost of term $Q$ is $N^{c(a)+\eps}\mathrm{poly}(\lambda)$ with
$c(a)=\max(3+a,\,\tfrac72-\tfrac a2)$, and there are $\binom{t}{at}\le N^{\pIH(a)/3}$ terms of type $a$, with $\pIH$ the
binary entropy. On $[1/3,1/2]$, $c(a)=3+a$ and $\pIH(a)/3+3+a$ is increasing, with value $1/3+7/2=23/6$ at $a=1/2$. On
$[0,1/3]$, $\pIH(a)/3+\tfrac72-\tfrac a2\le\tfrac13+\tfrac72=23/6$. Summing over the $t+1$ values of $at$ gives
$N^{23/6+\eps}\mathrm{poly}(\lambda)$.
\end{proof}

The octonion multiplication tensor, through which the row-choice blocks of $T_Q$ factor, plays no role here: the
substitution $x_s=k_s\oplus y_s$ moves its three bits into the inner dimension of a matrix product. At exponents
$\omega>2$ the same construction gains only if the asymptotic rank of the octonion tensor is small (below about
$10.03$ on the VXXZ curve and $12.34$ on the current curve), and the best bound we know on that rank, Alman and Li's,
is far above (about $21.94$ at $\omega=9/4$) \cite{almanli}.

\paragraph{Three further constructions.}
All three start from the same row/column structure; they are stated with their status.

\emph{Grouping counted by ranks.} Anthropic groups the terms with $|Q|=i$ into one piece $A_i$ named by extra bits
\cite[eq.~(20), Lemma~2]{anthropic}. Counted by bits this changes nothing at $K=4$: the merged table of $A_i$ is the
disjoint union of the tables $T_Q$, $|Q|=i$ (proved). Counted by ranks it helps. As $64\times64$ matrices (rows
$(x,y)$, columns $(v,u)$) one has $R_s^2=8R_s$ and $C_s=2I-R_s$, so $P=R_s/8$ is a projector of rank $8$, $P^\perp=I-P$
has rank $56$, and $A_i=\sum_{|Z|\le i}c_{|Z|,i}\,P_{[t]\setminus Z}\otimes P^\perp_Z$ with
$c_{z,i}=2^z\binom{t-z}{i-z}8^{t-i}(-6)^{i-z}$. The row rank of $A_i$ is $\sum_{z\le i}\binom tz56^z8^{t-z}$, below the
bit count. Building one table per $Z$ and contracting gives $N^{3.8179}$ at $\omega=2$, with table-size floor
$N^{3.8012}$ (computed; the identity is checked exactly at $\lambda=3,6,9,12$).

\emph{The covering hybrid.} A term $Q$ (rows merged) can be served by the table $T_D$ of any $D\supseteq Q$, in which
the blocks of $D$ keep $(x_s,y_s)$; $T_Q$ follows from $T_D$ by applying $(x_s,y_s)\mapsto x_s\oplus y_s$ with sign
$\sigma(x_s,y_s)$ on the blocks of $D\setminus Q$. Take a family of $dt$-sets covering all $at$-sets (the greedy
covering has at most $\binom t{at}\binom{dt}{at}^{-1}(1+\ln\binom{dt}{at})$ members \cite{lovasz}), compute each
$T_D$ by the substitution of Theorem~\ref{pI:thm:k4}, derive the $T_Q$ blockwise, and contract each $T_Q$ as before.
The exponent is
\[
  \max_a\min_{d\ge a}\max\Bigl[\tfrac{\pIH(a)-d\,\pIH(a/d)}{3}+\max\bigl(C_w(d),\,3+d+g\bigr),\ \tfrac{\pIH(a)}3+3+a\Bigr],
\]
where $C_w(d)$ is the cost of the rectangular product of Theorem~\ref{pI:thm:k4} with $a$ replaced by $d$ under the
matrix-multiplication curve $w$, and $g=0.0321$ bounds the intermediate arrays of the derivation. This uses neither
the octonion rank nor long blocks. At $\omega=2$ it gives $23/6$; its gain is at exponents $\omega>2$
(Table~\ref{pI:tab:k4}). The maximum over $a$ is at the grid point $a=1/2$ exactly (the objective is non-decreasing in
$a$), and every grid point in $d$ and in the bypass fraction is a valid algorithm, so the printed values are upper
bounds given the curve (computed; derivation and bypass checked exactly at $\lambda=3,6,9,12$).

\emph{Appendix~B pieces on long blocks.} Anthropic's Appendix~B gives, for each block of $m$ bits, an identity
$\sum_{s=-2h}^{2h}A_s=h!\,\pIone{x=v}\pIone{y=u}$ with $h=h_m$ the least $h$ with $p(h)\ge2^m$ ($p$ the partition
function), in which piece $A_s$ factors through $r_s=\log_2\sum_{i=0}^{2h}p(s+i)$ bits across the rows and
$c_s=\log_2\sum_ip(2h-s-i)$ bits across the columns \cite[App.~B, (54)--(55)]{anthropic}. Used as one piece over all
$\lambda$ bits it does not fit $K=4$: we know no algorithm for the resulting bilinear map below $N^4$ (argued). Used per
block, with $\lambda/m$ blocks, the product expands into $(4h+1)^{\lambda/m}$ terms, a factor $N^{\log_2(4h+1)/m}$, and
each block is fused into the rectangular product or bypassed as in Theorem~\ref{pI:thm:k4}, now at a re-indexing cost
$2^{O(m)}$ per block. As $m\to\infty$, $r_s/m\to\sqrt{2+u}$ and $c_s/m\to\sqrt{2-u}$ with $u=s/h$; the worst type is
$u=0$ and costs $2+4/(4-\sqrt2)\approx3.547$ at $\omega=2$ (closed form for pure types; mixtures of types computed,
never worse), against the table-size floor $2+\sqrt2\approx3.414$. The limit $m\to\infty$ is argued from Anthropic's
(46), not proved. At fixed $m$, with exact partition numbers, the bound (maximum cost over pure piece types plus the
entropy term; mixtures checked only in the limit) is $3.796$ for $m=64$ and $3.631$ for $m=256$ at $\omega=2$ (computed).

\paragraph{Values per matrix-multiplication curve.}
The constructions use rectangular matrix multiplication, so their exponents depend on the bound used for
$\omega(1,k,1)$. Each cost is a non-decreasing function of the curve: a better proved curve can only lower it. We give
the values for four curves (Table~\ref{pI:tab:k4}): $\omega=2$ (hypothetical); the \emph{current curve}
$\omega(1,k,1)=2$ for $k\le1/2$ and $\le1+k+1/(4k)$ for $k\ge1/2$ (so $\alpha\ge1/2$ and $\omega\le9/4$), proved in two
independent preprints by Amanov, van den Berg and Zuiddam \cite{avdbz} and by Uzna\'nski \cite{uznanski}, not
peer-reviewed; OpenAI's bounds $\omega\le9/4$, $\alpha>0.465$, $\omega(1,0.709,1)<2.092$ \cite{openaiomega}, also not
peer-reviewed; and the peer-reviewed curve of Vassilevska Williams, Xu, Xu and Zhou \cite{vxxz}, interpolated linearly
between the entries of its table (valid by convexity).

\begin{table}[t]
\centering
\small\setlength{\tabcolsep}{4pt}
\begin{tabular}{llcccc}
\toprule
curve & status & best tree & covering hybrid & App.~B ($m\to\infty$) & below $N^4$?\\
\midrule
$\omega=2$ & hypothetical & $4$ & $23/6$ & $3.547$ ($3.631$, $m=256$) & yes\\
current & two preprints & $4$ & $3.9305$ & $3.9142$ & yes\\
OpenAI & claimed, Lean-claimed & $4.0264$ & $3.9963$ & $3.9720$ & yes\\
VXXZ & peer-reviewed & $4.0860$ & $4.0790$ & $4.0802$ ($4.140$, $m=256$) & no\\
\bottomrule
\end{tabular}
\caption{Exponents of $N$ at $K=4$ (upper bounds; input floor $3$, flattening floor and width bound $4$, CVZ19's rank
bound $4.638$). At $\omega=2$ also: Theorem~\ref{pI:thm:k4} gives $23/6\approx3.833$ for the plain construction and
grouping counted by ranks gives $3.8179$. ``Best tree'' is the tree $N^{2\omega(1,1/2,1)}$ evaluated on the same curve,
an upper bound; the lower bound for trees is $N^4$ on every curve. The App.~B column is a limit $m\to\infty$
(argued) except where $m$ is given (computed).}\label{pI:tab:k4}
\end{table}

The corollary collects the values of Table~\ref{pI:tab:k4} that are below $N^4$ and says on which curve and with
which status each holds.

\begin{corollary}[Beyond rank at four nodes]\label{pI:cor:k4}
On the current curve (resting on the preprints \cite{avdbz,uznanski}) the $(4,\lambda)$-clique sum can be computed in
$N^{3.9305}$ operations by the covering hybrid, with three-bit blocks and no assumption on the octonion tensor
(value computed); if $\omega=2$, in $N^{23/6+\eps}$ (Theorem~\ref{pI:thm:k4}) and in $N^{3.631}$ with Appendix~B pieces
on $256$-bit blocks (value computed over pure piece types). In both cases this is below the flattening floor $N^4$, which binds every
rank-one method, whether it uses rank, border rank or asymptotic rank, and below the width bound $N^4$, which binds
every contraction tree. On the VXXZ curve the covering hybrid beats the tree $N^{2\omega(1,1/2,1)}=N^{4.0860}$
computed from the same bounds, but not $N^4$.
\end{corollary}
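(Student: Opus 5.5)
The corollary assembles results already established or computed, so the plan is to verify each clause separately.

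\emph{The lower bounds.} Both floors come from Proposition~\ref{pI:prop:ladder} at $K=4$. For rank-one methods, part~(b) gives $\lfloor 16/4\rfloor=4$ crossing edges over a balanced cut. The bound holds for rank, border rank and asymptotic rank, since the flattening rank $N^4$ is lower semicontinuous and multiplicative under tensor powers. For contraction trees, part~(c) gives $c(K-c)=2\cdot2=4$ with $c=\lceil 4/3\rceil=2$. Neither floor depends on the matrix-multiplication curve, so ``below $N^4$'' only requires the upper bounds to be strictly below exponent $4$.

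\emph{The $\omega=2$ clauses.} The value $N^{23/6+\eps}$ is Theorem~\ref{pI:thm:k4}, and $23/6<4$.

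For the Appendix~B value at $m=256$, I would fix $m=256$ and compute $h_m$ from exact partition numbers. From Anthropic's (54)--(55) I would take the row and column bit counts $r_s,c_s$. For each pure piece type $s$ I would use the per-block cost of the fused rectangular product or bypass, exactly as in the proof of Theorem~\ref{pI:thm:k4} but with $(r_s,c_s)$ in place of $(3,6)$. I would then add the term-count exponent $\log_2(4h+1)/m$ and the entropy of the type distribution, and maximise over types. This yields the computed value $3.631$. The $2^{O(m)}$ re-indexing overhead is a constant because $m$ is fixed. The statement is qualified as ``computed over pure piece types'', so mixtures need no argument here.

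\emph{The current curve.} I would evaluate the covering-hybrid exponent displayed above on the curve $\omega(1,k,1)\le 1+k+1/(4k)$ for $k\ge 1/2$ (and $2$ below). Each grid point in $d$ and in the bypass fraction is a valid algorithm, and the objective is non-decreasing in $a$, so the maximum sits at $a=1/2$. This makes the printed $3.9305$ a genuine upper bound. Correctness of the algorithm follows from three ingredients:
\begin{itemize}
\item the covering bound of \cite{lovasz};
\item the blockwise derivation of $T_Q$ from $T_D$, which is exact because the row choice factors through $x_s\oplus y_s$ with sign $\sigma(x_s,y_s)$;
\item the bound $g$ on the intermediate arrays.
\end{itemize}
The statement's qualification ``resting on the preprints'' carries the dependence on that curve.

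\emph{The VXXZ clause.} I would evaluate the tree cost $2\omega(1,1/2,1)$ and the hybrid on the VXXZ table, using linear interpolation, which is valid by convexity of $k\mapsto\omega(1,k,1)$. This gives $4.0860>4.0790>4$.

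\emph{Main obstacle.} The hard step is certifying the numerical values, $3.9305$ above all, as rigorous upper bounds rather than approximations. The plan is to rely on monotonicity: any chosen grid point is an algorithm, and the cost is non-decreasing in the curve, so rounding the curve upward keeps the value an upper bound. The corollary then follows because every stated exponent is strictly below $4$.
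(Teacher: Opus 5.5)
Your proposal is correct and follows the paper's route. The paper gives no separate proof: the corollary assembles the curve-independent floors of Proposition~\ref{pI:prop:ladder}(b),(c) at $K=4$ (both $N^4$) with Theorem~\ref{pI:thm:k4} and the computed entries of Table~\ref{pI:tab:k4}. Those entries are read as upper bounds for the same reasons you give: every grid point in $d$ and in the bypass fraction is a valid algorithm, the objective is non-decreasing in $a$ and in the curve, and the values at $m=256$ are taken over pure piece types only, which is the status the statement itself records.
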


With Appendix~B pieces in the limit $m\to\infty$ the current-curve value is $3.914$, the best we have, but it rests in
addition on the argued convergence in $m$, and the margin over the tree needs $m$ in the tens of thousands; at a fixed
block size the covering hybrid is the best.

\begin{remark}[What is open at $K=4$]\label{pI:rem:k4open}
Between the input floor $N^3$ and $N^{3.547}$ (at $\omega=2$) nothing is known in either direction. Every lower-bound
tool we know stops at $N^3$ for Signed Network Sums: quantum functionals of $T_4$ are at most $N^3$, and the flattening ranks
$N^4$ bound only pieces that are cheap across one fixed cut (argued). A bound above $N^3$ would be a superlinear lower
bound for an explicit degree-4 polynomial, which is out of reach. On the upper side, a laser-type analysis of tensor
powers gives nothing beyond the constructions above: inside the rank model it is bounded by the flattening floor,
and inside the model of Signed Network Sums its type-and-entropy half is what the constructions already do (argued). The gap is
decided by the per-crossing pieces: a family whose pieces cost $(2,0)$ and $(0,2)$ per bit on the row and column cuts,
the least the pigeonhole principle allows, would reach the input floor $N^3$ by table sizes alone; the known pieces
cost $\sqrt2$ per bit on both cuts in the worst type (Appendix~B), or $(1,2)$ and $(2,1)$ (eight-square), and building
their tables costs more than their size.
\end{remark}

\subsection{Six, nine and sixteen nodes}\label{pI:sec:large}

At $K=4$ the construction beats every contraction tree. This subsection asks whether such a margin survives at
larger $K$, where the recursion of \cite{anthropic} would use such levels: it sets up one-level grid
layouts, explains the two kinds of numbers that occur (certified and search values), and gives the results at
$K=9$, $16$ and, computed exactly over all profiles, $K=6$.

\paragraph{One-level grid layouts.}
For $K=ab$ place the nodes on an $a\times b$ grid. Every pair of rows and pair of columns spans a rectangle whose two
diagonals form a crossing; expand each crossing per block (eight-square, grouped by ranks, or Appendix~B pieces), and
in each term merge either all rows or all columns. The side edges inside a merged row are summed while the row's table
is built, by rectangular products with the substitution of Theorem~\ref{pI:thm:k4}. The merged instance is a clique
sum on the $a$ merged rows (or $b$ merged columns) with unequal edge weights, which is contracted by its best
contraction tree, or at $K=16$ alternatively by the $K=4$ construction with Appendix~B pieces. This is Anthropic's
grid level \cite[§2.5]{anthropic} with the recursion replaced by trees at the top. The adversary chooses the profile of
the term (the piece type on every rectangle), the algorithm the merge direction, and the exponent is
\[
  \Phi(w)=\max_{\text{profiles}}\ \min_{\text{rows, columns}}\ \max\bigl(\text{top contraction},\ \text{row tables}\bigr)
\]
under a matrix-multiplication curve $w$. Every term in $\Phi$ is non-decreasing in $w$, so $\Phi$ is non-decreasing in $w$ (proved).

Two kinds of numbers occur. A \emph{certified} value is an upper bound on $\Phi$ over all profiles, from closed-form
bounds (for instance at $K=9$, $\Phi_{\mathrm{top}}\le(5+2\bar\rho)w(1)$, with $\bar\rho$ the largest average cost per
rectangle and bit, $\bar\rho=\sqrt2$ for Appendix~B pieces) and a certified grid for the row tables. A
\emph{search} value is the largest $\Phi$ found by numerical search over profiles; it is a lower estimate of the
construction's exponent, so a margin below the tree computed from it is an upper bound on the true margin.

\paragraph{Results.}
At $K=9$ ($3\times3$) and $K=16$ ($4\times4$) the width bound of Proposition~\ref{pI:prop:ladder}(c) is $18$ and $60$
($e=1/2$), and it binds every tree on every curve. The best trees found under the curve are $18/20.25/21.344$ at $K=9$
and $60/64/65.376$ at $K=16$ for $\omega=2$ / current curve / VXXZ; these are upper bounds from the same curve.
\begin{itemize}[leftmargin=2em]
\item $\omega=2$, $K=9$: Appendix~B pieces give $\Phi\le15.725$ ($e\le0.437$) as $m\to\infty$, and $\le16.456$ at
  $m=256$; the eight-square constructions give at most $17.721$ (certified). All are below the width bound $18$.
\item $\omega=2$, $K=16$: grouping counted by ranks gives exactly $59.229$, block-free, below $60$ by $0.77$; Appendix~B
  pieces give $\le57.95$ as $m\to\infty$ and $\le59.718$ at $m=512$ (certified).
\item Current curve: at $K=9$ Appendix~B pieces give $\le17.614<18$ in the limit $m\to\infty$, but $18.019$ at
  $m=512$, so below the width bound only in the limit (argued convergence); every construction beats the best tree
  $20.25$ at fixed block size. At $K=16$ Appendix~B pieces give $\le59.95<60$ in the limit (margin $0.05$), $61.72$ at
  $m=512$; all constructions beat the best tree $64$.
\item VXXZ curve: every construction beats the best tree at both $K$; none is certified below the width bound.
\end{itemize}
At $K=16$ the certified Appendix~B range with a tree on top is $[49.94,57.95]$; a search from structured profiles
finds only the uniform profile ($49.94$, $e=0.416$), so the upper end may be loose. With the $K=4$ construction on
top instead of a tree, the search gives $\ge46.862$ ($e\ge0.390$) at $\omega=2$ and $\ge48.883$ ($e\ge0.407$) on the
current curve (lower estimates).

At $K=6$ ($2\times3$; with rows merged the top is a dot product, with columns merged a triangle) the maximum over
profiles can be computed rather than searched. A profile is the vector $u\in[-2,2]^3$ of piece types on the three
rectangles, and a rectangle of type $u$ costs $\rho=\sqrt{2+u}$ per bit with rows merged and $\kappa=\sqrt{2-u}$ with
columns merged. The cost of each direction is non-decreasing in its three per-rectangle costs, because every step of
the row tables and of the top is a maximum of sizes and of terms $\Omega_w$ that grow with them, and $\Omega_w$ is
non-decreasing in each argument when $w$ is non-decreasing with $w(k)-kw'(k)\ge1$, which holds for the three curves.
So on a box of profiles the cost is at most the minimum of the rows cost at the box's largest $\rho$ and the columns
cost at its largest $\kappa$, and branch and bound over $[-2,2]^3$ gives
\[
  \Phi\le7.360\quad(e\le0.4907)\ \text{at }\omega=2,\qquad \Phi\le8.017\quad(e\le0.5345)\ \text{on the current curve},
\]
within $0.002$ of the largest values found at single profiles ($7.357$ and $8.014$), against the trees $8$ and $9$.
A rectangle whose blocks carry a mixture of piece types costs the mean of their $(\rho,\kappa)$, a point under the
quarter circle $\rho^2+\kappa^2=4$ and hence below a pure type in both coordinates, so mixtures are covered. At
$\omega=2$, where every product with sorted dimensions $M\ge m\ge s$ costs $M+m$ (each $\omega(a,b,c)$ taken equal to
its trivial lower bound $\max(a+b,b+c,a+c)$), the construction therefore beats the width bound $8$ and with it every
contraction tree. On the current curve it beats the best tree $9$ found under the same curve, but not the width bound:
at the worst profile the cost $8.014$ is above $8$. On the VXXZ curve the bound is $8.442$ against the tree $9.397$.
These are limits $m\to\infty$, but they hold at a fixed block length as well: with the exact partition numbers of
Appendix~B at $m=128$, and the count of terms paid separately, the bounds are $7.818$ at $\omega=2$ and $8.492$ on
the current curve, still below $8$ and $9$. On the current curve the best tree at $K=6$ costs $9$ although its
largest table has $N^8$ entries: its final product is the binding step. Grouping counted by ranks gives $8.116$ at
$\omega=2$ at a single profile, already worse than the tree.

\subsection{The exponent per edge bit as a function of $K$}\label{pI:sec:picture}

Table~\ref{pI:tab:eK} puts the values side by side. Besides the input floor and the constructions it has a column for
\emph{ideal pieces}: the value of the same layout, counted by table sizes alone (no entropy, no cost of building
tables, each term contracted by its best tree), if every crossing had a two-piece family with costs exactly $(2,0)$ and
$(0,2)$ per bit on the row and column cuts. Such pieces are allowed by every rank constraint we know, including the
four-party rank inequality of Song, Chen, Sun and Hu \cite{scsh}; whether they exist is the open question of
Remark~\ref{pI:rem:k4open}. The column shows what one-level layouts could give at best with pair pieces.

\begin{table}[t]
\centering
\small
\setlength{\tabcolsep}{3pt}
\begin{tabular}{rcccccc}
\toprule
$K$ & input $2/K$ & ideal pieces & App.~B, $\omega=2$ & tree, $\omega=2$ & App.~B, current & tree, current\\
\midrule
4  & $0.5$   & $0.5$ (exact)               & $0.591$ (lim)          & $0.667$ & $0.652$ (lim)            & $0.667$\\
6  & $0.333$ & $0.444$--$0.449$ (c)        & $0.4904$--$0.4907$ (c)  & $0.533$ & $0.5342$--$0.5345$ (c)   & $0.6$\\
9  & $0.222$ & $\ge0.386$ (s)              & $0.434$--$0.437$ (c, lim) & $0.5$ & $\le0.490$ (c, lim)    & $0.5625$\\
16 & $0.125$ & not computed                & $[0.416,0.483]$ (c, lim); & $0.5$ & $\le0.4996$ (c, lim);  & $0.533$\\
   &         &                             & $\ge0.390$ (s, $K_4$ top) &       & $\ge0.407$ (s, $K_4$ top) & \\
\bottomrule
\end{tabular}
\caption{Exponent per edge bit $e=E/\binom K2$ of one-level layouts ($2\times2$, $2\times3$, $3\times3$, $4\times4$
grids). (c) certified upper bound or certified interval over all term profiles; (s) numerical search, a lower
estimate of the construction's exponent; (lim) in the limit of long blocks $m\to\infty$, convergence argued. ``Tree''
is the best contraction tree found under the curve; at $\omega=2$ it equals the width bound. The current curve rests
on two preprints. At $K=4$ the App.~B entries are $3.547/6$ and $3.914/6$ from the accounting of
Section~\ref{pI:sec:four} (worst type closed form, mixtures computed).}\label{pI:tab:eK}
\end{table}

\paragraph{The picture.}
Two trends are visible, both measured on one level of grid layouts. First, the constructions stay a roughly constant
fraction below the tree: with Appendix~B pieces at $\omega=2$ the ratio to the tree is about $0.89$ at $K=4$, $0.92$
at $K=6$ and $0.87$ at $K=9$, and with ideal pieces it is $0.75$, $0.83$--$0.84$ and at least $0.77$.
Second, they move away from the input floor: the ideal-piece value divided by $2/K$ is $1.00$, $1.33$ and at least
$1.74$ at $K=4,6,9$. So a single level with pair pieces does not approach the input floor as $K$ grows; even with
ideal pieces it is held near the tree's order of magnitude by the cost of the merged top instance, which is a generic
clique sum solved by a tree. Section~\ref{sec:part3} makes the corresponding statements for the rewrite-and-merge class precise. At
fixed $K$, by contrast, no size argument separates the model from the input floor: we argue in
Section~\ref{sec:p3-size} that the per-block piece question at any $K$ reduces to the four-node question of
Remark~\ref{pI:rem:k4open}.

\newcommand{\pIImstar}{m^{*}}
\newcommand{\pIInstar}{n^{*}}
\newcommand{\pIIbar}{\bar e}
\newcommand{\pIIlog}{\log_2}

\section{The large end: the explicit cost of the subexponential 3-SAT recursion}\label{sec:part2}

The algorithm of~\cite{anthropic} counts the satisfying assignments of a $k$-CNF with $m$ clauses in time
$2^{O(km/\sqrt{\log m})}$, and of a $k$-CNF on $n$ variables in time $2^{O(k^2 n\log\log n/\sqrt{\log n})}$. Both
bounds come from one recursion on clique sums whose number of nodes squares from level to level. The paper states
that its bound in $m$ beats trying all assignments only for $\log m > 8000$ at clause density one ($m=n$)
(\cite[end of App.~A]{anthropic}). In this section we make the recursion explicit: we fix every constant of the main
term, we say which constants remain in the lower-order terms, and we compute from which input size on the proved
bound is below the best known algorithm for the same problem. We then lower these sizes by a sequence of changes,
each of which is either proved here, or proved in~\cite{anthropic} and only counted more carefully, or marked as
conditional with the condition named. The changes do not alter the scale: the smallest main-term threshold we can prove is
about $2^{420}$ clauses at clause density one, and for general 3-SAT the thresholds are of the form
$2^{10^{9}}$ to $2^{10^{10}}$ variables. No running time in this section is usable; the point of the explicit
constants is to show where the distance comes from and which named obstacle each further improvement must remove.

Section~\ref{sec:p2-setup} recalls the recursion. Section~\ref{sec:p2-explicit} says what an explicit threshold is.
Section~\ref{sec:p2-table} gives the threshold tables. Sections~\ref{sec:p2-schedule}--\ref{sec:p2-clustering} prove
the changes behind them: the least admissible group size, the constant $c$ of the pieces, the base case by a
contraction tree and a clustering lemma for the top level; Section~\ref{sec:p2-maxsat} applies the recursion to
Max-3-SAT. Section~\ref{sec:p2-levers} prices every lever, including
the ones we cannot pull, and Section~\ref{sec:p2-general} treats general 3-SAT, sparsification as a parameter and the
factor $\log\log n$.

\subsection{The recursion of~\cite{anthropic}}\label{sec:p2-setup}

This subsection recalls only what the thresholds depend on: the networks and pieces of \cite{anthropic}, the
recursion~\eqref{eq:p2-rec} with its three per-level costs, and the two top levels, for bounded density and for
general formulas, which decide how large the instance must be for a given number of nodes.

We use the notions of~\cite[\S3]{anthropic}. A \emph{network} on a node set $J$ has, for each node $a$, a set of
settings $S_a$ (bit strings of length at most $P$) and a table $W_a\colon S_a\to\mathbb Z$, and for each pair of nodes
$a\neq b$ a factor $\Psi_{a,b}$ that factors through $c_{a,b}$ bits, i.e.\ is a sum over $\theta\in\{0,1\}^{c_{a,b}}$ of
a product of a function of $z_a$ and a function of $z_b$. Its value is the sum over all settings of the product of
all tables and factors, its \emph{total} is $U=\sum_{a\neq b}c_{a,b}$ (every shared bit counted from both ends), and
$\sum_{b}c_{a,b}\le P$ for every $a$ (\cite[Definition~2]{anthropic}). The \emph{$(K,\lambda)$-clique} is the network
on $[K]$ in which every pair of nodes shares a string of $\lambda$ bits and the tables are arbitrary functions of the
$K-1$ strings at a node (\cite[Definition~3]{anthropic}); trying all strings evaluates it with
$2^{\lambda K(K-1)/2}$ terms. We measure the cost of an algorithm for $(K,\lambda)$-cliques by the exponent $e$ in
$2^{eK^2\lambda}$, so brute force has $e=1/2$; in Section~\ref{sec:part1} the same cost is measured per edge bit, which is $2e$.

\paragraph{Pieces and one level.}
A \emph{family of pieces with parameters $(\rho,c)$}~\cite[Definition~7]{anthropic} is, for every $\lambda\ge1$, a
decomposition $\sum_{i\in I}A_i=d\,[x=v][y=u]$ of the crossing of two equalities on $\lambda$-bit strings into
integer functions $A_i$, each of which factors through $c^{\rm row}_i$ bits along rows and $c^{\rm col}_i$ bits
along columns with $c^{\rm row}_i+c^{\rm col}_i\le4\rho\lambda+c\log_2(2\lambda)$, with $|I|\le(2\lambda)^c$, all
numbers below $2^{(2\lambda)^c}$, and one algorithm computing every value within $2^{c\lambda}$ operations.
\cite[Theorem~B]{anthropic} gives pieces with $\rho=1/\sqrt2$ and ``an absolute constant $c$'' (we make $c$ explicit
in Section~\ref{sec:p2-c}).

One level of the recursion takes a $(g^2,\lambda)$-clique, arranges its nodes in a $g\times g$ grid, writes every
rectangle's crossing as a sum of pieces, and for every choice of pieces evaluates the cheaper of the row network
(rows merged) and the column network (columns merged); each is a network on $g$ nodes. Lemma~8 of~\cite{anthropic}
then groups the $g$ nodes into $K$ groups of $g/K$ nodes each and solves the resulting $(K,L)$-clique by the level
below. In its case~(ii), a uniformly random balanced grouping and Hoeffding's inequality for sampling without
replacement show that a good grouping exists provided
\begin{equation}\label{eq:p2-hoeff}
  K\mid g,\qquad c_{a,b}\le P/g\ \ (a\neq b),\qquad 2\zeta^2 g\ \ge\ K\ln(2gK),
\end{equation}
where $\zeta$ is the slack: the clique below has $K^2L\le U+2\zeta gP+K^2$ bits. With $g=sK$ and $K'=g^2$ the paper's
recursion~\cite[(40)]{anthropic} reads
\begin{equation}\label{eq:p2-rec}
  K_{\ell+1}=g_\ell^2=(s_\ell K_\ell)^2,\qquad e_{\ell+1}=(\rho+3\zeta_\ell)\,e_\ell+3/K_\ell,
\end{equation}
with the paper's choices $K_0=2^{x_0}$, $x_0\ge18$, $\zeta_\ell=2^{-\ell-5}$, $s_\ell=2^{3\ell+x_0}$ and the base case
$e_0=1/2$ (brute force). The $\rho$ is the per-level factor, $3\zeta_\ell$ the price of balancing, and $3/K_\ell$
the cost of the merged groups' own bits (\cite[\S2.6]{anthropic}). Since $\log K$ roughly doubles per level while $e$
falls by $\rho$, $e_\ell=\Theta((\log K_\ell)^{-\log_2(1/\rho)})$, which is $(\log K)^{-1/2}$ for $\rho=1/\sqrt2$.

\paragraph{The top level, bounded density.}
For a $k$-CNF with $m$ clauses, the network of~\cite[\S6]{anthropic} has as nodes $g$ clusters of clauses; a setting
of a cluster is a proposed value at each occurrence of a variable in it, and two clusters share one bit for each
\emph{equality test}, a pair of consecutive occurrences of one variable in a fixed order of its occurrences. So
$U<2km$, and~\cite[Theorem~A(c)]{anthropic} evaluates it by one call of Lemma~8 in case~(i) (Chebyshev balancing,
$g=3K^8$), followed by the clique algorithm at level $\ell$ with $K_\ell\le m^{1/9}$. The logarithm of the number of
operations is~\cite[(45)]{anthropic}
\begin{equation}\label{eq:p2-top}
  2e_\ell km+O\!\left(km/K+K^8\log K+K^2\log(km)\right),
\end{equation}
where the $O$ hides the constants $C_0$ (of Theorem~A(b)) and $C_1$ (of Lemma~8), which depend on $c$. We write
$q$ for the exponent with $m\ge K^q$ that the top level needs: $q=9$ in~\cite{anthropic}.

\paragraph{The top level, general formulas.}
For formulas of unbounded density~\cite[Theorem~D]{anthropic} first sparsifies into at most $2^{\eps n}$ formulas in
which every variable occurs in at most $\Delta(\eps)$ clauses, and then evaluates for each a network whose nodes hold
variables. The condition that no clause is violated is expanded by inclusion--exclusion over sets of marked clauses,
cut off at $\bar n\approx n/\log\Delta$ marks; assignments that violate more clauses are killed by $O(n)$ sparse
parity checks, each found in an explicit family in which every large set of clauses has odd intersection with at
least a quarter of the members. The total is $U=O(kn\log\Delta)$, the top level uses case~(i) of Lemma~8 with
$K\le n^{1/17}$ ($q=17$), and the time is $2^{O(ekn\log\Delta)}$ per sparsified formula
(\cite[(68)--(70)]{anthropic}).

\subsection{What an explicit threshold is}\label{sec:p2-explicit}

The bounds of \cite{anthropic} are of the form $2^{O(km/\sqrt{\log m})}$, and the question from which size on such a
bound beats a known algorithm has an answer only once three things are fixed: the known algorithm we compare with,
the part of the bound whose constants we make numeric, and the size measure. The following definition fixes them.

\begin{definition}[bar, main term, threshold]\label{def:p2-threshold}
A \emph{bar} for a class of formulas is the exponent of the best known algorithm for it, written per unit of the
size measure. For bounded density we take the class of 3-CNFs with $m=\Delta n$ clauses and the bar
$\min\{0.5283\,m,\ 0.711\,n\}$, the better of the clause-parameterised bound $O^*(1.4423^m)$ of Peng, Sheng and
Xiao~\cite{psx25} and the variable-parameterised bound $O^*(1.637^n)$ for \#3-SAT~\cite{ijcai26sharp}; since the main
term of~\eqref{eq:p2-top} is $2ekm=6em$, the bar on the clique exponent is
\begin{equation}\label{eq:p2-bar}
  \pIIbar(\Delta)=\min\{0.5283,\ 0.711/\Delta\}/6,
\end{equation}
which is $0.0881$, $0.0592$, $0.0395$, $0.0278$ for $\Delta=1,2,3,4.26$. %
The factor $6=2k$ comes from the paper's bound $U<2km$. The network's total is exact: a variable with $d$ occurrences
has $d-1$ tests and each test puts one bit at each of its two ends, so $U=2(km-n)$ when every variable occurs
(a variable that does not occur contributes a factor $2$ and can be removed), and by the definition of $L$ in
\cite[Lemma~8]{anthropic} the main term is $e\,U$ up to the lower-order terms already in~\eqref{eq:p2-top}. With
this count the bar is
\begin{equation}\label{eq:p2-bar-exact}
  \pIIbar_U(\Delta)=\min\{0.5283,\ 0.711/\Delta\}/(6-2/\Delta),
\end{equation}
which is $0.1321$, $0.0711$, $0.0444$, $0.0302$ for $\Delta=1,2,3,4.26$: a factor $3/2$ at density one, falling to
$1.08$ at density $4.26$. %
For general 3-SAT we take the deterministic decision bound $2^{0.4092n}$ of Liu~\cite{liu18}. %
The \emph{main term} of a bound is its leading term ($2e_\ell km$ in~\eqref{eq:p2-top}, $\eps n+e_\ell U$ for
Theorem~D). The \emph{main-term threshold} $\pIImstar$ (or $\pIInstar$) is the least size from which the main term,
with the top level chosen as the construction prescribes, is below the bar; we report $\log_2\pIImstar$. It is the crossover point of the main term against the best known
algorithm, and a lower bound on the crossover point of the proved bound itself.
\end{definition}

The bar for bounded density rests on two published results: the clause bound of~\cite{psx25} and Kutzkov's
deterministic $O^*(1.637^n)$ for \#3-SAT~\cite[Theorem~3]{ijcai26sharp}, which improves his earlier
$O(1.6423^n)$~\cite{kutzkov07}, i.e.\ $0.7157\,n$. %
Our algorithm counts, so comparing it with a decision algorithm in the general case is the stronger comparison;
against the counting bar $0.7157$ or the randomized decision bar $0.3865$~\cite{jc26} the general thresholds move by
one or two levels, i.e.\ by a factor of at most about four. %

\paragraph{Why main-term thresholds, and what they omit.}
A main-term threshold is a lower bound on the size from which the proved bound wins, since the omitted terms are
positive. It is the right quantity for two reasons. First, at the sizes in question the omitted terms are smaller than
the main term by a factor $m^{-\Omega(1)}$: in~\eqref{eq:p2-top} with $K\le m^{1/9}$ every term after $2ekm$ is
$O(km^{1-1/90})$~\cite[after (45)]{anthropic}, and with $\log_2 m$ in the thousands such terms are negligible as soon
as their constants are numbers. Second, the constants of the omitted terms are not numbers in~\cite{anthropic}: $C_0$,
$C_1$, $C_4$, $C_5$ are ``constants that depend only on $c$'', and $c$ itself is ``large enough''. So a fully explicit
theorem needs (a) every main-term constant numeric, and (b) every lower-order constant bounded by a number. This section
does (a) completely. For (b) it makes $c$ numeric (Section~\ref{sec:p2-c}), which also removes the only place where $c$
touches the main term, and bounds $C_0$ by $810$ (Proposition~\ref{prop:p2-c0}), far below the $2^{51}$ that the thresholds need
(Proposition~\ref{prop:p2-top}).

\paragraph{Status labels.} In the tables below, \emph{proved} means that the main term rests only on theorems proved
here or in~\cite{anthropic}; \emph{computed} values come from a numerical optimisation over admissible schedules and
are exact for the model (each reported schedule can be checked level by level, Appendix~\ref{app:p2-schedules});
\emph{estimate} marks values obtained from a cruder level formula; \emph{conditional} names the missing lemma or
construction.

\subsection{The threshold tables}\label{sec:p2-table}

Table~\ref{tab:p2-density} is the headline of this section. Each column adds one change to the previous one; the changes
are proved in the sections cited in the column heads.

\begin{table}[ht]
\centering\small
\setlength{\tabcolsep}{4.5pt}
\begin{tabular}{@{}l r r r r r r r@{}}
\toprule
 & (P) paper & (G) group & (X) $x_0=7$ & (T) tree & (U) exact & (C) Lemma C & (U+C) \\
density $\Delta$ & as written & size & ($c=30$) & base case & total $U$ & top, brute base & all \\
\midrule
1    & 33858  & 28761  & 22264  & 5396  & 1538  & 1685  & \textbf{417} \\
2    & 68364  & 59309  & 46597  & 11684 & 6610  & 6059  & \textbf{1551} \\
3    & 275616 & 122506 & 98247  & 25161 & 23117 & 19775 & \textbf{5227} \\
4.26 & 552042 & 246356 & 197966 & 51204 & 47986 & 41546 & \textbf{10766} \\
\midrule
status & proved & proved & proved & proved & proved & proved & proved \\
\bottomrule
\end{tabular}
\caption{Main-term thresholds $\log_2\pIImstar$ for counting satisfying assignments of 3-CNFs with $m=\Delta n$
clauses, against the bar~\eqref{eq:p2-bar}; $\rho=1/\sqrt2$ (pieces of~\cite[App.~B]{anthropic}) throughout.
(P): recursion~\eqref{eq:p2-rec} with the paper's schedule, $x_0=18$, top level $q=9$. (G): the same with the least
group size $s_\ell$ admitted by~\eqref{eq:p2-hoeff} and the slack schedule optimised
(Proposition~\ref{prop:p2-schedule}). (X): also $x_0$ optimised, which gives $7$; admissible because $c=30$
(Lemma~\ref{lem:p2-c}). (T): also the base case by a contraction tree, $e_0=\omega(1,\tfrac12,1)/8=0.2554$, which makes
$x_0=8$ optimal (Lemma~\ref{lem:p2-tree}); top level still $q=9$. (U): also the exact total $U=2(km-n)$, i.e.\ the
bar~\eqref{eq:p2-bar-exact}. (C): columns (X) and (U) with the top level by Lemma~C
(Proposition~\ref{prop:p2-top}; $q=2.10$ at $\Delta=1$), brute-force base case; it needs $C_0\le2^{61}$ and
$C_0\le810$ by Proposition~\ref{prop:p2-c0}. (U+C): all changes, tree base case, exact total and Lemma~C ($q=2.41$ at
$\Delta=1$, two levels).}
\label{tab:p2-density}
\end{table}

The first column reproduces the paper's own remark: with brute force as the competitor at $\Delta=1$ (bar
$1/6$) the same model gives $\log_2\pIImstar\ge8046$ at level~4, against the paper's ``$\log m>8000$''.
Column (U+C) is the best value; the factor from (P) to (U+C) is $81$ at density~1 and $51$ at density~4.26.
Without Lemma~C, column (U) is the best value; Lemma~C divides it by $3.7$ at density one and by $4.5$ at density $4.26$. The exact count
matters most at density one, where the larger bar saves two levels ($5396\to1538$), and little at density $4.26$.
The order of the columns follows what each change needs: (G), (X) and (U) only re-read the paper's proof, with the
bound on $C_0$ of Proposition~\ref{prop:p2-c0} a count of its bookkeeping, (T) adds a
contraction tree with fast rectangular matrix multiplication, and (C) adds a new lemma.

For general 3-SAT the same changes act on a much smaller target exponent, because the parity-check apparatus of
Theorem~D costs about $2500$ bits per variable where the clause network at density one costs about $4$
(Section~\ref{sec:p2-general}). Table~\ref{tab:p2-general} lists the thresholds.

\begin{table}[ht]
\centering\small
\begin{tabular}{@{}l p{6.4cm} r p{3.3cm}@{}}
\toprule
row & accounting and levers & $\log_2\pIInstar$ & status \\
\midrule
G0 & paper's inequalities, $x_0=18$, $q=17$ & $2.2\cdot10^{12}$ & proved (main term) \\
G1 & same construction counted sharply, Calabro's $\Delta(\eps)$, $x_0=18$, $q=17$ & $3.4\cdot10^{10}$ & estimate \\
G2 & G1 with the tree base case & $8.6\cdot10^{9}$ & estimate \\
G3 & G2 with $x_0=7$ & $3.9\cdot10^{9}$ & estimate, see text \\
G4 & G1 with $x_0=7$, top level $q=2.2$ & $2.0\cdot10^{9}$ & cond.: $n$-network clustering \\
G5 & G1 + optimised $\eps$ + kill rate $1/2$, $x_0=18$, $q=17$ & $8.6\cdot10^{9}$ & cond.: check family \\
G6 & G5 with the tree base case & $2.1\cdot10^{9}$ & cond.: check family \\
G7 & G5 with $x_0=7$, $q=2.2$ & $5.0\cdot10^{8}$ & cond.: both \\
\bottomrule
\end{tabular}
\caption{Main-term thresholds $\log_2\pIInstar$ for general 3-SAT (counting algorithm, deterministic decision bar
$0.4092$), $\rho=1/\sqrt2$. ``Counted sharply'' replaces the inequalities~\cite[(59), (61), (68)]{anthropic} by the
actual numbers of marks and checks of the same construction. ``Kill rate $1/2$'' is a parity-check family that gives
odd intersection with probability near $1/2$ instead of $1/4$; see the text for its status. All rows use the level
formula of~\cite[Theorem~A(a)]{anthropic}, $\log K_\ell=(3x_0+6)2^\ell-6\ell-2x_0-6$ and $e_\ell\le0.85\rho^\ell$; since
the latter is an upper bound, a level count may be one level too large.}
\label{tab:p2-general}
\end{table}

Three remarks fix the status of Table~\ref{tab:p2-general}. (i) All rows are main-term estimates from the closed level
formula of~\cite[Theorem~A(a)]{anthropic}, not from an admissible schedule. At $x_0=18$ that formula is the paper's
schedule. At $x_0=7$ it is not admissible (the paper's schedule needs $x_0\ge18$ for~\eqref{eq:p2-hoeff}), and it
underestimates $\log K$: at density one and six levels it gives $27\cdot2^6=1728$ where the admissible optimum is
$2474$. So the rows G3, G4 and G7 are optimistic, by a factor that we have not computed.
(ii) Row G4 needs a clustering lemma for the $n$-network of Theorem~D. Lemma~\ref{lem:p2-clustering} is proved for
the clause network of~\cite[\S6]{anthropic}; the $n$-network has variables as nodes and clause and parity-check marks
as shared bits, and whether its marks are shared pairwise along chains, as the occurrences of \S6 are, has not been
checked. After sparsification its test multigraph has degree $O(k\Delta)$ instead of $2k$, which would cost only
$O(\log\Delta)$ additional bits in $\log_2\pIInstar$ if the lemma holds. (iii) Rows G5--G7 need a parity-check family
with kill rate near $1/2$. Such a family can be built by pairwise-independent subsampling at rate $\mu/v$, which gives
odd intersection with probability at least $(1-1/\mu)/2$ by Chebyshev's inequality, at the price of $\log_2\mu$
additional bits per mark; the rows have not been recomputed with these bits, so they are lower estimates for that
construction.

\subsection{The least admissible group size and a free slack schedule}\label{sec:p2-schedule}

The paper's schedule $s_\ell=2^{3\ell+x_0}$, $\zeta_\ell=2^{-\ell-5}$, $x_0\ge18$ is one convenient way to satisfy the
hypotheses of Lemma~8 at every level. The proof of Theorem~A(b) uses it only through four inequalities, so any
schedule that satisfies them gives the same theorem.

\begin{proposition}[Theorem~A(b) for an admissible schedule]\label{prop:p2-schedule}
Suppose that there are pieces with parameters $(\rho,c)$. Let $x_0\ge1$, $K_0=2^{x_0}$, $e_0=1/2$, and for $\ell\ge0$ let
$s_\ell\ge1$ be integers and $\zeta_\ell\in(0,1]$ reals; define $g_\ell=s_\ell K_\ell$ and $K_{\ell+1}$,
$e_{\ell+1}$ by~\eqref{eq:p2-rec}. Call the schedule \emph{admissible} if for every $\ell$
\begin{enumerate}[label=(S\arabic*)]
\item $2\zeta_\ell^2g_\ell\ge K_\ell\ln(2g_\ell K_\ell)$;
\item $1/g_\ell\le\zeta_\ell$;
\item $(c+5)K_\ell^2\le g_\ell^4/2$;
\item $e_\ell\le1$.
\end{enumerate}
Then for an admissible schedule the conclusion of~\cite[Theorem~A(b)]{anthropic} holds: one deterministic algorithm,
given $\ell$, $\lambda,B\ge1$ and query access to tables with $|W_w|<2^B$, outputs the value of the
$(K_\ell,\lambda)$-clique within $2^{e_\ell K_\ell^2\lambda}\,(K_\ell(\lambda+B))^{C_0K_\ell^2}$ operations, with
$C_0=2(c+C_4+1)+C_5$ as in~\cite{anthropic}.
\end{proposition}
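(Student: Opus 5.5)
The plan is to re-run the induction on $\ell$ in the proof of \cite[Theorem~A(b)]{anthropic}, with the paper's concrete schedule $s_\ell=2^{3\ell+x_0}$, $\zeta_\ell=2^{-\ell-5}$ replaced by an arbitrary admissible one. At each place where that proof uses a property of the concrete schedule, we check that the property is exactly one of (S1)--(S4). The base case $\ell=0$ is brute force on the $(K_0,\lambda)$-clique. Trying all $2^{\lambda K_0(K_0-1)/2}\le2^{e_0K_0^2\lambda}$ settings, with one query and $K_0$ multiplications of numbers of $O(K_0(\lambda+B))$ bits each, fits within $(K_0(\lambda+B))^{C_0K_0^2}$. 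This uses nothing about the schedule except $x_0\ge1$, so that $K_0\ge2$.

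For the inductive step, fix $\ell$ and a $(g_\ell^2,\lambda)$-clique with $g_\ell=s_\ell K_\ell$, and follow the level of \S\ref{sec:p2-setup}. First arrange the nodes in a $g_\ell\times g_\ell$ grid and expand every rectangle's crossing into pieces with parameters $(\rho,c)$. For each choice of pieces, form the cheaper of the row and column networks. Each is a network on $g_\ell$ nodes with $U\le 2\rho g_\ell^2\lambda\cdot(\text{rectangles per node})$, up to the $c\log_2(2\lambda)$ terms, and with $P\le g_\ell\lambda$ bits per node. Then apply \cite[Lemma~8]{anthropic}, case~(ii), with $K=K_\ell$ and $g=g_\ell$. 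The hypotheses \eqref{eq:p2-hoeff} are met as follows. $K_\ell\mid g_\ell$ holds by the definition $g_\ell=s_\ell K_\ell$ with $s_\ell$ an integer. The Hoeffding condition is literally (S1). The condition $c_{a,b}\le P/g$ is the per-pair bound of a grid network, and it is independent of the schedule. Lemma~8 yields a $(K_\ell,L)$-clique with $K_\ell^2L\le U+2\zeta_\ell g_\ell P+K_\ell^2$. Here (S2) is what the paper uses to absorb the additive $K_\ell^2$ and the rounding of $L$ into the slack term, since $K_\ell^2\le\zeta_\ell g_\ell P$ once $1/g_\ell\le\zeta_\ell$. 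This gives $L\le\bigl((\rho+3\zeta_\ell)g_\ell^2\lambda/K_\ell^2\bigr)(1+o)$ plus the merged groups' own bits. Invoking the induction hypothesis at level $\ell$ with $e_\ell\le1$ from (S4) turns this into the exponent $(\rho+3\zeta_\ell)e_\ell+3/K_\ell=e_{\ell+1}$ in \eqref{eq:p2-rec}. Hypothesis (S4) is needed because the paper bounds some cross terms such as $e_\ell\cdot(\text{own bits})$ by the own bits alone.

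The remaining work is the polynomial factor. The number of piece choices is at most $(2\lambda)^{c\,g_\ell^4}$, since there are $\binom{g_\ell}2^2\le g_\ell^4/4$ rectangles. The cost of computing the piece values is $2^{c\lambda}$ each, and the bit sizes grow to $B'\le B+g_\ell^4(2\lambda)^c+\dots$. All of this must be absorbed into $(K_{\ell+1}(\lambda+B))^{C_0K_{\ell+1}^2}$ together with the level-$\ell$ factor $(K_\ell(L+B'))^{C_0K_\ell^2}$. This is the step I expect to be the main obstacle, because it is where the paper uses the specific growth of its schedule. I would try first to show that the only inequality needed there is $(c+5)K_\ell^2\le g_\ell^4/2=K_{\ell+1}^2/2$, which is (S3). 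It lets the $C_0K_\ell^2$ exponent of the inner call, plus the $c$ and $+5$ overheads from enumeration and bit growth, fit into half of $C_0K_{\ell+1}^2$, leaving the other half for the $g_\ell^4$ factors of the piece enumeration. With $C_0=2(c+C_4+1)+C_5$ unchanged, the constants $C_4,C_5$ from the proof of \cite{anthropic} enter exactly as before. The induction then closes, and this gives the stated bound for every $\ell$ with one uniform algorithm, since the algorithm reads the schedule as input.
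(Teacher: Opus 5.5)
Your plan is the paper's: go through the proof of \cite[Theorem~A(b)]{anthropic} and check that the concrete schedule enters only through (S1)--(S4). You place the following as the paper does:
the base case;
$K_\ell\mid g_\ell$;
the per-pair bound $c_{a,b}\le P/g$;
(S1) as the Hoeffding condition;
and the final absorption into $C_0K_{\ell+1}^2$ through $(c+5)K_\ell^2\le g_\ell^4/2$.
Two of your identifications are wrong, however, and the first is a step that fails as you describe it.

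\emph{The cost $2^{c\lambda}$.} This is the cost of evaluating a piece value (Definition~7(c) of \cite{anthropic}); it enters through $\tau=2^{c\lambda}\Xi^{C_2}$. It cannot be absorbed into $(K_{\ell+1}(\lambda+B))^{C_0K_{\ell+1}^2}$, because for fixed $\ell$ that factor is polynomial in $\lambda$. It has to be charged to the exponential term. The paper does this in (44) of \cite{anthropic} with $c\lambda\le N/K_\ell$, where $N=g_\ell^4\lambda=K_{\ell+1}^2\lambda$. Together with the factor $2^{2g_\ell P/K_\ell}\le2^{2N/K_\ell}$ of Lemma~8, this gives $2^{3N/K_\ell}$, which is the $3/K_\ell$ in \eqref{eq:p2-rec}. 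That inequality is a second, separate use of (S3), since $g_\ell^4/K_\ell\ge2(c+5)K_\ell\ge c$. Only the factors polynomial in $\lambda$ go into the exponent $C_0K_{\ell+1}^2$: the $(2\lambda)^{O(cg_\ell^4)}$ choices of pieces and surplus bits, and the growth of bit lengths. That is the part you correctly attribute to (S3).

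\emph{The role of (S2).} It does not absorb the additive $K_\ell^2$ from rounding $L$, which is of lower order. It is used in (42) of \cite{anthropic}, in the step from $U+2\zeta N$ to $(\rho+3\zeta)N$. Besides the rectangle pieces, the row network carries the raw column edges: $g_\ell$ of them between every pair of merged rows, about $g_\ell^3\lambda=N/g_\ell$ bits of $U$. The condition $1/g_\ell\le\zeta_\ell$ is what turns them into the third $\zeta_\ell$. Your bound on $U$ omits these edges, so it is false as written.

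Your orders of magnitude are also off. $P$ is about $g_\ell^3\lambda$ (so $g_\ell P\approx N$), not $g_\ell\lambda$, and $K_\ell^2L$ is about $(\rho+3\zeta_\ell)g_\ell^4\lambda$.

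With these two corrections your list of uses is complete and the conclusion stands. Without them the inductive step does not close as you set it up.
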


\begin{proof}
We follow the proof of~\cite[Theorem~A(b)]{anthropic} (l.~1617--1671) and list the places where the schedule~(40)
enters. The base case is brute force and does not involve the schedule. In the inductive step: case~(ii) of Lemma~8
needs $K\mid g$, which holds as $g=sK$, the bound $c_{a,a'}\le P/g$, which is shown there for every $g$, and
$2\zeta^2g\ge K\ln(2gK)$, which is (S1) (the paper verifies it from $x_0\ge18$); the step from $U+2\zeta N$
to $(\rho+3\zeta)N$ in~(42) uses $1/g\le\zeta$, which is (S2); the bound~(43) on $T$ uses $e\le1$, which is (S4); the
inequality $c\lambda\le N/K$ used in~(44) follows from (S3), since $N/K=g^4\lambda/K\ge2(c+5)K\lambda$; and the final
absorption of the exponent of $\Xi$ into $C_0g^4$ uses $(c+5)K^2\le g^4/2$, which is (S3) (the paper derives it from
$2^{x_0}\ge c+5$). Nothing else in the inductive step depends on $s_\ell$, $\zeta_\ell$ or $x_0$.
\end{proof}

Condition (S3) holds for all $\ell$ as soon as $K_0^2\ge2(c+5)$, because $g_\ell\ge K_\ell\ge K_0$; with $c=30$
(Lemma~\ref{lem:p2-c}) this is $x_0\ge4$, and at $x_0=7$ it allows every $c\le8187$. %
So once $c$ is numeric, $c$ no longer enters the main term at all. %

\paragraph{The computed schedules.} For a given bar, $\rho$ and $x_0$ we choose at every level the least integer
$s_\ell$ with (S1) and (S2) for a slack $\zeta_\ell$ from the grid $2^{-j/2}$, $2\le j\le17$, and optimise over the
grid by dynamic programming on the Pareto front of $(\log K_\ell,e_\ell)$; the reported value is
$\log_2\pIImstar=q\log_2K_L$ for the first level $L$ with $e_L$ below the bar. Every reported schedule is admissible:
(S1), (S2) hold by the choice of $s_\ell$, (S3) by $x_0\ge4$, and (S4) because the exponents decrease from level to
level along every reported schedule (Appendix~\ref{app:p2-schedules}). %
Two observations explain where the gain of column (G) comes from.

\begin{remark}[the paper's slack schedule is nearly optimal]\label{rem:p2-slack}
Minimising $\log K_L$ over slack schedules with the group size fixed by (S1) is, to first order, the allocation of a
budget $\sum_\ell\zeta_\ell$ against costs $\sum_\ell2^{L-\ell}\cdot2\log(1/\zeta_\ell)$, whose Lagrange solution is
$\zeta_\ell=\gamma2^{-\ell}$. The closed form is within $2\%$ of the optimum in $\log_2K$ (density one: $2529$
against $2474$; density two: $5258$ against $5177$; density $4.26$: $22343$ against $21996$), with best constants
$\gamma=0.031$, $0.026$, $0.019$. At density one $\gamma=1/32$, which is the paper's $2^{-\ell-5}$. So the gain from
column (P) to (G) does not come from the slack schedule but from the group size: the paper takes $s_\ell=2^{3\ell+x_0}$,
while (S1) needs only $s\ge\ln(2sK^2)/(2\zeta^2)$, about $2^{2\ell+10}\log K$.
\end{remark}

\subsection{The constant \texorpdfstring{$c$}{c} of the pieces}\label{sec:p2-c}

The constant $c$ of~\cite[Theorem~B]{anthropic} enters the main term only through the condition $2^{x_0}\ge c+5$
of~(40), i.e.\ through (S3), and the lower-order terms through $B_1=gB+g^2(2\lambda)^c$, $\tau=2^{c\lambda}\Xi^{C_2}$ and
$C_0$. The paper gives no value; its proof of Theorem~B ends ``Definition~7 holds with $\rho=1/\sqrt2$ if $c$ is large
enough'' (l.~2018), and it counts the operations of its algorithm only as $\Delta^{O(1)}$. Definition~7(c) is a property
of an algorithm, so a value of $c$ is a statement about a particular algorithm and a particular way of counting. We
state it for the algorithm described in the proof of Theorem~B, with the operation count below.

\paragraph{The objects.} For $\lambda\ge1$ let $h\ge2$ be least with $p(h)\ge2^\lambda$ ($p$ the partition
function), $I=\{-2h,\dots,2h\}$, $d=h!$; the piece $A_s$ factors along rows through
$\lceil\log_2R_s^{\rm row}\rceil$ bits and along columns through $\lceil\log_2R_s^{\rm col}\rceil$ bits, with
$R_s^{\rm row}=\sum_{i=0}^{2h}p(s+i)$ and $R_s^{\rm col}=\sum_{i=0}^{2h}p(2h-s-i)$ \cite[(54), (55)]{anthropic}. The
algorithm lists the partitions of $0,1,\dots,4h$, at most $\Delta\le(4h+1)p(4h)$ of them, computes the tables of the
operators of~\cite[(O2)]{anthropic} on these partitions (entries at most $2^{12h^2}$ in absolute value), and evaluates
each $A_s$ as a sum of at most $\Delta$ products of two factors.

\paragraph{Operation count.} We count one arithmetic operation on the integers that occur (at most
$24h^2+O(h\log h)$ bits) as one operation. Building one table of at most $\Delta^2$ entries takes at most $h$ steps
of~(O2) with at most $8h$ terms per entry, there are at most $\Delta$ tables, and the evaluation uses at most
$(2h+1)\Delta$ pairs with $\Delta$ look-ups each; we bound the total generously by $64h^3\Delta^3$. This is our
reconstruction of the paper's description; a different but reasonable count moves the result by a few units.

\begin{lemma}[$c=30$]\label{lem:p2-c}
For the algorithm of the proof of~\cite[Theorem~B]{anthropic}, with operations counted as above, the pieces satisfy
Definition~7 of~\cite{anthropic} with $\rho=1/\sqrt2$ and $c=30$.
\end{lemma}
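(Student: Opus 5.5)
The plan is to verify the four clauses of Definition~7 of~\cite{anthropic} one at a time for the pieces $A_s$, $s\in I$, with $c=30$. In each case I reduce the claim to an elementary inequality in $h$ and $\lambda$, using the choice of $h$ as least with $p(h)\ge2^\lambda$. The key auxiliary facts are two-sided bounds on $h$ in terms of $\lambda$. Minimality gives $p(h-1)<2^\lambda$, and together with $p(n)\ge 2^{\lfloor\sqrt n\rfloor}$ (an elementary bound) this yields $h\le(\lambda+1)^2+1$; the crude bound $p(n)\le 2^{n}$ gives $h\ge\lambda$. I also need an upper bound $p(n)\le e^{\pi\sqrt{2n/3}}$ (Erd\H{o}s' elementary bound) to control $\Delta$ and the bit counts.

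\emph{Size of the index set and of the numbers.} The first clause is $|I|=4h+1\le(2\lambda)^{30}$, which is immediate from $h\le(\lambda+1)^2+1$ for all $\lambda\ge1$; small $\lambda$ can be checked directly. For the numbers, the entries are at most $2^{12h^2}$ in absolute value, and $d=h!$ and the values of $A_s$ are sums of at most $\Delta$ products of two such entries. Hence every number occurring is below $2^{24h^2+O(h\log h)}$. With $h=O(\lambda^2)$ this is $2^{O(\lambda^4)}$, far below $2^{(2\lambda)^{30}}$.

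\emph{Factorisation bits.} The main-term part $c^{\rm row}_s+c^{\rm col}_s\le4\rho\lambda$ up to lower order is Anthropic's (46). Here I must make the slack explicit: show $\lceil\log_2R^{\rm row}_s\rceil+\lceil\log_2R^{\rm col}_s\rceil\le 2\sqrt2\lambda+30\log_2(2\lambda)$. I would write $R_s^{\rm row}\le(2h+1)p(s+2h)$ and use the Erd\H{o}s bound, so that $\log_2 p(s+2h)+\log_2 p(2h-s)\le \pi\sqrt{2/3}\,\log_2 e\,(\sqrt{2h+s}+\sqrt{2h-s})\le \pi\sqrt{2/3}\log_2e\cdot2\sqrt{2h}$ by concavity. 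Then I bound $\sqrt h$ against $\lambda$ from below through $p(h-1)<2^\lambda$, using a matching lower bound $p(n)\ge e^{\pi\sqrt{2n/3}}/(Cn)$ with an explicit $C$. This lower bound is the delicate point. The resulting slack is $O(\log\lambda)$: the terms $2\log_2(2h+1)$, the $\log(Cn)$ correction and the two ceilings. I expect that slack to fit comfortably under $30\log_2(2\lambda)$.

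\emph{Running time: the main obstacle.} The remaining clause asks for one algorithm computing every value within $2^{30\lambda}$ operations. By the count above the cost is at most $64h^3\Delta^3$ with $\Delta\le(4h+1)p(4h)$. The factor $p(4h)^3\approx e^{3\pi\sqrt{8h/3}}$ relates to $p(h)^{6}\approx2^{6\lambda}$, since $\sqrt{4h}=2\sqrt h$. So $\Delta^3$ is roughly $2^{6\lambda}$ times polynomial factors, and $2^{30\lambda}$ leaves room for the polynomials in $h$ and for the lower-order correction in the explicit partition bounds. Making this rigorous for \emph{all} $\lambda\ge1$ is where I expect the work to lie. I would first prove it for $\lambda\ge\lambda_1$ (say $\lambda_1=20$) using explicit Erd\H{o}s-type bounds with constants. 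For the finitely many $\lambda<\lambda_1$ I would compute $h$ and $p(4h)$ exactly and check $64h^3\Delta^3\le2^{30\lambda}$ directly. The constant $30$ is presumably chosen so that this small-$\lambda$ check passes, which is why that case is the binding one.
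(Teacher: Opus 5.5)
Your plan follows the paper's proof: check the clauses of Definition~7 one by one, use exact partition numbers for small $\lambda$ and explicit Hardy--Ramanujan-type bounds beyond, with (c3) at $\lambda=1$ binding. But it stops short of $c=30$ in one place.

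The four checks (b)--(c3) need only $c=28$. The binding requirement is $27.198$, from $64h^3\Delta^3\le2^{c\lambda}$ at $\lambda=1$ with $h=2$, $\Delta=67$. The remaining $2$ pay for a step of the proof of Theorem~B that you never account for. Factorizations through more than $2\lambda$ bits are replaced by factorizations through $2\lambda$ bits (\cite[l.~1527--1530]{anthropic}), and the paper charges this step $+2$ on $c$. Such factorizations already occur at $\lambda=1$: there $R^{\rm row}_4=p(4)+\dots+p(8)=60$, i.e.\ $6>2\lambda$ bits. So $30$ is not ``the value that makes the small-$\lambda$ check pass.'' To prove the lemma for the algorithm of Theorem~B, you must add the cost of this replacement to your count and show that it fits in the slack from $28$ to $30$.

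The margin at $\lambda=1$ is thin, so the check there must use the exact $\Delta=\sum_{E\le8}p(E)=67$, which gives $64h^3\Delta^3\approx2^{27.2}$. Your bound $\Delta\le(4h+1)p(4h)=198$ would give about $2^{31.9}>2^{30}$.

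The explicit lower bound $p(n)\ge e^{\pi\sqrt{2n/3}}/(Cn)$ that you call the delicate point is indispensable for (b), not a convenience. The elementary bound $p(n)\ge2^{\lfloor\sqrt n\rfloor}$ only gives $\sqrt h\lesssim\lambda$. That yields a leading term of about $2\sqrt2\,c_{\rm HR}\lambda\approx10.5\lambda$ in $c^{\rm row}_s+c^{\rm col}_s$, instead of the $2\sqrt2\lambda$ that (b) requires. The paper supplies the bound in Lemma~\ref{lem:p2-chr}: $\log_2p(E)>c_{\rm HR}\sqrt E-\log_2E-3.84$, derived from Lehmer's remainder bound for Rademacher's series plus exact values for $E\le12$.

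With that bound, your treatment of large $\lambda$ by a closed-form inequality valid for all $\lambda\ge\lambda_1$ is actually stronger than the paper's. The paper computes exactly only up to $\lambda=271$, samples numerically up to $10^{14}$, and beyond that only argues the limits $4+4\sqrt2$, $2$, $4$, $6$ of (b)--(c3). Your estimates give requirements of about $6+O(\log\lambda/\lambda)$ for (c3) and $4+4\sqrt2+O(1/\log\lambda)$ for (b), far below $28$. Carried out with explicit constants, this would turn that part of the argument into a proof.
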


\begin{proof}
For each $\lambda$ we compute the least $c$ that meets each condition of Definition~7: (b)
$\max_s(\lceil\log_2R^{\rm row}_s\rceil+\lceil\log_2R^{\rm col}_s\rceil)\le2\sqrt2\lambda+c\log_2(2\lambda)$; (c1)
$4h+1\le(2\lambda)^c$; (c2) every value below $2^{(2\lambda)^c}$, the largest being the bound
$(2h+1)\Delta^3\,2^{3h}\,h!\,2^{24h^2}$ on $|A_s|$; (c3) $64h^3\Delta^3\le2^{c\lambda}$. For $1\le\lambda\le271$
(where $4h\le24000$) all partition numbers are computed exactly; the largest requirement is $27.198$, from (c3) at
$\lambda=1$ ($h=2$, $\Delta=67$), and it decreases to $6.39$ at $\lambda=271$ (Table~\ref{tab:p2-c} in the appendix).
For $\lambda\ge271$ we use the Hardy--Ramanujan bounds
\begin{equation}\label{eq:p2-hr}
  c_{\rm HR}\sqrt E-\log_2E-C_{\rm HR}\ \le\ \log_2p(E)\ \le\ c_{\rm HR}\sqrt E,\qquad
  c_{\rm HR}=\pi\sqrt{2/3}/\ln2,\ C_{\rm HR}=3.84,
\end{equation}
the upper one for $E\ge0$~\cite[Thm.~14.5]{apostol}, the lower one for $E\ge1$ by Lemma~\ref{lem:p2-chr}. With them
the requirements are at most $9.588$ on the sampled range $271\le\lambda\le10^{14}$; beyond it we argue as follows
(not computed): at the defining point $c_{\rm HR}\sqrt{h-1}=\lambda+\log_2h+C_{\rm HR}$, the excess
$c^{\rm row}+c^{\rm col}-2\sqrt2\lambda$ is $(2\sqrt2+2)\log_2h+O(1)$ with $\log_2h=2\log_2\lambda+O(1)$, so (b) tends
to $4+4\sqrt2=9.657$, while (c1), (c2), (c3) tend to $2$, $4$, $6$. Hence $c=\lceil27.198\rceil=28$ meets (b)--(c3) for
every $\lambda$, and the replacement of longer factorizations by ones through $2\lambda$ bits
(\cite[l.~1527--1530]{anthropic}) adds $2$: $c=30$.
\end{proof}

The proof used the lower Hardy--Ramanujan bound with an explicit constant $C_{\rm HR}$; the asymptotic formula for
$p(E)$ does not give one, so we derive it from Lehmer's remainder bound.

\begin{lemma}[explicit $C_{\rm HR}$]\label{lem:p2-chr}
For every integer $E\ge1$, $\log_2p(E)>c_{\rm HR}\sqrt E-\log_2E-3.84$.
\end{lemma}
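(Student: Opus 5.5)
The plan is to use the Hardy--Ramanujan--Rademacher series in Lehmer's truncated form: its first term gives the main part of $p(E)$, Lehmer's bound controls the rest, and small $E$ are checked by computer. With $\mu=\pi\sqrt{2/3}\,\sqrt{E-1/24}$, the first term of Rademacher's series is
\[
  T_1(E)=\frac{1}{\pi\sqrt2}\,\frac{d}{dE}\!\left(\frac{\sinh\mu}{\sqrt{E-1/24}}\right)
  =\frac{\pi^2\sqrt2\,\sqrt3}{\ldots}\;\text{(explicit)}\;\approx\frac{e^{\mu}}{4\sqrt3\,E}\Bigl(1-\frac1\mu\Bigr),
\]
and Lehmer's remainder bound, applied after the first term ($N=1$), bounds $|p(E)-T_1(E)|$ by an explicit quantity of size $O(e^{\mu/2})$ times a power of $E$. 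I would write $p(E)\ge T_1(E)-R_1(E)$ with Lehmer's explicit $R_1$. I would then bound $T_1$ from below in elementary terms: $\sinh\mu\ge e^{\mu}(1-e^{-2\mu})/2$, $\mu\le\pi\sqrt{2/3}\sqrt E$, and $\mu\ge\pi\sqrt{2/3}\sqrt E-\pi\sqrt{2/3}/(48\sqrt E)$ from $\sqrt{E-1/24}\ge\sqrt E-1/(48\sqrt E)$. This gives
\[
  T_1(E)\ \ge\ \frac{e^{c\sqrt E}}{4\sqrt3\,E}\,\bigl(1-\delta(E)\bigr),\qquad c=\pi\sqrt{2/3},
\]
with an explicit $\delta(E)\to0$.

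Taking $\log_2$, and using $c/\ln2=c_{\rm HR}$, yields
\[
  \log_2p(E)\ \ge\ c_{\rm HR}\sqrt E-\log_2E-\log_2(4\sqrt3)+\log_2\bigl(1-\delta(E)-R_1(E)/T_1^{\rm main}(E)\bigr).
\]
Here $\log_2(4\sqrt3)\approx2.79$, so the margin up to $3.84$ is about $1.05$ bits. It therefore suffices to show that the relative error $\delta(E)+R_1(E)/T_1^{\rm main}(E)$ stays below $1-2^{-1.05}\approx0.52$ for all $E\ge E_0$. Both pieces of the error are explicit and decreasing: $\delta$ is $O(1/\sqrt E)$, and the ratio $R_1/T_1$ is $e^{-\Theta(\sqrt E)}$ times a polynomial. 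Choosing $E_0$ where a monotonicity check starts, say $E_0$ in the low hundreds, settles all $E\ge E_0$.

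For $1\le E<E_0$ I would verify the inequality directly from exact partition numbers computed by Euler's pentagonal recurrence, as in the proof of Lemma~\ref{lem:p2-c}, which already has these values up to $24000$. Because the asymptotic correction is negative, the worst case is at small $E$. I expect the constant $3.84$ to be set by a small $E$, near $E=1$ to $10$, where $\log_2p(E)-c_{\rm HR}\sqrt E+\log_2E$ is most negative. This finite check is routine.

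The main obstacle is stating Lehmer's remainder bound in a form that can be cited with correct constants, since several normalisations exist: the bound for $N$ terms, $|R(E,N)|<\frac{\pi^2N^{-2/3}}{\sqrt3}\bigl[(N/\mu)^3\sinh(\mu/N)+\tfrac16-(N/\mu)^2\bigr]$. I would take $N=1$ if the resulting bound is already small relative to $T_1$ for $E\ge E_0$. If the crossover $E_0$ comes out too large for comfort, I would switch to $N=2$ and subtract the second term, whose Kloosterman factor $A_2=\cos$-sum is explicit. With $N=2$ the remainder is only $O(e^{\mu/2})$, which is negligible against $e^{\mu}$ already for moderate $E$, and the rest of the argument stays the same.
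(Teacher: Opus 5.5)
Your plan is the paper's proof: Lehmer's remainder bound for Rademacher's series with $N=1$ (as quoted by DeSalvo and Pak), a lower bound on the first term of the form $e^{\mu}/(4\sqrt3\,E)$ times a bracket, the shift from $\mu$ to $\pi\sqrt{2E/3}$, and exact partition numbers below a cut-off. The paper takes the cut-off $E_0=13$, where the bracket $1-4/\mu-\mu^2e^{-\mu}$ is already at least $1/2$.

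Two statements need fixing.

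First, with $N=1$ Lehmer's bound is $\frac{\pi^2}{\sqrt3}\bigl(\mu^{-3}\sinh\mu+\frac16-\mu^{-2}\bigr)$. This is of order $e^{\mu}/\mu^{3}$, not $O(e^{\mu/2})$. Relative to the main term it is about $3/\mu$, not $e^{-\Theta(\sqrt E)}$; the bound becomes $O(e^{\mu/2})$ only for $N\ge2$. This does not hurt the argument. The total relative loss $4/\mu+\mu^2e^{-\mu}$ is decreasing and already below $1/2$ from $E=13$ on, so $N=1$ suffices and no switch to $N=2$ is needed. But your stated reason for why the error is small is wrong.

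Second, $\sqrt{E-1/24}\ge\sqrt E-1/(48\sqrt E)$ is false for every $E$: the square of the right side is $E-\frac1{24}+\frac1{2304E}$. The tangent line of the concave square root lies above it. Use $\sqrt E-1/(24\sqrt E)$ instead, i.e.\ the paper's $\sqrt{24E-1}\ge\sqrt{24E}-1/\sqrt{24E}$. This costs $0.043$ bits at $E=13$ and gives the constant $\log_2(8\sqrt3)+0.043=3.835$.

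So the constant $3.84$ comes from this analytic bound at $E_0=13$, not from the small cases. There the exact values only force a constant above $c_{\rm HR}\approx3.7007$, attained at $E=1$.
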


\begin{proof}
Let $\mu=\frac\pi6\sqrt{24n-1}$. Lehmer's remainder bound for Rademacher's series, with one term ($N=1$), as quoted
by DeSalvo and Pak~\cite[(4), (5)]{desalvopak}, gives, after dropping the positive term $e^{-\mu}$ of the main part,
bounding $\sinh\mu<e^\mu/2$ and dropping $-1/\mu^2$ in the remainder,
\[ p(n)>\frac{\pi^2}{6\sqrt3}\,\frac{e^\mu}{\mu^2}\Bigl[1-\frac4\mu-\mu^2e^{-\mu}\Bigr]. \]
The bracket increases for $\mu\ge2$ and equals $0.5456$ at $\mu=9$; for $n\ge13$, $\mu\ge\frac\pi6\sqrt{311}=9.234$,
so the bracket is at least $1/2$. With $\mu^2\le\frac{2\pi^2}3n$ this gives $p(n)>e^\mu/(8\sqrt3\,n)$. Since
$\sqrt{24n-1}\ge\sqrt{24n}-1/\sqrt{24n}$, we have $\mu\ge\pi\sqrt{2n/3}-0.02965$ for $n\ge13$, and taking $\log_2$,
$\log_2p(n)>c_{\rm HR}\sqrt n-\log_2n-(3.79248+0.04277)=c_{\rm HR}\sqrt n-\log_2 n-3.8353$. For $1\le n\le12$ exact
values give the inequality with any constant above $3.7007$ (the value $n=1$ forces the constant to exceed $c_{\rm HR}=3.7007$).
\end{proof}

The value $c=30$ depends on how operations are counted. The following remark says how it moves under the paper's
own cost model and under built-in tables, and why the main term does not care.

\begin{remark}[other counts]\label{rem:p2-c-model}
Lemma~\ref{lem:p2-c} counts arithmetic operations on integers of up to $24h^2+O(h\log h)$ bits as single operations.
The paper's own model counts operations on words that hold an address (l.~756), in which an operation on such
integers costs a number of word operations polynomial in their length; at $\lambda=1$, where (c3) binds, this adds a
few bits per $\lambda$ to the requirement, and a hand estimate (not checked) puts $c$ in the high thirties in that
model. Conversely, Definition~7(c) allows built-in tables, and with the pieces for $\lambda\le12$ stored in the
algorithm the binding requirements become (c3) at $\lambda=13$ (about $10$) and the limit $9.66$ of (b), so $c\approx13$.
None of this matters for the main term: every $c\le8187$ is admissible at $x_0=7$.
\end{remark}

\subsection{The base case by a contraction tree}\label{sec:p2-tree}

Theorem~A(b) of~\cite{anthropic} starts its induction with brute force at $K_0$ nodes, $e_0=1/2$. A contraction tree
with one fast matrix product per merge is cheaper at every $K_0$.

\begin{lemma}[tree base case]\label{lem:p2-tree}
Let $K_0$ be divisible by $4$ and let $\omega_{1/2}=\omega(1,\tfrac12,1)$ be the exponent of multiplying an
$N\times N^{1/2}$ by an $N^{1/2}\times N$ matrix. For every $\delta>0$ the $(K_0,\lambda)$-clique can be evaluated with
$O_\delta\bigl(2^{(\omega_{1/2}+\delta)K_0^2\lambda/8}\bigr)$ arithmetic operations on integers of $O(K_0^2(\lambda+B))$ bits.
With $\omega_{1/2}\le2.042994$~\cite{vxxz} this is the base case $e_0=0.2554$ in place of $1/2$.
\end{lemma}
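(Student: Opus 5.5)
We evaluate the $(K_0,\lambda)$-clique with a contraction tree of depth two on four quarters. Split $[K_0]$ into four parts $Q_1,\dots,Q_4$ of $K_0/4$ nodes each, and write $k=K_0/4$. The edges fall into three classes: inside a quarter, between $Q_1$ and $Q_2$ (or between $Q_3$ and $Q_4$), and across the cut $Q_1\cup Q_2\,|\,Q_3\cup Q_4$.

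\emph{Step 1: the quarters.} For each quarter we build the tensor obtained by multiplying the $k$ tables and summing over the $\binom k2\lambda$ inner bits. It is indexed by the $k\cdot 3k\lambda$ bits on its outgoing edges. We build it by brute force: for every setting of the outgoing bits, sum over the inner bits. This costs $2^{(3k^2+\binom k2)\lambda}$ times $\mathrm{poly}$, i.e.\ exponent $(3/16+1/32)K_0^2\lambda=(7/32)K_0^2\lambda$. This is below $(\omega_{1/2}/8)K_0^2\lambda\ge K_0^2\lambda/4=(8/32)K_0^2\lambda$, so the step is dominated.

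\emph{Step 2: merge $Q_1$ with $Q_2$.} This is a matrix product. Put $M=2^{2k^2\lambda}$, the number of strings on the edges from $Q_1$ to $Q_3\cup Q_4$; the same count holds for $Q_2$. The shared index, on the $Q_1$--$Q_2$ edges, has $k^2\lambda$ bits, which is $M^{1/2}$. So the product is an $M\times M^{1/2}$ times $M^{1/2}\times M$ product, costing $M^{\omega_{1/2}+\delta}=2^{(\omega_{1/2}+\delta)K_0^2\lambda/8}$. The same applies to $Q_3,Q_4$.

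\emph{Step 3: the final pairing.} The two resulting tensors are each indexed by the $4k^2\lambda=K_0^2\lambda/4$ crossing bits. Pairing them costs $2^{K_0^2\lambda/4}$, which is at most the step-2 cost since $\omega_{1/2}\ge2$.

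The integer sizes stay at $O(K_0^2(\lambda+B))$ bits, since each entry is a sum of at most $2^{K_0^2\lambda}$ products of $K_0$ entries below $2^B$. The bilinear algorithm for rectangular multiplication uses only constants depending on $\delta$, which gives $O_\delta$. Finally, $2.042994/8=0.25537\le0.2554$.

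The main obstacle is consistency with how Proposition~\ref{prop:p2-schedule} and Theorem~A(b) consume the base case. We must check that replacing $e_0=1/2$ by $0.2554$ keeps (S4). It does, since the exponents only fall. We must also check that the lower-order factor $(K_0(\lambda+B))^{C_0K_0^2}$ absorbs $O_\delta$ and the polynomial overheads at fixed $K_0$. It does, by choosing one fixed $\delta$ and enlarging $C_0$ by a constant.

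A second point is that a fast matrix-multiplication bound holds only asymptotically in $M$. For small $\lambda$ we fall back on brute force, which is again absorbed into the constant.
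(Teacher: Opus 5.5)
Your proof is correct and follows the paper's own argument. The paper also uses four quarters, builds the quarter tables by brute force at cost $2^{7K_0^2\lambda/32}$, takes one $\langle N,N^{1/2},N\rangle$ product per half with $N=2^{\lambda K_0^2/8}$, and ends with a pairing over $N^2$ strings that is dominated because $\omega_{1/2}\ge2$. Your remark on absorbing the $O_\delta$ constant, with a fixed $\delta\le2\cdot10^{-4}$ so that $(\omega_{1/2}+\delta)/8\le0.2554$, matches the paper's comment after the lemma that this constant enters only the lower-order factor.
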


\begin{proof}
Split $[K_0]$ into quarters $Q_1,\dots,Q_4$ of $K_0/4$ nodes and halves $H_1=Q_1\cup Q_2$, $H_2=Q_3\cup Q_4$. For a
quarter $Q$, its table $T_Q$ is a function of the strings on the edges leaving $Q$, obtained by summing the product of
the $W_w$, $w\in Q$, over the strings on the edges inside $Q$; it has
$2^{\lambda(K_0/4)(3K_0/4)}$ entries and costs $2^{\lambda(K_0^2/32+3K_0^2/16)}=2^{7K_0^2\lambda/32}$ terms. The table
$T_{H_1}$, a function of the strings on the $K_0^2/4$ edges between $H_1$ and $H_2$, is
$T_{H_1}(y_1,y_2)=\sum_zT_{Q_1}(y_1,z)T_{Q_2}(y_2,z)$, where $z$ ranges over the strings on the $K_0^2/16$ edges between
$Q_1$ and $Q_2$ and $y_i$ over those on the $K_0^2/8$ edges from $Q_i$ to $H_2$ (other arguments of $T_{Q_i}$ are
fixed and range over $y_i$ too). This is one product of an $N\times N^{1/2}$ and an $N^{1/2}\times N$ matrix with
$N=2^{\lambda K_0^2/8}$, at cost $O_\delta(N^{\omega_{1/2}+\delta})$; likewise $T_{H_2}$. Finally the value is
$\sum_yT_{H_1}(y)T_{H_2}(y)$ over $N^2$ strings $y$, and $2\le\omega_{1/2}$. All other costs are smaller. So the
exponent per $K_0^2\lambda$ is $(\omega_{1/2}+\delta)/8$, and $2.042994/8=0.25537$.
\end{proof}

In the induction of Proposition~\ref{prop:p2-schedule} the base case enters only through its exponent and its
polynomial factor, so the lemma gives the recursion with $e_0=0.2554$. Since $e_\ell$ falls by about $\rho$ per
level, halving $e_0$ saves two levels of $\rho=1/\sqrt2$, and as $\log K$ roughly doubles per level the thresholds
fall by a factor of about four (column (T) of Table~\ref{tab:p2-density}: $3.9$ to $4.1$). %
If $\alpha\ge1/2$ for the dual exponent of matrix multiplication, as claimed in two recent
preprints~\cite{avdbz,uznanski}, then $\omega_{1/2}=2$, $e_0=1/4$ and the thresholds change by $1$--$2\%$ (density
one: $5323$ against $5396$). %
Two caveats. The constant in $O_\delta$ of the rectangular matrix multiplication algorithm is not given
in~\cite{vxxz}; it enters the lower-order factor, alongside $C_0$, and has to be bounded for a fully explicit theorem.
And the base case is the slot where a fixed-$K$ algorithm such as those of Section~\ref{sec:part1} could enter; but the recursion needs
$K_0\ge2^4$ by (S3) and in practice $K_0=2^7$ or $2^8$ by (S1) at level~0 (with $x_0=2$, (S1) forces $\zeta_0$ near
$0.8$, so that $\rho+3\zeta_0>1$), so the gain at the base case comes from the tree and not from constructions at
$K=4$, $9$ or $16$. %

\subsection{A clustering lemma for the top level}\label{sec:p2-clustering}

Theorem~A(c) of~\cite{anthropic} uses Lemma~8 in case~(i) at the top, with $g=3K^8$ clusters, hence $m\ge K^9$. Case~(ii)
would need only $g$ a polylogarithmic factor above $K$, but it requires $c_{a,b}\le P/g$ for every pair of clusters,
and with an arbitrary split of the clauses all tests of a cluster may go to one other cluster. The following lemma
supplies a split with this property at a constant-factor cost in $gP$, and the point of its use is that $gP$ enters
Lemma~8 only multiplied by $\zeta$ or by $1/K$.

Let $\Phi$ be a $k$-CNF with $m$ clauses, with the equality tests of~\cite[\S6]{anthropic} fixed. Let $\Gamma$ be the
multigraph on the clauses with one edge for each test whose two occurrences lie in different clauses. For
$h\colon\Phi\to[g]$ let $n_a=|h^{-1}(a)|$ and, for $a\neq b$, let $c_{a,b}$ be the number of edges of $\Gamma$ between
$h^{-1}(a)$ and $h^{-1}(b)$.

\begin{lemma}[Lemma C, clustering]\label{lem:p2-clustering}
Let $k\ge1$, $g\ge2$, $r=4k-1$, $\kappa=1/\ln2$, and
\[ x_1=\Bigl\lceil\kappa\,\frac2{g^2}\Bigl\lceil\frac{km}r\Bigr\rceil+\log_2(g^2r)\Bigr\rceil,\qquad
   y=\Bigl\lceil\kappa\,\frac mg+\log_2(2g)\Bigr\rceil,\qquad P=g\,r\,x_1. \]
There is a map $h$ from the clauses to $[g]$, computed deterministically in time polynomial in $km$, such that in the
network of~\cite[\S6]{anthropic} on the clusters $h^{-1}(a)$ every cluster has at most $ky\le P$ occurrences,
$\sum_{b\neq a}c_{a,b}\le P$, and $c_{a,b}\le P/g$ for all $a\neq b$. Moreover
$gP\le2\kappa(km+r)+rg^2(\log_2(g^2r)+1)$, and if $m\ge(4+2\ln2\cdot g^2(\log_2(4kg^2)+1))/\eps$ for some $\eps>0$,
then $gP\le2\kappa(1+\eps)km$.
\end{lemma}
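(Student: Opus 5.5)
I would prove Lemma~C by a derandomised random assignment in two stages, with Chernoff-type tail bounds turned into a deterministic choice by the method of conditional expectations (pessimistic estimators). First, cut the edge set of $\Gamma$ into bounded-size pieces. Every clause has $k$ occurrences, and each occurrence is in at most two tests (one toward the previous and one toward the next occurrence of its variable), so $\Gamma$ has maximum degree at most $2k$ and at most $km$ edges. Group the clauses into ``units'' whose incident edges have controlled counts, so that the $c_{a,b}$ can be treated as sums of bounded independent contributions. The choice $r=4k-1$ suggests the following: colour the edges of the line graph, or equivalently partition the edges into $r$ matchings. Vizing-type bounds for a multigraph of degree $2k$ give $\le 3k$; the greedy bound $2(2k)-1=4k-1=r$ is deterministic and trivial, so I would use greedy. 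Each colour class $M_j$ is a matching with at most $\lceil km/r\rceil$ edges after rebalancing. Here I would equalise the classes, splitting and merging greedily while preserving the matching property, which I expect to be a minor technical point.

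Second, assign each clause to a cluster in $[g]$. Within a matching $M_j$, an edge falls between a fixed ordered pair $(a,b)$ with probability $2/g^2$ under a uniform $h$, and for distinct edges of the matching these events are independent, since the edges share no clause. A Chernoff bound of the form $\Pr[X\ge x]\le 2^{-x}\cdot\mathbb E[2^X]\le 2^{-x}e^{\mu}$, with $\mu=\frac{2}{g^2}\lceil km/r\rceil$, explains the $\kappa=1/\ln 2$ factor. With $x_1=\lceil\kappa\mu+\log_2(g^2 r)\rceil$, a union bound over the $g^2$ pairs and $r$ classes gives total failure probability below $1$. The same estimator with $y=\lceil\kappa m/g+\log_2(2g)\rceil$ bounds the cluster sizes $n_a\le y$ over $g$ clusters; the factor $2$ inside the logarithm leaves room to combine the two families of bad events, with total weight $<1/2+1/2$. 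Then $c_{a,b}\le r x_1=P/g$, and summing over $b$ gives $\sum_b c_{a,b}\le P$. Each cluster holds at most $ky$ occurrences, and $ky\le P$ follows by comparing the two ceilings; this step needs a short check for small $m$ using $r\ge 3k$ and $g\ge2$.

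The main obstacle is derandomisation with a single joint estimator. Clauses are assigned one at a time. The estimator is the sum over all bad events of $2^{-x}\prod(\text{conditional MGF factors})$, and this is a product over the edges of a matching. This is the reason for using matchings: each factor depends on one clause pair only, so the estimator can be evaluated in polynomial time and never increases under the best choice. The $(a,b)$-ordered versus unordered bookkeeping must match the count $2/g^2$.

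Finally, the bound on $gP$ is arithmetic. We have $gP=g^2 r x_1\le g^2 r(\kappa\frac{2}{g^2}(\frac{km}{r}+1)+\log_2(g^2r)+1)=2\kappa(km+r)+rg^2(\log_2(g^2r)+1)$. For the $\eps$-version we need $2\kappa r+rg^2(\log_2(g^2r)+1)\le 2\kappa\eps km$. We use $r<4k$, so $\log_2(g^2r)\le\log_2(4kg^2)$, and divide by $2\kappa k=2k/\ln2$. The result is the stated hypothesis $m\ge(4+2\ln2\, g^2(\log_2(4kg^2)+1))/\eps$ after bounding $r/k\le4$.
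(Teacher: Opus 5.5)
Your route is the paper's proof. It has the greedy colouring of $\Gamma$ into $r=4k-1$ matchings of at most $\lceil km/r\rceil$ edges, and the moment bounds $\mathbb E\,2^{X^{ab}_t}\le\exp(2\lceil km/r\rceil/g^2)$ and $\mathbb E\,2^{n_a}\le e^{m/g}$. It uses a single pessimistic estimator $\sum_{a<b}\sum_t2^{X^{ab}_t-x_1}+\sum_a2^{n_a-y}$, minimised clause by clause, and ends with the same arithmetic.

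\textbf{Equalising the colour classes.} This step fails as you describe it. ``Splitting and merging'' does not work: the union of two matchings is in general not a matching. Splitting a class leaves more than $r$ classes, while $x_1$ and $P=grx_1$ are calibrated to exactly $r$ classes.

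The paper keeps the $r$ classes and uses McDiarmid's swap:
\begin{itemize}
\item If $|M_i|\ge|M_j|+2$, the components of $M_i\cup M_j$ are alternating paths and even cycles, so some path has one more edge of $M_i$ than of $M_j$.
\item Exchanging the two classes along that path keeps both matchings and moves one edge to $M_j$.
\item Since $\sum_t|M_t|^2$ decreases, the process ends with all $|M_t|\le\lceil km/r\rceil$.
\end{itemize}

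You can also drop the equalisation altogether. Take a threshold $x_t=\lceil\kappa\tfrac2{g^2}|M_t|+\log_2(g^2r)\rceil$ for each class. The estimator bound is unchanged, and
\[
c_{a,b}\le\sum_t(x_t-1)<\kappa\tfrac2{g^2}|E(\Gamma)|+r\log_2(g^2r)\le rx_1,
\]
because $|E(\Gamma)|\le km$.

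\textbf{The union bound.} Make it run over the $\binom g2$ unordered pairs $a<b$, each hit by a given edge with probability $2/g^2$. That gives $\binom g2r/(g^2r)<1/2$, which your split ``$<1/2+1/2$'' needs. Summing over $g^2$ pairs would give only $\le1$ and leave no room for the cluster-size events.
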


\begin{proof}
\emph{Matchings of equal size.} A clause has at most $k$ occurrences and each lies in at most two tests, so $\Gamma$
has maximum degree $2k$ and an edge meets at most $4k-2$ others; the greedy edge colouring uses at most $r$ colours,
with classes $M_1,\dots,M_r$ that are matchings. While $|M_i|\ge|M_j|+2$ for some $i,j$, the components of
$M_i\cup M_j$ are paths and even cycles alternating between the classes, so some path has one more edge of $M_i$; swapping
the classes on it keeps both matchings and moves one edge from $M_i$ to $M_j$. The sum of $|M_t|^2$ decreases, so this
stops with $|M_t|\le s_1:=\lceil km/r\rceil$ for all $t$ (McDiarmid's equitable edge colouring).

\emph{Tail per matching.} Let $h$ be uniform. For $a\neq b$ and $t$, let $X^{ab}_t$ be the number of edges of $M_t$
between clusters $a$ and $b$. The edges of a matching are vertex-disjoint, so their indicators are independent with
probability $2/g^2$ each, and $\mathbb E2^{X^{ab}_t}\le\exp(2s_1/g^2)$. Likewise $\mathbb E2^{n_a}=(1+1/g)^m\le e^{m/g}$.

\emph{Pessimistic estimator.} Let
$\Psi(h)=\sum_{a<b}\sum_{t=1}^r2^{X^{ab}_t(h)-x_1}+\sum_a2^{n_a(h)-y}$. By the choice of $x_1,y$,
$\mathbb E\Psi\le\binom g2r/(g^2r)+g/(2g)<1$. If $\Psi(h)<1$, then $X^{ab}_t\le x_1-1$ and $n_a\le y-1$ for all
$a,b,t$.

\emph{Conditional expectations.} Assign $h(E_1),h(E_2),\dots$ one at a time, each time minimising the conditional
expectation of $\Psi$, which is at most the average over the $g$ choices; so $\Psi(h)<1$ at the end. Each conditional
expectation is a sum of $\binom g2r+g$ products over the edges of a matching (resp.\ the clauses) of factors $2^{I_e}$
(both ends assigned), $1+1/g$ or $1$ (one end assigned, inside or outside $\{a,b\}$), $1+2/g^2$ (none), and of
$2^{\#\text{assigned to }a}(1+1/g)^{\#\text{unassigned}}$: rationals with polynomially many bits. So the procedure
takes time polynomial in $km$ (we may assume $g\le m$).

\emph{The network conditions.} $c_{a,b}=\sum_tX^{ab}_t\le r(x_1-1)<P/g$; hence $\sum_{b\neq a}c_{a,b}<P$; cluster $a$ has
at most $k(y-1)$ occurrences, and $ky\le P$ because $r\ge k$, $g\ge2$ and
$P\ge2\kappa km/g+2k\log_2(2g)\ge ky$. The remaining parts of Definition~2 are as in~\cite[\S6]{anthropic}.

\emph{Size of $gP$.} $gP=g^2rx_1\le2\kappa rs_1+rg^2(\log_2(g^2r)+1)\le2\kappa(km+r)+rg^2(\log_2(g^2r)+1)$, and with
$r<4k$ the last claim follows by rearranging.
\end{proof}

With the clustering in hand, the top level can use case~(ii) of Lemma~8 of \cite{anthropic} with $g$ only a
polylogarithmic factor above $K$, instead of $g=3K^8$. The next proposition is the resulting bound; its main term
is the $2ekm$ of~\eqref{eq:p2-top} times $1+2\kappa\zeta(1+\eps)$, plus a term of order $km/K$.

\begin{proposition}[top level with Lemma C]\label{prop:p2-top}
Let $(K,e)=(K_\ell,e_\ell)$ be a level of an admissible schedule (Proposition~\ref{prop:p2-schedule}), with $K\ge16$.
Let $0<\zeta\le1$, $0<\eps\le1$, $s\ge1$ an integer with $2\zeta^2s\ge\ln(2sK^2)$, $g=sK$, and suppose
\begin{equation}\tag{T1}\label{eq:p2-T1}
  m\ge\bigl(4+2\ln2\cdot g^2(\log_2(4kg^2)+1)\bigr)/\eps .
\end{equation}
Then the number of satisfying assignments of a $k$-CNF with $m$ clauses is computed with a number of operations whose
logarithm is at most
\[ 2km\Bigl[e\bigl(1+2\kappa\zeta(1+\eps)\bigr)+\frac{2\kappa(1+\eps)}K\Bigr]+eK^2+C_0K^2\log_2(2Kkm)+g\log_2K+O(\log(km)). \]
\end{proposition}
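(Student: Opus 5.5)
The plan is to build the network of \cite[\S6]{anthropic} on the clusters supplied by Lemma~\ref{lem:p2-clustering} and apply one call of Lemma~8 of \cite{anthropic} in case~(ii) with $g=sK$. That call reduces it to a $(K,L)$-clique, which we then evaluate with the algorithm of Proposition~\ref{prop:p2-schedule} at level $\ell$. First, run Lemma~\ref{lem:p2-clustering} with this $g$. This gives in polynomial time a network on $g$ nodes with total $U=2(km-n)\le 2km$ and per-node bound $P$. It also gives $c_{a,b}\le P/g$ for all $a\ne b$, and by (T1), $gP\le 2\kappa(1+\eps)km$. The hypotheses~\eqref{eq:p2-hoeff} of case~(ii) then hold. $K\mid g$ holds since $g=sK$. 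The pairwise bound comes from the lemma. The Hoeffding condition $2\zeta^2g\ge K\ln(2gK)$ is exactly $2\zeta^2 s\ge\ln(2sK^2)$, the assumption on $s$.

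Second, read off the size of the clique below. Case~(ii) produces a $(K,L)$-clique with
\[
K^2L\le U+2\zeta gP+K^2\le 2km+4\kappa\zeta(1+\eps)km+K^2 .
\]
Evaluating it at level $\ell$ costs $2^{eK^2L}$ times the polynomial factor $(K(L+B))^{C_0K^2}$, where $B$ bounds the bits of the node tables. Multiplying out gives the term $2km\,e(1+2\kappa\zeta(1+\eps))+eK^2$. The term $C_0K^2\log_2(2Kkm)$ comes from bounding $K(L+B)\le 2Kkm$ up to constants absorbed into $O(\log(km))$. For this we need $L$ and $B$ polynomial in $km$, and we will check that the merged tables have $B=O(P)$ bits.

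Third, account for the merged groups' own bits. Each of the $K$ groups merges $s$ clusters, and its table must carry the bits shared inside the group, which are at most about $P/K$ per node in the grouping of Lemma~8. Summed, these contribute the $3/K$-type term of~\eqref{eq:p2-rec}. Here it should come to $2km\cdot 2\kappa(1+\eps)/K$, via $gP/K\le 2\kappa(1+\eps)km/K$, entering at exponent $1$ rather than $e$. Finally, the enumeration of groupings, or the derandomised choice of the balanced grouping, and the per-group bookkeeping give $g\log_2K+O(\log(km))$.

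The main obstacle is the third step, together with matching the exact bookkeeping of Lemma~8's proof. We must verify which quantities there are charged at rate $e$ and which at rate $1$, and that the internal bits really are bounded by $gP/K$ rather than $gP$. To check this, I would trace the proof of Lemma~8 case~(ii) line by line, as in the proof of Proposition~\ref{prop:p2-schedule}. The check substitutes Lemma~C's $P$ for the paper's, and confirms that every other use of $P$ appears only multiplied by $\zeta$ or $1/K$. We also need that $K\ge16$ keeps the lower-order factors in the stated form.
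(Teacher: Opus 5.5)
Your route is the paper's: cluster by Lemma~\ref{lem:p2-clustering}, make one call of Lemma~8 of \cite{anthropic} in case~(ii), substitute $gP\le G:=2\kappa(1+\eps)km$, and evaluate the $(K,L)$-clique by Proposition~\ref{prop:p2-schedule} at level $\ell$. Your check of the case~(ii) hypotheses is right; with $g=sK$ the Hoeffding condition is literally $2\zeta^2s\ge\ln(2sK^2)$. What you call the main obstacle is settled by reading Lemma~8, not by re-deriving it. There $P$ enters only through $L=\lceil(U+2\zeta gP)/K^2\rceil$, through $B'=\lceil2gP/K\rceil+g^2B$, and through a factor $2^{2gP/K}$ charged at rate one. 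The Hoeffding ranges in its proof are $P$ and $P/g$, which Lemma~C supplies. So your third step is just the term $2G/K$, and $g\log_2K$ is read off the same lemma; no estimate of internal bits per node is needed.

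The step that does not work as you justify it is the term $C_0K^2\log_2(2Kkm)$. The base $K(L+B')$ is raised to the power $C_0K^2$. A constant factor in it therefore costs a constant times $C_0K^2$. A bound $L+B'\le(km)^{O(1)}$ gives only $O(C_0K^2\log(km))$, with an unspecified multiple. Neither can be absorbed into $O(\log(km))$, and the explicit constant matters here because the later $C_0$ budgets are computed from it.

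You need $L+B'\le2km$ exactly, and this is where $K\ge16$, $\eps\le1$ and (T1) are used, which your proposal leaves unplaced. Take $B=1$, since the clause tables are $0/1$; the relevant quantity is $B'$, not a bound $B=O(P)$ on merged tables. With $G\le4\kappa km$ and $\zeta\le1$ one has $L\le(2+8\kappa)km/K^2+1$ and $B'\le8\kappa km/K+1+g^2$. Since $g\ge K\ge16$ gives $\log_2(4kg^2)\ge10$, (T1) yields $g^2<m/15$. At $K\ge16$ the sum is below $0.85\,km+2\le2km$. With this repair the proposal is complete.
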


\begin{proof}
Cluster by Lemma~\ref{lem:p2-clustering} and apply Lemma~8 of~\cite{anthropic} in case~(ii) with $B=1$, which is
allowed since $K\mid g$, $c_{a,b}\le P/g$ and $2\zeta^2g\ge K\ln(2gK)$ (as $g\ge K$, $\ln(2gK)\le\ln(2sK^2)$). In
Lemma~8, $P$ enters only through $L=\lceil(U+2\zeta gP)/K^2\rceil$, $B'=\lceil2gP/K\rceil+g^2B$ and the factor
$2^{2gP/K}$, and the Hoeffding ranges in its proof are $P$ and $P/g$, which are exactly what the lemma's hypotheses
supply; $U<2km$ does not involve $P$. With $gP\le G:=2\kappa(1+\eps)km$ and $T$ from
Proposition~\ref{prop:p2-schedule}, the logarithm of the count is at most
$e(U+2\zeta G+K^2)+C_0K^2\log_2(K(L+B'))+2G/K+g\log_2K+O(\log(km))$, and $L+B'\le2km$ under~\eqref{eq:p2-T1} for
$K\ge16$, $\eps\le1$.
\end{proof}

Under~\eqref{eq:p2-T1}, $\log_2m\ge2\log_2K+2\log_2s+O(\log\log K)$ with $\log_2s=\log_2\log_2K+2\log_2(1/\zeta)+O(1)$,
so $q=\log_2m/\log_2K=2+O\bigl((\log\log K+\log(1/\zeta)+\log(1/\eps))/\log K\bigr)$ tends to $2$ (we argue this
elementary estimate rather than prove it with constants; the computed values below are exact for the model).
For the thresholds, the recursion is run to a target slightly below the bar, and at the top the largest $\zeta$ is
taken for which $e(1+2\kappa\zeta(1+\eps))+2\kappa(1+\eps)/K+\nu e$ is below the bar, with $\nu=10^{-3}$ reserved for
the lower-order terms; $\pIImstar$ is then the least $m$ with~\eqref{eq:p2-T1}. At density one this gives $x_0=7$, six
levels, $\log_2K=2478$, $\zeta=2^{-11.3}$, $\log_2s=33.3$, $\log_2\pIImstar=5037$ and $q=2.033$; at density $4.26$,
nine levels, $\log_2\pIImstar=44154$, $q=2.006$. The honest top level costs $85$ to $122$ bits over the crude
$2\log_2K$. %
The lower-order budget $\nu\cdot2ekm$ covers the term $C_0K^2\log_2(2Kkm)$ at $\pIImstar$ provided
$C_0\le2^{57.6}$ at density one ($2^{59.3}$, $2^{60.9}$, $2^{76.2}$, $2^{93.0}$ at densities $1.35$, $2$, $3$,
$4.26$). %
With the exact total $U=2(km-n)$ the margins are larger still: $C_0\le2^{51.1}$ suffices in every row of
Table~\ref{tab:p2-density} (the tightest case is column (U+C) at density one). %
These budgets are only useful if $C_0$ is a number, and~\cite{anthropic} gives none. The next proposition bounds it
by counting the steps that the paper calls polynomial.

\begin{proposition}[explicit $C_0$]\label{prop:p2-c0}
In the word RAM model of~\cite{anthropic}, with schoolbook integer arithmetic, Theorem~A(b) of~\cite{anthropic}
holds with $C_0\le810$.
\end{proposition}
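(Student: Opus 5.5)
We bound $C_0=2(c+C_4+1)+C_5$ by making each ingredient numeric, in the order in which the proof of Theorem~A(b) of~\cite{anthropic} introduces them. With $c=30$ from Lemma~\ref{lem:p2-c}, it remains to bound $C_4$ and $C_5$. $C_4$ is the exponent of the polynomial bit-length and operation bounds inside Lemma~8 (building merged tables, the signed sums over pieces, and the bookkeeping of indices). $C_5$ absorbs the cost of the recursive call's overhead: addressing, the number of terms over piece choices, and integer arithmetic on the enlarged entries bounded by $B_1=gB+g^2(2\lambda)^c$.

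The first step fixes the cost model. In a word RAM with words of $\Theta(\log)$ of the address size, schoolbook multiplication of $b$-bit integers costs at most $b^2$ word operations and addition costs at most $b$ (up to a small explicit constant). Every integer in level $\ell$ has at most $K^2(\lambda+B)$ times a fixed power of $\log$ bits. So each arithmetic operation counted in the algebraic bound costs at most $(K(\lambda+B))^{2}$ word operations, which contributes an additive constant to the exponent $C_0K^2$ that is tiny compared with $K^2$.

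The second step re-reads Lemma~8 line by line and writes each ``$\mathrm{poly}$'' factor as an explicit power of $\Xi=K(\lambda+B)$. The ingredients are the enumeration of groupings (derandomised Hoeffding, at most $g^{g}$ candidates or a conditional-expectation pass), the table construction per group, the number $|I|^{g^2}\le(2\lambda)^{cg^2}$ of piece profiles, and the evaluation cost $2^{c\lambda}$ per piece value. Each of these becomes a factor $\Xi^{a_iK^2}$ with $a_i$ read off, using $g^4\ge2(c+5)K^2$ from (S3) to convert powers of $g$ into powers of $K^2$ where needed. Summing the $a_i$ gives $C_4$ and $C_5$.

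The third step closes the induction. We check that $T_{\ell+1}\le 2^{e_{\ell+1}K_{\ell+1}^2\lambda}\Xi^{C_0K_{\ell+1}^2}$ follows from the level-$\ell$ bound, because the inherited factor $\Xi'^{C_0K_\ell^2}$ is swamped by $K_{\ell+1}^2=g^4$. This gives a self-consistent inequality of the form $C_0\ge2(c+C_4+1)+C_5$. We then verify numerically that $810$ satisfies it.

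The main obstacle is the second step. The paper's ``polynomial'' steps are stated loosely, so we must reconstruct an honest count, especially the number of piece profiles and the bit growth of entries through the signed sums, without undercounting. If a term resists a clean bound, we will bound it generously, for example $b^3$ instead of $b^2$, since the target $810$ leaves slack far below the needed $2^{51}$.
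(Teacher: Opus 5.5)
\textbf{There is a genuine gap.} Your starting formula $C_0=2(c+C_4+1)+C_5$ with $c=30$ is right, but the proposal stops where the proposition begins. The statement is a numerical count, and you never produce a number for $C_4$ or $C_5$; ``read off the $a_i$'' and ``verify numerically that $810$ satisfies it'' defer all of the content.

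The quantities you plan to count are also not the ones in the formula. In \cite{anthropic}, $C_5$ is the constant of the brute-force base case, not overhead of the recursive call. One base term queries $K_0$ tables, multiplies $K_0$ integers below $2^B$ and updates the sum, each in at most $(K_0(\lambda+B))^3$ operations, so $C_5\le4$. $C_4$ is the exponent of the bracket of \cite[(44)]{anthropic}. That bracket is a sum of four terms, each at most $T2^{2N/K}2^{c\lambda}\Xi^{C_2+(c+4)C_1+C_3+c+1}$, so $C_4\le C_2+(c+4)C_1+C_3+c+3$. Here $C_1$ is the per-term constant of Lemma~8, $C_2$ bounds $\tau$ for the row and column networks, and $C_3$ bounds the work per piece choice $\iota$. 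Bounding differently defined exponents $a_i$ and inserting them into $2(c+C_4+1)+C_5$ does not bound $C_0$. It would only do so if you re-derived the whole induction of Theorem~A(b) with your own bookkeeping, which your third step announces but does not carry out.

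Your first step would also make the count come out too small. It is not true that every integer at a level has $K^2(\lambda+B)$ times polylogarithmically many bits:
\begin{itemize}
\item piece values may have $(2\lambda)^c$ bits (Definition~7);
\item a merged table value is a product over up to $g^4$ incidences and has up to $\Xi^{c+2}$ bits;
\item $d^{|R|}$ with $|R|\le g^4$ has up to $g^4(2\lambda)^c$ bits.
\end{itemize}
Schoolbook arithmetic on such integers costs $\Xi^{\Theta(c)}$ word operations, not $\Xi^2$. This is why $C_2\le2c+7$ and $C_3\le2c+8$. In addition, the overhead $(g(P+B))^{C_1}$ of Lemma~8 is paid with the enlarged bit bound $B_1=gB+g^2(2\lambda)^c$, which you mention yourself, so it enters (44) as the multiplicative term $\Xi^{(c+4)C_1}$. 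Adding per-step exponents with arithmetic charged at $\Xi^2$ misses this compounding, and it is the dominant part of $C_4\le67+34\cdot6+68+33=372$, whence $C_0\le2(30+372+1)+4=810$.

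Two smaller slips:
\begin{itemize}
\item The number of piece profiles is $|I|$ raised to the number $\binom g2^2\le g^4/4$ of rectangles, not $|I|^{g^2}$.
\item (S3) is used to absorb $(c+5)K_\ell^2$ into $g^4=K_{\ell+1}^2$, not to convert powers of $g$ into powers of $K^2$.
\end{itemize}

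Finally, the slack between $810$ and $2^{51}$ belongs to the threshold application, not to this proposition. Rounding up ``generously'' proves some bound on $C_0$, but not $C_0\le810$.
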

\begin{proof}
The paper sets $C_0=2(c+C_4+1)+C_5$, with $c=30$ by Lemma~\ref{lem:p2-c}. The bracket of~\cite[(44)]{anthropic} is a
sum of four terms; it is at most four times the largest, and each is at most
$T2^{2N/K}2^{c\lambda}\Xi^{C_2+(c+4)C_1+C_3+c+1}$, so $C_4\le C_2+(c+4)C_1+C_3+c+3$. Constant factors are absorbed by
raising an exponent by one, since $\Xi=g^2(\lambda+B)$ and $K_0(\lambda+B)$ exceed every constant used
($g\ge2^{18}$). We count the steps that the paper calls polynomial. (i)~Base case: one term of the brute-force sum
queries $K_0$ tables, multiplies $K_0$ integers below $2^B$, adds the product to the running sum and advances the
enumeration, each in at most $(K_0(\lambda+B))^3$ operations; so $C_5\le4$. (ii)~Lemma~8, per term of an entry:
testing a grouping, locating the blocks of an argument of fewer than $KL\le g(gP+1)$ bits, multiplying at most $g^2$
values below $2^B$, and adding a term of at most $B'$ bits each take at most $(g(P+B))^4$ operations; so
$C_1\le5$, and we use $C_1\le6$. (iii)~The parameter $\tau$ of the row and column networks: a merged table value is a
product over at most $g^4$ incidences of a table value and of piece-factor values below $2^{(2\lambda)^c}$, each
computed in $2^{c\lambda}$ operations~\cite[Def.~7(c)]{anthropic}; the product has at most $\Xi^{c+2}$ bits, so
$\tau\le2^{c\lambda}\Xi^{2c+7}$ and $C_2\le2c+7$. (iv)~Per $\iota$: $d^{|R|}$ has at most $g^4(2\lambda)^c\le\Xi^{c+2}$ bits
and $Z_\iota$ at most $\Xi^3$; computing $d^{|R|}$ by $|R|\le g^4$ multiplications and the final division cost at most
$\Xi^{2c+8}$, and the other steps less; so $C_3\le2c+8$. Hence $C_4\le67+34\cdot6+68+33=372$ and
$C_0\le2(30+372+1)+4=810$.
\end{proof}
The counts are crude, and they rest on our reading of the constructions of~\cite[Lemma~1, Proposition~3]{anthropic},
not on a line-by-line re-proof; but against the $2^{51}$ that is needed only the polynomiality of each step matters.

\newcommand{\pIIspec}{N}
\subsection{Max-3-SAT through integer tables}\label{sec:p2-maxsat}

The network of~\cite[\S6]{anthropic} checks every clause inside the table of the cluster that contains it, and
Definitions~2--4 of~\cite{anthropic} allow arbitrary integer tables, with the bit length $B$ of the values as a
parameter of every bound. So the same algorithm can weight each assignment instead of testing it. For a CNF $\Phi$
with nonnegative integer clause weights $w_E$ and an assignment $\xi$, let $w(\xi)$ be the total weight of the clauses
that $\xi$ satisfies, and let $\pIIspec_j(\Phi)$ be the number of assignments $\xi$ with $w(\xi)=j$; with all weights
$1$, $\pIIspec_j$ counts the assignments that satisfy exactly $j$ clauses, and the largest $j$ with
$\pIIspec_j\neq0$ is the Max-SAT value.

\begin{proposition}[Max-$k$-SAT spectrum]\label{prop:p2-maxsat}
Let $c_w\ge0$. One deterministic algorithm computes, for a $k$-CNF $\Phi$ with $m\ge2$ clauses in which $n$ variables
occur and with integer clause weights $0\le w_E\le m^{c_w}$, all the numbers $\pIIspec_j(\Phi)$, $0\le j\le\sum_Ew_E$,
as numbers of assignments of these $n$ variables, in time $2^{O(km/\sqrt{\log m})}$, where the $O$ depends on $c_w$
but not on $k$. In particular weighted Max-$k$-SAT with polynomially bounded weights is solved in this time, which is
$2^{O(n/\sqrt{\log n})}$ when $k$ and $m/n$ are bounded.
\end{proposition}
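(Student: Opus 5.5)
We reduce the spectrum to evaluations of the network of \cite[\S6]{anthropic} with integer tables, and recover the coefficients by interpolation. Introduce a formal variable $X$ and weight each assignment $\xi$ by $X^{w(\xi)}=\prod_E X^{w_E[\xi\text{ satisfies }E]}$. Since every clause lies in exactly one cluster, the factor $X^{w_E[\cdot]}$ can be folded into that cluster's table. We do not carry polynomials: we substitute an integer $X=x$ and use the tables
\[ W_a(z_a)\;=\;[z_a\text{ consistent}]\prod_{E\in h^{-1}(a)}x^{w_E[\text{the proposed values at }E\text{ satisfy }E]}. \]
The equality tests force all copies of a variable to agree, so the network value is $F(x)=\sum_\xi x^{w(\xi)}=\sum_j\pIIspec_j(\Phi)x^j$, with $\xi$ ranging over assignments of the $n$ occurring variables. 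This is a polynomial of degree $D=\sum_Ew_E\le m^{1+c_w}$.

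Evaluate $F$ at the $D+1$ points $x=0,1,\dots,D$ (or at $x=2^{b}$ with $b$ larger than $n+1$, after which a single evaluation suffices and the coefficients are read off as base-$2^b$ digits, since $0\le\pIIspec_j\le2^n$). The single-point variant is cleaner. Take $x=2^{n+1}$. A cluster table then has entries below $2^{B}$ with $B=(n+1)D+1\le(n+1)m^{1+c_w}+1$. The value $F(2^{n+1})$ has digits $\pIIspec_j<2^{n+1}$, so no carries occur and all $\pIIspec_j$ are recovered exactly by reading off blocks of $n+1$ bits.

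The running time follows from \cite[Theorem~A(c)]{anthropic}, whose bound is $2^{2e_\ell km+O(\dots)}$ times a factor polynomial in $K$, $\lambda$ and $B$. More precisely, the bound of Theorem~A(b) in Proposition~\ref{prop:p2-schedule} contributes $(K(\lambda+B))^{C_0K^2}$, and Lemma~8 contributes $B'=\lceil 2gP/K\rceil+g^2B$. Both are where $B$ enters. With $B=m^{O(1)}$ and $K\le m^{1/9}$, these give $\exp(O(K^2\log m))=2^{O(m^{2/9}\log m)}$, which is negligible against $km/\sqrt{\log m}$. The main term $2e_\ell km$ with $e_\ell=O((\log K_\ell)^{-1/2})=O((\log m)^{-1/2})$ is unchanged by $B$. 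So the total is $2^{O(km/\sqrt{\log m})}$. The $O$ depends on $c_w$ only through $\log B=O((1+c_w)\log m)$, and the dependence on $k$ is the explicit factor $k$ in $km$. Arithmetic on $B$-bit integers is polynomial in $B$ in the word model and is absorbed the same way.

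The main obstacle is checking that the integrality and sign bookkeeping of \cite{anthropic} tolerates tables with values up to $2^{B}$ rather than $0/1$. Concretely, the division by $d^{|R|}$ must still be exact, and the bounds $|W|<2^B$ must propagate through Lemma~8 and the pieces. Since the paper states Theorem~A(b) for arbitrary integer tables with $|W_w|<2^B$, this should follow by inspection of where $B$ enters, namely $B_1=gB+\dots$ and $B'$. Finally, weighted Max-$k$-SAT is the largest $j$ with $\pIIspec_j\ne0$. At bounded $k$ and $m/n$ the bound becomes $2^{O(n/\sqrt{\log n})}$. Edge cases such as $m\ge2$ and variables that do not occur are handled by restricting the count to the occurring variables, as in Definition~\ref{def:p2-threshold}.
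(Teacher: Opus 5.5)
Your proposal is correct and follows the paper's proof essentially step for step. You fold the clause weights into the cluster tables as powers of $y=2^{n+1}$ and read the spectrum off as base-$2^{n+1}$ digits of the network value. You then recheck the running time of \cite[Theorem~A(c)]{anthropic} by observing that only $B$ changes, and that it enters polynomially through $B'$ and the factor $(K(L+B'))^{C_0K^2}$. One minor correction: since $n\le km$, the right bound is $\log B=O(\log(km))$ with a constant depending on $c_w$, not $O((1+c_w)\log m)$; this changes nothing in the conclusion.
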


\begin{proof}
Let $W=\sum_Ew_E$ and $y=2^{n+1}$. Build the network of~\cite[\S6]{anthropic} (with clusters as in the proof
of~\cite[Theorem~A(c)]{anthropic}, or as in Lemma~\ref{lem:p2-clustering}) with the same settings, the same numbers
$c_{a,b}$ and the same edge functions $\Psi_{a,b}$, and change only the node tables: $W_a(z_a)$ is $y^{j}$, where $j$
is the total weight of the clauses of cluster~$a$ that have a true occurrence under $z_a$, if $z_a[o]=z_a[o']$ for every
test $\{o,o'\}$ inside cluster~$a$, and $0$ otherwise. As in~\cite[\S6]{anthropic}, every test is checked exactly once,
by a $W_a$ or by a $\Psi_{a,b}$, so the terms of the value $Z$ that are not zero are those in which all occurrences of
each variable carry the same bit; they correspond one to one to the assignments $\xi$ of the $n$ variables, and the
term of $\xi$ is $\prod_ay^{w_a(\xi)}=y^{w(\xi)}$, where $w_a(\xi)$ is the weight of the clauses of cluster $a$ that
$\xi$ satisfies. Hence
\[ Z=\sum_{j=0}^{W}\pIIspec_j(\Phi)\,y^j . \]
Since $0\le\pIIspec_j\le2^n<y$, the $\pIIspec_j$ are the digits of $Z$ in base $y=2^{n+1}$, that is, consecutive
blocks of $n+1$ bits of $Z$, and they are read off in time linear in the length $(n+1)(W+1)$ of $Z$.

It remains to check that the evaluation of~\cite[Theorem~A(c)]{anthropic} still takes time $2^{O(km/\sqrt{\log m})}$.
The settings, the numbers $c_{a,b}$, $P$ and $U$ are those of the $0/1$ network, so the main term $2ekm$ and the terms
$2gP/K$ and $g\log K$ of~\cite[(45)]{anthropic} are unchanged. Only $B$ changes: a cluster holds at most $P$ clauses,
so every value of $W_a$ is below $2^B$ with $B=(n+1)\,P\,m^{c_w}+1$, and computing an entry, which is writing one
bit at position $(n+1)j$, takes time polynomial in $B$; so $\tau$ is polynomial in $km$, since $n\le km$. In
Lemma~8 of~\cite{anthropic}, $B$ enters through $B'=\lceil2gP/K\rceil+g^2B$ and the factor $(g(P+B))^{C_1}$, and
the clique algorithm of~\cite[Theorem~A(b)]{anthropic}, which holds for all integer tables with $|W_w|<2^B$ and every
$B\ge1$, contributes the factor $(K(L+B'))^{C_0K^2}$. With $g\le m$, $P\le2km$ and $n\le km$, we have
$\log(K(L+B'))=O(\log(km))$ with a constant that depends only on $c_w$, so the term $C_0K^2\log(K(L+B'))$
of~\cite[(45)]{anthropic} is again $O(K^2\log(km))$ and all terms after $2ekm$ remain $O(km^{1-1/90})$. The
arithmetic on integers of $\mathrm{poly}(km)$ bits in Lemma~8 and in Theorem~A(b) is already counted there, since
both are stated for arbitrary $B$. Finally, a variable that does not occur multiplies every $\pIIspec_j$ by two.
\end{proof}

The proposition does not use the inclusion--exclusion network of~\cite[Theorem~D]{anthropic}, and its bound in $n$
does not follow for unbounded density. That bound needs two steps that are specific to satisfiability. First, the
sparsification lemma~\cite[Lemma~4]{anthropic} replaces $\varphi$ by formulas whose satisfying assignments partition
those of $\varphi$; it says nothing about assignments that violate clauses, so the number of clauses an assignment
satisfies is not preserved. Second, Theorem~D writes $[V(\xi)=\emptyset]$ as an alternating sum over sets of marked
violated clauses and cuts this sum off at about $n/\log\Delta$ marks, which changes nothing for the assignments with
few violated clauses and is corrected by parity checks for the others; the weight $y^{|V(\xi)|}=\sum_{\Pi\subseteq
V(\xi)}(y-1)^{|\Pi|}$ has no such cut-off, since all its terms are positive. So the bound $2^{O(m/\sqrt{\log m})}$ is
below $2^n$ only for $m=o(n\sqrt{\log n})$. The Max-3-SAT hypothesis, which asks for no $(2-\eps)^n$ algorithm at any
number of clauses and whose refutation would follow from a refutation of the 3-uniform hyperclique
hypothesis~\cite{lvw18}, concerns dense formulas, where the proposition gives nothing; at bounded density a time
$(2-\eps(\Delta))^n$ was known before~\cite{sstt15}. What the proposition adds is a subexponential time at bounded
density, together with the whole spectrum $(\pIIspec_j)_j$ rather than only its largest index, computed by the
recursion itself and with an explicit crossover point (Table~\ref{tab:p2-maxsat}). The subexponential time alone
does not need the recursion, and the following consequence belongs to the two refutations: either of them, by the
route of~\cite[Theorem~5]{ipz01} recalled below, refutes the Sparse Max-3-SAT Hypothesis of Alman and Vassilevska Williams~\cite[Hypothesis~3]{avw20}, that for every
$\eps>0$ some density $c$ makes Max-3-SAT on $cn$ clauses require time $2^{(1-\eps)n}$, and with the same argument
for $k=2$ their sparse Max-2-SAT hypothesis~\cite[Hypothesis~4]{avw20}; they introduced the sparse form because no
sparsification lemma for Max-$k$-SAT is known. Neither refutation states this consequence, so we record it. The
route of~\cite[Theorem~5]{ipz01} is:
a linear-size circuit that tests whether exactly $j$ clauses are satisfied, turned into a 3-CNF with $O(m)$ variables
and clauses in which every assignment extends to exactly one satisfying assignment, followed by any
algorithm for \#3-SAT in time $2^{o(n)}$, that of~\cite{anthropic} or of~\cite{openai}. The algorithm of~\cite[Corollary~1.2]{openai} approximates Max-$k$-SAT to within $\eps m$ clauses in time
subexponential in $n$ for every density; it does not give the exact value.

\paragraph{Thresholds.} The best known exact algorithm for Max-SAT in terms of the number of clauses runs in time
$O^*(1.2886^m)$~\cite{xiao22}, that is $0.3658\,m$; the best bound in the length, $O^*(1.0911^L)$~\cite{bab23}, gives
$0.3773\,m$ for 3-CNFs with $L=3m$ and is weaker, and we found no better bound for Max-3-SAT in particular. In terms of $n$ we take trying all assignments, $2^n$, which also gives the whole spectrum: the algorithms for sparse Max-SAT
beat $2^n$ by a factor $2^{-\mu(\Delta)n}$ with $\mu(\Delta)$ known only up to a constant factor~\cite{sstt15}, so
they give no numeric bar. With the exact total $U=2(km-n)$ as in~\eqref{eq:p2-bar-exact}, the bar for Max-3-SAT at
density $\Delta$ is
\begin{equation}\label{eq:p2-bar-maxsat}
  \pIIbar_{\max}(\Delta)=\min\{\log_2 1.2886,\ 1/\Delta\}/(6-2/\Delta),
\end{equation}
which is $0.0915$, $0.0732$, $0.0625$, $0.0424$ for $\Delta=1,2,3,4.26$; against~\eqref{eq:p2-bar-exact} it is
lower at density one and higher from density two on, where \#3-SAT has the faster competitors.
Table~\ref{tab:p2-maxsat} gives the main-term thresholds for this bar, in the model of
Table~\ref{tab:p2-density}.

\begin{table}[ht]
\centering\small
\begin{tabular}{@{}l r r r r r@{}}
\toprule
density $\Delta$ & bar~\eqref{eq:p2-bar-maxsat} & (P) as written & (U) tree, exact $U$ & (C) Lemma C, brute base & (U+C) all \\
\midrule
1    & 0.0915 & 68364  & 4107  & 3671  & \textbf{1016} \\
2    & 0.0732 & 68364  & 6344  & 5845  & \textbf{1489} \\
3    & 0.0625 & 137430 & 11308 & 10113 & \textbf{2594} \\
4.26 & 0.0424 & 275616 & 23770 & 20805 & \textbf{5371} \\
\midrule
status & & proved & proved & proved & proved \\
\bottomrule
\end{tabular}
\caption{Main-term thresholds $\log_2\pIImstar$ for the Max-3-SAT spectrum of 3-CNFs with $m=\Delta n$ clauses,
against the bar~\eqref{eq:p2-bar-maxsat}; columns as in Table~\ref{tab:p2-density}, column (P) with the paper's
bound $U<2km$, i.e.\ the bar $\min\{\log_21.2886,1/\Delta\}/6$. At density one the threshold is
larger than for \#3-SAT ($1016$ against $417$, one level more), because the Max-SAT competitor is faster there; from
density two on it is smaller, because the only competitor in $n$ with a numeric constant is $2^n$.}
\label{tab:p2-maxsat}
\end{table}

The larger tables change only lower-order terms. In Proposition~\ref{prop:p2-top} the bound $L+B'\le2km$ becomes
$L+B'\le4\kappa\,g\,m\,(n+2)$, since a cluster of Lemma~\ref{lem:p2-clustering} holds at most $y-1$ clauses and
$B\le(n+1)(y-1)+1$; so $\log_2(K(L+B'))$ is about twice $\log_2(2Kkm)$ at $\pIImstar$. The budget $\nu\cdot2ekm$
for the term $C_0K^2\log_2(K(L+B'))$ then needs $C_0\le2^{47.8}$ in the tightest row (column (U+C) at density three)
instead of about $2^{48.8}$, far above the $C_0\le810$ of Proposition~\ref{prop:p2-c0}.

\subsection{The levers, priced}\label{sec:p2-levers}

Table~\ref{tab:p2-levers} lists every lever of the recursion with its effect on the threshold. We give the factor by
which it divides $\log_2\pIImstar$; values are main-term values of the model, and the rows marked hypothetical assume a
constant or a construction that is not available. The bounded-density rows of the second block were computed at the
bar $0.0867$ (a competitor $2^{0.52n}$ that we used before the clause bound of~\cite{psx25} was identified); the factors,
not the absolute values, are the content of those rows.

\begin{table}[ht]
\centering\small
\begin{tabular}{@{}p{4.3cm} p{5.6cm} p{4.0cm}@{}}
\toprule
lever & effect & status \\
\midrule
base case & brute force $e_0=\tfrac12$ $\to$ tree $0.2554$: $22264\to5396$ ($4.1\times$) & proved (Lemma~\ref{lem:p2-tree}) \\
exact edge-bit total & $U<2km\to U=2(km-n)$, bar $0.0881\to0.1321$: $5396\to1538$ ($3.5\times$; $1.07\times$ at $\Delta=4.26$) & proved (count in~\cite[\S6]{anthropic}) \\
group size per level & paper's $2^{3\ell+x_0}$ $\to$ least admissible: $33858\to28761$ ($1.2\times$) & proved (Prop.~\ref{prop:p2-schedule}) \\
initial size $x_0$ & $18\to7$: $28761\to22264$ ($1.3\times$) & proved ($c=30$) \\
top balancing $q$ & $9\to2.03$: $22264\to5037$ ($4.4\times$); with tree base and exact $U$: $1538\to417$ & proved (Lemma~C, Prop.~\ref{prop:p2-c0}) \\
slack schedule $\zeta_\ell$ & paper's $2^{-\ell-5}$ is within $2\%$ of optimal in $\log K$ & proved shape (Rem.~\ref{rem:p2-slack}) \\
\midrule
sharper tails in balancing & $\tau=9$ (weights in a band of $1/3$): $22526\to18666$ ($1.2\times$) & hypothetical, needs proof \\
overhead constants & $3\zeta,3/K\to\zeta,1/K$: $18666\to13549$ ($1.4\times$) & hypothetical \\
no balancing slack & $s=1$: $22526\to3168$ ($7\times$) & hypothetical \\
per-level factor $\rho$ & $1/\sqrt2\to\tfrac12$: $22264\to2091$ ($11\times$) & hypothetical; below $2/3$ excluded in App.~B's shape for concave profiles (Prop.~\ref{prop:p3-window}), below $\tfrac12$ for two merges (Section~\ref{sec:part3}) \\
growth of $K$ per level & $K_{\ell+1}=16K_\ell$ instead of squaring: $33858\to378$ & hypothetical; excluded in rewrite-and-merge (Section~\ref{sec:part3}) \\
\midrule
\multicolumn{3}{@{}l}{general 3-SAT ($\log_2\pIInstar$ from row G7 of Table~\ref{tab:p2-general}, $5.0\cdot10^8$)} \\
sparsification density & Calabro $\to\Delta=1/\eps$: $6.2\cdot10^7$; $\to\Delta=2$: $1.6\cdot10^7$ ($32\times$) & parameter (Prop.~\ref{prop:p2-sparsify}) \\
parity-check kill & bits per variable $U/n$ (row G7: about $1400$) $\to800$: $2.5\cdot10^8$; $200$: $1.6\cdot10^7$; $50$: $9.7\cdot10^5$; $25$: $2.4\cdot10^5$ & $800$ is the floor of the parity paradigm; below: no construction known \\
$\rho$ (general) & $1/\sqrt2\to2/3\to\tfrac12$: $5.0\cdot10^8\to6.2\cdot10^7\to2.4\cdot10^5$ & hypothetical; $2/3$ is the floor of the shape of App.~B (Prop.~\ref{prop:p3-window}) \\
\bottomrule
\end{tabular}
\caption{Levers of the recursion, density one (first two blocks) and general 3-SAT (last block); main-term values of
$\log_2\pIImstar$ resp.\ $\log_2\pIInstar$.}
\label{tab:p2-levers}
\end{table}

Three things stand out. First, no lever that is proved or plausible changes the scale: they move $\log_2\pIImstar$
between about $34000$ and $5000$ at density one. The shape of the recursion explains this: $\log_2\pIImstar\approx
q(3x_0+6)\,(c'/\pIIbar)^{1/\alpha'}$ with $\alpha'=\log_2(1/\rho)$, because $\log K$ doubles per level while $e$ falls by
$\rho$; $q$, $x_0$, the base case and the overheads act on the constant factors, $\rho$ on the exponent $1/\alpha'$, and
only a growth of $K$ by a constant factor per level would turn the double exponential into a single one.
Second, the levers interact sub-multiplicatively when they act through the number of levels: a better sparsifier raises
the target exponent and saves five levels at $\rho=1/\sqrt2$, but only three at $\rho=1/2$, so $\rho\to1/2$ is worth
$2000\times$ with Calabro's sparsifier and $500\times$ with $\Delta=2$; the levers $x_0$ and $q$ multiply exactly.
Third, the rows that would change the scale are exactly those that Section~\ref{sec:part3} shows to be closed inside the class of
rewrite-and-merge schemes, or that need an object nobody has constructed.

\begin{proposition}[structural floor]\label{prop:p2-floor}
Let $\rho=1/\sqrt2$ and let the base case have $e_0\le1/2$. In every admissible schedule
(Proposition~\ref{prop:p2-schedule}) whose first level gains, i.e.\ $e_1<e_0$, we have $K_1>2^{28}$. Consequently, with
any top level that needs $m\ge K^2$, the recursion beyond its base case acts only on inputs with more than $2^{56}$
clauses, for every problem and every competitor.
\end{proposition}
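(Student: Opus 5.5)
The plan is to read the gain condition $e_1<e_0$ as a lower bound on $K_0$ and on the group size, and then push it through $K_1=(s_0K_0)^2$. By \eqref{eq:p2-rec}, $e_1=(\rho+3\zeta_0)e_0+3/K_0$. So $e_1<e_0$ is equivalent to
\[
  3/K_0<(1-\rho-3\zeta_0)\,e_0 ,
\]
which forces two facts. First, $\zeta_0<(1-\rho)/3=(1-1/\sqrt2)/3\approx0.0976$. Second, since $e_0\le1/2$,
\[
  K_0>\frac{6}{1-\rho-3\zeta_0}\ \ge\ \frac{6}{1-\rho}\approx 20.5 ,
\]
so $K_0\ge2^5$ (as $K_0=2^{x_0}$). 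More usefully, the two facts trade off: a small $K_0$ needs a small $\zeta_0$, and a large $\zeta_0$ needs a large $K_0$.

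Next I bring in (S1), $2\zeta_0^2s_0K_0\ge K_0\ln(2s_0K_0^2)$, which reads $s_0\ge\ln(2s_0K_0^2)/(2\zeta_0^2)$. Because $\ln(2s_0K_0^2)\ge\ln(2K_0^2)$, this gives $s_0\ge\ln(2K_0^2)/(2\zeta_0^2)$. Substituting,
\[
  g_0=s_0K_0\ \ge\ \frac{K_0\ln(2K_0^2)}{2\zeta_0^2}, \qquad \zeta_0<\zeta^*(K_0):=\frac{1-\rho-6/K_0}{3}
\]
(the second from the gain condition with $e_0\le1/2$). Hence
\[
  g_0\ \ge\ f(K_0):=\frac{K_0\ln(2K_0^2)}{2\zeta^*(K_0)^2},
\]
and the task reduces to showing $f(K_0)>2^{14}$ for every admissible $K_0$, i.e.\ every power of two with $K_0\ge2^5$. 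This is the step I expect to be the main obstacle. It needs a monotonicity argument in $K_0$ plus a check at the smallest value. Since $f$ is increasing in $K_0$ (both $K_0\ln(2K_0^2)$ and $1/\zeta^{*2}$ increase in $K_0$ until $\zeta^*$ saturates, and even then the first factor grows), it suffices to evaluate at $K_0=32$. There $\zeta^*=(0.2929-0.1875)/3\approx0.0351$, $\ln(2048)\approx7.62$, and
\[
  f(32)\approx\frac{32\cdot7.62}{2\cdot0.00123}\approx10^5>2^{16} .
\]
This gives $K_1=g_0^2>2^{32}>2^{28}$ with room to spare. If instead $f$ dips at some $K_0$, I would tabulate $f$ over $x_0=5,\dots,12$, where $\zeta^*$ is close to its limit $0.0976$, and use the crude bound $f(K_0)\ge K_0\ln(2K_0^2)/0.019$ for larger $x_0$. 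That bound exceeds $2^{14}$ once $K_0\ge2^{5}$.

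The remaining task is to handle $e_0<1/2$ (a better base case, e.g.\ the tree of Lemma~\ref{lem:p2-tree}). Smaller $e_0$ only tightens the gain condition, so the bound $\zeta_0<(1-\rho-3/(K_0e_0))/3\le\zeta^*$ still holds and the argument is monotone in $e_0$. Finally, the consequence follows directly. Any level beyond the base has $K_\ell\ge K_1>2^{28}$, since $K$ grows along the recursion. A top level requiring $m\ge K^2$ then needs $m>2^{56}$, independent of the problem, bar or competitor, because the bar entered only through the requirement that the first level gain.
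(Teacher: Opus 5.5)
\paragraph{Review.} Your skeleton matches the paper's: the gain condition bounds $\zeta_0$ and $K_0$, then (S1) bounds $s_0$, and finally $K_1=(s_0K_0)^2$. However, the step you flagged as the main obstacle is justified incorrectly.

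\emph{The monotonicity claim.} $\zeta^*(K_0)=(1-\rho-6/K_0)/3$ \emph{increases} with $K_0$, so $1/\zeta^*(K_0)^2$ decreases, and $f$ is not monotone. Concretely, $f(32)\approx9.9\cdot10^4$ but $f(64)\approx6.5\cdot10^4$, just below $2^{16}$. So evaluating at $K_0=32$ does not suffice, and $K_1>2^{32}$ is not what your bound delivers.

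\emph{The fallback.} It is what actually proves the statement. Tabulating $f$ for $x_0=5,\dots,12$ gives the minimum $f(64)\approx6.5\cdot10^4>2^{14}$. The uncoupled bound $K_0\ln(2K_0^2)/0.0191$ then covers all larger $x_0$, since it depends on $K_0$ only through the increasing factor $K_0\ln(2K_0^2)$. One correction: that crude bound exceeds $2^{14}$ only from $K_0=64$ on, not from $K_0=2^5$ as you write. At $K_0=32$ it is about $1.28\cdot10^4<2^{14}$, so at $K_0=32$ you genuinely need the coupled value. With the case split made explicit ($K_0=32$ coupled, $K_0\ge64$ uncoupled) you get $g_0>2^{14}$ in all cases, hence $K_1>2^{28}$. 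Your deduction of the $2^{56}$ consequence from $K_{\ell+1}\ge K_\ell^2$ is then fine.

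\emph{Comparison with the paper.} The case split is forced by discarding $\ln s_0$ when you bounded $\ln(2s_0K_0^2)\ge\ln(2K_0^2)$. The paper keeps that term and uses only the uncoupled bounds $\zeta_0<(1-\rho)/3<0.0977$ and $K_0\ge32$. Then (S1) reads $s_0>52.3(\ln2048+\ln s_0)$, which forces $s_0>740$ uniformly in $K_0$ (a larger $K_0$ only raises the right side). This gives $\log_2K_1=2(x_0+\log_2s_0)>2(5+9.5)=29$ with no case analysis and no tabulation. Coupling $\zeta_0$ to $K_0$ through the gain condition buys a sharper bound at $K_0=32$, but the statement does not need it, and it costs you the monotonicity that the paper's bound has for free.
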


\begin{proof}
$e_1=(\rho+3\zeta_0)e_0+3/K_0<e_0$ needs $\rho+3\zeta_0<1$, so $\zeta_0<(1-\rho)/3<0.0977$ and $1/(2\zeta_0^2)>52.3$; and it
needs $3/K_0<(1-\rho-3\zeta_0)e_0$, i.e.\ $K_0>3/((1-\rho-3\zeta_0)e_0)\ge6/(1-\rho)>20$, so
$K_0\ge32$ as $K_0$ is a power of two. By (S1), $s_0\ge\ln(2s_0K_0^2)/(2\zeta_0^2)>52.3\,(\ln2048+\ln s_0)$, which
forces $s_0>740$, i.e.\ $\log_2s_0>9.5$. Hence $\log_2K_1=2(x_0+\log_2s_0)>2(5+9.5)=29$.
\end{proof}

The floor does not depend on 3-SAT or its bar: it says that no problem, however weak its best known algorithm, is
helped by the recursion below $2^{56}$ bits of input. In the computed optimal schedules the first level has
$\log_2K_1=38.5$ (brute-force base, $x_0=7$) and $40.7$ (tree base, $x_0=8$). %
A formula encoding with $L$ bits per variable needs $e<1/(2L)$, i.e.\ about $\log(2L)/\log(1/\rho)$ levels with
$\log_2K$ growing like $27\cdot2^\ell$, which for 3-SAT ($L=6\Delta$) is five to seven levels.

\paragraph{The polynomial-method algorithm.}
OpenAI's algorithm~\cite{openai} reaches the same bound $2^{O(n/\sqrt{\log n})}$ at fixed density by a different
method, with no tower of clique sizes. Its main-text route (rational interpolation) has a per-variable exponent free of
$n$, but by the tests of equation~(2.19) of~\cite{openai} it is exhaustive search below $2^{1.09\cdot10^{10}}$
variables, at every density, as Szeider~\cite{szeider} computed; so $\log_2\pIInstar\ge1.09\cdot10^{10}$ for it.
Its Appendix~D (Boolean tensor contraction) proves both bounds of its main theorem with a main term
$\log_2T\approx(s/\sqrt{\lambda})(C_wM_0/s+1)$, $\lambda=\lceil\log_2s\rceil$, $s=n+m$, so
$\log_2s^*=((C_wM_0/s+1)/\bar e)^2$; reading the constants off the text gives $C_w=18535$ bits per root wire, dominated
by $1/\sqrt\eps=362$, and thresholds $\log_2s^*\approx4.4\cdot10^{10}$ (density one) and $1.4\cdot10^{12}$ (density
$4.26$). Read literally, its parameter bounds also ask for $u\ge\lambda^{4/5}$ on every active call, which holds only
for $\lambda\ge2^{85}$, and Tulsyan~\cite{tulsyan} takes this as a hypothesis of the proof. The condition serves only
to let $2^u$ absorb fixed powers of $\lambda$, for which $u$ of a few hundred suffices, and at
$\lambda=4.4\cdot10^{10}$ the smallest active $u$ is about $5\cdot10^3$; so we set it aside, and our values for
Appendix~D are estimates in this sense.
With every improvement that we could justify, but did not prove, $C_w$ falls to $106$--$134$ and the
thresholds to $\log_2s^*\approx3.7\cdot10^5$--$5.9\cdot10^5$ and $3.7\cdot10^6$--$5.8\cdot10^6$, still $17$ to $26$
times above column (X) of Table~\ref{tab:p2-density}; our thresholds at the two densities, column (U+C) of that table,
are $417$ and $10766$. So the polynomial method has the better shape, polynomial in $1/\bar e$, but a constant
$10^2$ to $10^4$ times too large, and in the general case it uses the same parity tests.

\subsection{General 3-SAT, sparsification, and the factor \texorpdfstring{$\log\log n$}{log log n}}\label{sec:p2-general}

The thresholds for general 3-SAT in Table~\ref{tab:p2-general} are larger than those for bounded density by many
orders of magnitude in $\log_2\pIInstar$. This subsection says where that factor comes from (the parity-check apparatus
of Theorem~D), shows that no function of the number of violated clauses can replace the checks
(Remark~\ref{rem:p2-binomial}) and that rank arguments cannot show the checks to be necessary
(Remark~\ref{rem:p2-rankcap}), compares the cut-off with an exact test from parameterized complexity, and identifies
the factor $\log\log n$ of the general bound with the density of the sparsifier
(Proposition~\ref{prop:p2-sparsify}).

\paragraph{Where the bits go.} For a 3-CNF on $n$ variables, the main term of Theorem~D is $\eps n+eU$ per sparsified
formula, and $U$ has two parts. The marks of the inclusion--exclusion, cut off at $\bar n\approx n/\log(2\Delta)$
violated clauses, cost about $2(k-1)n=4n$ bits, independent of $\Delta$; this is the same as the clause network at
density one. The rest is the parity-check apparatus: $M$ marked incidences, each shared along a path of $k$ tests and
each naming one of about $\log_2\Delta+\log_2\log_2(2\Delta)+O(1)$ bits' worth of incidences. So
$U/n\approx2k(M/n)(\log_2\Delta+c')+4$ with $c'\approx8$--$14$. With the paper's inequalities ($M\le14n$, crude $\Delta$)
this is about $16300$ bits per variable; counted sharply ($M\approx7.2n$, Calabro's $\Delta$, $\log_2\Delta\approx48$) about
$2500$. The target exponent $e^*$ is then $2.5\cdot10^{-5}$ resp.\ $1.6\cdot10^{-4}$, $31$ resp.\ $25$ levels, which is the gap between
Tables~\ref{tab:p2-density} and~\ref{tab:p2-general}.
Inside the paradigm (truncated inclusion--exclusion, parity checks, 2-adic cut-off) these factors have floors: at least
about $2.2n$ checks, $\log\Delta+\log\log\Delta$ bits per mark, path length $k-1$, which together give
$U/n\approx700$--$900$. %

\begin{remark}[no count-based kill]\label{rem:p2-binomial}
The checks are needed because no function of the number of violated clauses alone can do their job. Theorem~D needs a
factor $G$ with $G=1$ on satisfying assignments and $G=0$ when more than $\bar n$ of the $m$ clauses are violated.
Suppose $G=\sum_{i\le d}c_i\binom{v}{i}$ with integer $c_i$, $v=|V(\xi)|$, and $G(v)\equiv0\pmod{2^N}$ for
$v=\bar n+1,\dots,m$. Re-expanding around $\bar n+1$, $G(\bar n+1+j)=\sum_{i\le d}c'_i\binom ji$ with integer $c'_i$
(Vandermonde's identity), and the values at $j=0,\dots,m-\bar n-1$ determine $c'_0,\dots,c'_{m-\bar n-1}$ by a
unitriangular integer system; so if $d<m-\bar n$, all $c'_i\equiv0\pmod{2^N}$ and $G\equiv0$ everywhere, including
$G(0)$. Hence $d\ge m-\bar n$: there is no symmetric truncated kill, even modulo $2^N$, and a kill must use which
clauses are violated, as the parity checks do.
\end{remark}

Whether the assignments that violate more than $\bar n$ clauses can be killed with $O(n)$ shared bits is open. The
following remark shows that rank arguments cannot decide it in the negative.

\begin{remark}[rank bounds cannot force the factor $\log\Delta$]\label{rem:p2-rankcap}
Consider a network on the nodes $[g]$ in which node $a$ holds a block $X_a$ of at most $\lceil n/g\rceil$ variables:
a setting $z_a$ consists of the values $\xi_a$ of these variables and of auxiliary bits (marks), the tables $W_a$ and
the functions $f_{a,b}$ of the factorizations of $\Psi_{a,b}$ through $c_{a,b}$ bits are integer valued, as
in~\cite[Definition~2]{anthropic}. For $\xi\in\{0,1\}^n$ let $H(\xi)$ be the sum of the terms of the network's value
in which the variables are set to $\xi$, and let a \emph{representation} be a sum of $L$ such networks on the same
blocks. The networks of Step~3 of Theorem~D for $F(\Lambda)$, summed over the $L=2^{O(g^2\log n)}$ admissible profiles,
are such a representation with
\[ H(\xi)\equiv[\xi\text{ satisfies }\Phi]\pmod{2^{n+1}}\qquad\text{for every }\xi \]
once the checks are certified ($X(\Lambda)=0$): the proof of~\cite[Lemma~12]{anthropic} works term by term in $\xi$,
and~\cite[(63)]{anthropic} holds term by term because the terms of $X(\Lambda)$ are nonnegative. Its total is
$U=O(kn\log\Delta)+2g^2$~\cite[(68)]{anthropic}, with $\log\Delta=O(k\log\log n)$ after sparsification.

\emph{(a) Upper bounds on ranks (proved).} For a partition $\mathcal P=\{P_1,\dots,P_t\}$ of $[g]$ into $t\ge2$ groups
let $c(\mathcal P)$ be the sum of $c_{a,b}$ over the pairs $\{a,b\}$ in different groups (the largest such sum among
the $L$ networks), and view a function of
$\xi$ as a $t$-party tensor whose $i$-th index is the restriction of $\xi$ to $X_{P_i}=\bigcup_{a\in P_i}X_a$. Then the
tensor rank of $H$ over $\mathbb Q$ is at most $L\,2^{c(\mathcal P)}$, and so is the tensor rank of
$[\xi\text{ satisfies }\Phi]$ over $\mathbb F_2$. Indeed, write each $\Psi_{a,b}$ as the sum over
$\theta_{a,b}\in\{0,1\}^{c_{a,b}}$ of $f_{a,b}(z_a,\theta_{a,b})f_{b,a}(z_b,\theta_{a,b})$; for fixed $\theta_{a,b}$ on
the pairs between groups, the sum over the auxiliary bits and over the $\theta_{a,b}$ inside the groups is a product of
$t$ integer functions, the $i$-th depending only on $\xi$ restricted to $X_{P_i}$; reducing modulo $2$ gives the
statement over $\mathbb F_2$. For $t=2$ this is the flattening (cut) bound, for $t=g$ (singletons) it reads
$\rk\le L\,2^{U/2}$, since $U$ counts every shared bit from both ends.

\emph{(b) The caps (proved, elementary).} Every $t$-party tensor whose $i$-th index takes $2^{x_i}$ values has
tensor rank at most $2^{x_1+\dots+x_t-\max_ix_i}$ over every field (fix all indices but the largest one). So the
logarithm of every quantity that is bounded by the tensor rank of the grouping $\mathcal P$ (a flattening rank, border
rank, slice rank, and over $\mathbb C$ the quantum functionals of~\cite{cvz19}) is at most
$\mathrm{cap}(\mathcal P)=n-\max_i|X_{P_i}|$, whichever integer function $H$ is used; for $t=g$ the cap is at most
$n-n/g$. Even a tight tensor-rank lower bound therefore gives no more than $U\ge2(n-n/g-\log_2L)$.

\emph{(c) All caps at once (proved).} Let $c_{a,b}=w=\lceil2\lceil n/g\rceil/g\rceil$ for all pairs. For a partition
with largest group of $s$ nodes, at least $g(g-s)/2$ pairs lie in different groups, so
$c(\mathcal P)\ge wg(g-s)/2\ge\lceil n/g\rceil(g-s)\ge\mathrm{cap}(\mathcal P)$, as the groups other than a largest one
hold at most $\lceil n/g\rceil(g-s)$ variables. This choice has $U=wg(g-1)\le2n+2g+g^2$ and loads
$\sum_bc_{a,b}\le2\lceil n/g\rceil+g$. Hence any lower bound on $U$ that is derived only from inequalities
$c(\mathcal P)\ge\log_2\rho(\mathcal P)-\log_2L$, with $\rho(\mathcal P)$ bounded by the tensor rank of the grouping, is
at most $2n+2g+g^2$, which in Theorem~D ($g=3K^8$, $g^2\le9n^{16/17}$) is $2n+o(n)$, against $U=O(kn\log\Delta)$ in the
construction, even under the balanced loads that Theorem~A requires.

\emph{(d) What is left (argued).} Rank arguments, for any grouping of the nodes, thus cannot show that a kill needs
more than about $2n$ shared bits, and in particular they cannot show that the factor $\log\Delta$, i.e.\ the factor
$\log\log n$ that~\cite{anthropic} states it does not know how to remove, is necessary. A lower bound that forces it
would have to use the graph and the bond dimensions together, for instance by showing that no integer function
congruent to $[\xi\text{ satisfies }\Phi]$ modulo $2^{n+1}$ is the value of a network on the complete graph with
$O(n/g^2)$ bits per pair; such bond dimensions for a fixed graph are the graph-indexed ranks of Ye and
Lim~\cite{yl18}. We have not checked what lower-bound methods are known for these ranks of explicit tensors. None of
this gives a network with $O(n)$ shared bits; that question stays open.
\end{remark}

\paragraph{Comparison with a W[1] membership proof.} The cut-off of Theorem~D can be compared with the counting in
the proof that the uniform CSP parameterized by the solution size is in W[1]~\cite[Theorem~12]{maj18}. There a
nondeterministic program guesses an assignment $B$ of size $k$ and checks it with precomputed numbers indexed by small
sets. For a constraint $\mathcal C_S$ with set of satisfying assignments $C_S$, a polynomial family $\tilde C_S$ of
sets $T$ is defined; $I_T$ is the set of constraints with $T\in\tilde C_S$, and $f_{T,S}$ is the function on the
proper supersets of $T$ in $C_S$ with $\sum_{W\subseteq U}f_{T,S}(W)=1$ for every such $U$, given by M\"obius
inversion. With $l(T,W)=\sum_{\mathcal C_S\in I_T}f_{T,S}(W)$, the assignment $B$ satisfies the instance if and only if
$\sum_{T\subset W\subseteq B}l(T,W)=|I_T|$ for every $T\subseteq B$ in $\tilde C=\bigcup_S\tilde C_S$
\cite[Lemmas~10 and~11]{maj18}: a satisfied constraint contributes exactly $1$, and for some $T$ every violated
constraint in $I_T$ contributes $0$. The test is exact, with no truncation and no promise, because each constraint's
contribution is computed from its own tuples; it is cheap because tuples have size at most $k$, so the intervals of
the inversion have at most $2^k$ elements. In Theorem~D the truncation at $\bar n$ violated clauses is exact only
under the promise $|V(\xi)|\le\bar n$, a bound on the number of violated clauses of the whole formula that no node and
no small set of clauses can refute, and the parity checks verify it. We argue that this is the difference that
matters: the W[1] test compares counts for one guessed assignment, while a network sums over all $\xi$ a product of
local factors, so its evidence against the many-violation assignments must itself be such a product, which is what
the parity checks provide and what costs the factor $\log\Delta$. We know no local promise that would make the
truncated sum of Theorem~D exact and could be checked inside the node tables.

\paragraph{Sparsification as a parameter.} Theorem~D uses sparsification as a black box: a deterministic algorithm that,
for $\eps\in(0,1]$, splits a $k$-CNF on $n$ variables into at most $2^{\eps n}$ formulas, in time $2^{\eps n}n^{O(k)}$,
such that every satisfying assignment satisfies exactly one output and no other assignment satisfies any, and every
variable occurs in at most $\Delta(\eps)$ clauses of each output. We call such an algorithm a \emph{counting sparsifier
with density $\Delta(\eps)$}. The best known is Calabro's~\cite[Thm.~5.15, (5.4)]{calabrothesis} with the change of
Dell et al.~\cite[App.~A]{dhmtw}, with $\Delta(\eps)\le2(2+8192\eps^{-2}\log_2^2(128/\eps))+2$ for $k=3$; we found no
later improvement and no lower bound on $\Delta(\eps)$ that grows as $\eps\to0$, and Calabro lists the removal of the
$\eps$-dependence as open~\cite[\S5.5]{calabrothesis}. %
In the following, $(K_\ell,e_\ell)$ are the pairs of~\cite[(40)]{anthropic} with the pieces of~\cite[App.~B]{anthropic},
so $e_\ell=\Theta((\log K_\ell)^{-1/2})$.

\begin{proposition}[the factor $\log\log n$ is $\log\Delta$]\label{prop:p2-sparsify}
Let $\mathcal S$ be a counting sparsifier with density $\Delta(\eps)$. Replacing Calabro's sparsifier by $\mathcal S$
in Step~5 of the proof of~\cite[Theorem~D(a)]{anthropic}, the algorithm counts the satisfying assignments of a $k$-CNF
on $n$ variables, $n\ge K_0^{17}$, $3\le k\le(\log n)^{1/4}$, in time
\[ 2^{O(ekn\log\bar\Delta)},\qquad \bar\Delta=\max\{2,\Delta(\eps)\},\ \eps\in[e,2e], \]
where $e=e_\ell$ for the largest $\ell$ with $K_\ell\le n^{1/17}$, provided $\bar\Delta\le(\log n)^{4kC_0}$. In
particular, if $\Delta(\eps)\le D$ for all $\eps$ with a constant $D$, then the time is $2^{O(kn\log D/\sqrt{\log n})}$,
without the factor $\log\log n$ of~\cite[Theorem~D(b)]{anthropic}.
\end{proposition}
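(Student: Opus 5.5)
The proof re-reads the proof of~\cite[Theorem~D]{anthropic} and tracks exactly where the sparsifier and its density $\Delta(\eps)$ enter. The plan is to check that Calabro's sparsifier is used only through the counting-sparsifier interface: at most $2^{\eps n}$ outputs, running time $2^{\eps n}n^{O(k)}$, partition of the satisfying assignments, and occurrence bound $\Delta(\eps)$. Granting this, the replacement by $\mathcal S$ is legitimate, and the correctness argument (Steps~1--4 applied to each output formula, then summation over the outputs) is unchanged.

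The cost bound comes in three steps.

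\begin{enumerate}[label=(\roman*)]
\item \emph{The network for one sparsified formula.} Its total is $U=O(kn\log\bar\Delta)$ by~\cite[(68)]{anthropic} with $\Delta$ replaced by $\bar\Delta$. The count comes from three sources: the marks (about $2(k-1)n$ bits), the $O(n)$ parity checks, and the naming of an incidence among $O(k\bar\Delta)$ candidates, which costs $O(\log\bar\Delta)$ bits per mark. Taking the maximum with $2$ makes $\log\bar\Delta\ge1$, so the $O(kn)$ contributions are absorbed.
\item \emph{The top level.} Use case~(i) of Lemma~8 with $g=3K^8$ and $K=K_\ell\le n^{1/17}$, followed by Theorem~A(b) at level $\ell$. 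This gives $\log T\le e\,U+O(km/K+K^8\log K+C_0K^2\log(\cdot))$ per formula, as in~\cite[(69)]{anthropic}.
\item \emph{Choice of $\eps$.} Take $\eps\in[e,2e]$, so that the term $\eps n$ from the number of outputs is $O(en)$ and is absorbed into $O(ekn\log\bar\Delta)$. The interval rather than a single point is needed only if the sparsifier is defined on a grid of $\eps$.
\end{enumerate}

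The main obstacle is the lower-order terms, since the proposition's hypothesis $\bar\Delta\le(\log n)^{4kC_0}$ must be exactly what controls them.
\begin{itemize}
\item \emph{The polylogarithmic factor.} The factor $(K(L+B'))^{C_0K^2}$ contributes $C_0K^2\log n$. Here $B$ and $P$ grow with $\bar\Delta$ through the number of candidate incidences, so $\log(L+B')=O(\log n+\log\bar\Delta)$. The hypothesis then gives $\log\bar\Delta\le4kC_0\log\log n$, which keeps this term $O(K^2\log n)$.
\item \emph{Absorbing $O(K^2\log n)$.} We have $K^2\le n^{2/17}$, while $e\ge(\log n)^{-O(1)}$, so $e n$ dominates $K^2\log n$ with room to spare. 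The terms $km/K$ and $K^8\log K$ behave the same way, once the number of clauses $m\le\bar\Delta n$ of a sparsified formula is bounded.
\item \emph{Where $k\le(\log n)^{1/4}$ enters.} I will verify that the bound $m\le\bar\Delta n$ is still $ekn\log\bar\Delta$-absorbable. Specifically, $km/K\le k\bar\Delta n/K$, and I need $\bar\Delta/K=o(e\log\bar\Delta)$. This holds because $\bar\Delta$ is polylogarithmic while $K$ is polynomial in $n$, and the restriction on $k$ keeps the factor $k$ polylogarithmic as well.
\end{itemize}

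The ``in particular'' clause follows at once. If $\Delta(\eps)\le D$ for all $\eps$, then $\bar\Delta\le\max\{2,D\}$ is a constant, and the hypothesis $\bar\Delta\le(\log n)^{4kC_0}$ holds for large $n$. Moreover $e=e_\ell=\Theta((\log K_\ell)^{-1/2})$ with $\log K_\ell=\Theta(\log n)$, because $K_{\ell+1}=K_\ell^{2+o(1)}$ under the schedule~(40), so $K_\ell\ge n^{1/(17\cdot(2+o(1)))}$. Hence $2^{O(ekn\log D)}=2^{O(kn\log D/\sqrt{\log n})}$. By contrast, Calabro's density $\Delta(\eps)=\Theta(\eps^{-2}\log^2(1/\eps))$ evaluated at $\eps\approx e$ has $\log\bar\Delta=\Theta(\log\log n)$, which recovers the paper's factor $\log\log n$.
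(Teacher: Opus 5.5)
Your proposal is correct and follows essentially the paper's route. The proof of Theorem~D uses Calabro's sparsifier only through the properties of a counting sparsifier, and Steps~1--4 give $2^{O(ekn\log\Delta)}$ per output formula for every $\Delta\in[2,(\log n)^{4kC_0}]$. Calabro's explicit density bound, together with $e\ge(\log n)^{-C_0}$, served only to place $\Delta$ in that range and to give $\log\Delta=O(k\log\log n)$, so the hypothesis $\bar\Delta\le(\log n)^{4kC_0}$ takes its place.

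The only difference is that the paper cites Steps~1--4 for the per-formula bound, where you re-derive the lower-order terms. In that re-derivation, for $k$ up to $(\log n)^{1/4}$ the hypothesis gives only $\bar\Delta=n^{o(1)}$, not a polylogarithmic bound. Also, the factor $k$ cancels between $km/K$ and $ekn\log\bar\Delta$, so bounding $k$ is not what is needed there. The bound $n^{o(1)}$ still suffices for $\bar\Delta/K=o(e\log\bar\Delta)$, so your conclusion stands.
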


\begin{proof}
In the proof of Theorem~D(a), Steps~1--4 treat one formula in which every variable occurs in at most $\Delta$ clauses,
$2\le\Delta\le(\log n)^{4kC_0}$, and show that its count takes $2^{O(ekn\log\Delta)}$ operations (l.~2356--2380). Step~5
applies the sparsifier with $\eps\in[e,2e]$ and runs Steps~2--4 on each of the at most $2^{2en}$ outputs, which gives
$2^{2en}n^{O(k)}2^{O(ekn\log\bar\Delta)}$; the bound $\log\Delta\le3k\log(8k/e)+2$ and $e\ge(\log n)^{-C_0}$ are used
there only to show that $\log\Delta=O(k\log\log n)$ and that $\Delta$ is within the range of Steps~1--4 (l.~2380--2411).
With $\mathcal S$ in place of Calabro's algorithm the same argument gives the displayed time, using only the three
properties of a counting sparsifier and the range condition. For $\Delta\le D$ and $e=\Theta((\log n)^{-1/2})$
(as $\log K_\ell=\Theta(\log n)$ by Theorem~A(a)) the time is $2^{O(kn\log D/\sqrt{\log n})}$.
\end{proof}

So the factor $\log\log n$ of the general bound is $\log\Delta(e)$ of the sparsifier evaluated at the target exponent,
and it disappears under an $\eps$-independent density. It would also disappear if the assignments violating more than
$\bar n$ clauses could be eliminated with $O(n)$ shared bits, since the remaining marks cost $O(kn)$ bits; this second
route we only argue. Quantitatively the parameter is modest: Theorem~D's own accounting floors $\Delta$ at $2$, and even
$\Delta=2$ leaves $U/n\approx210$ and divides $\log_2\pIInstar$ by at most $32$ at $\rho=1/\sqrt2$ (row G7:
$5.0\cdot10^8\to1.6\cdot10^7$), with $\Delta=1/\eps$ already giving $6.2\cdot10^7$. A sharper analysis of Calabro's own
lemmas lowers $\log_2\Delta$ by about $4.7$ bits but does not change the number of levels.
The factor $\log\log n$ of~\cite{anthropic} is stated there as one the authors do not know how to remove; the
reformulation above names the two places where it could go.

\section{Limits of rewrite-and-merge}\label{sec:part3}

\newcommand{\pIIIRMbh}{\RM_2^{\mathrm{bh}}}
\newcommand{\pIIIcost}{\operatorname{c}}
\newcommand{\pIIIF}{F^{\theta}}

Sections~\ref{sec:part1} and~\ref{sec:part2} looked at the model of Signed Network Sums from its two ends: at a fixed small number $K$ of nodes it
goes below the floors of rank-one methods and contraction trees, and at astronomically large $K$ it gives the
subexponential \#3-SAT algorithm of \cite{anthropic}, whose explicit thresholds Section~\ref{sec:part2} computes. This section explains
why the two ends are separated, inside the paradigm on which the large end rests: rewrite a few configurations of
equality tests by an exact identity, and evaluate every term of the identity by merging groups of nodes into the
nodes of a smaller clique sum. The organising fact is a dichotomy (Section~\ref{sec:p3-dichotomy}): what decides
whether such an algorithm has a realistic threshold is not any of the constants that Section~\ref{sec:part2} improves, but whether
the exponent $\eK(K)$ per edge bit decays polynomially or only polylogarithmically in $K$. We then define the
narrowest class of algorithms that contains the scheme of \cite{anthropic} and all our constructions
(Section~\ref{sec:p3-class}), prove what follows from the definition (Sections~\ref{sec:p3-basic}
and~\ref{sec:p3-cells}), and prove that in the balanced homogeneous two-merge part of the class the decay is at
most polylogarithmic, $1/\eK(K)\le(\log_2K)^3$ (Section~\ref{sec:p3-thmA}). At fixed $K$ we show what size
arguments can and cannot decide (Section~\ref{sec:p3-size}). Section~\ref{sec:p3-open} lists what is open; the
one door that our arguments leave open in the full model is a configuration that meets three or more groups of a
merge.

Throughout, a statement called Theorem, Lemma or Proposition is proved here, given the class definition and the
cited facts about quantum functionals \cite{cvz19} and four-party rank inequalities \cite{scsh}; values marked
\emph{computed} come from numerical searches and are lower estimates unless said otherwise; statements marked
\emph{argued} have an argument that we do not claim as a complete proof.

\subsection{The threshold dichotomy}\label{sec:p3-dichotomy}

Before proving anything about algorithms we fix what a lower bound on $\eK(K)$ must look like to say something about
thresholds. The answer is a dichotomy: the type of decay of $\eK(K)$ in $K$, not its constants, decides whether the
threshold is astronomical.

Fix a problem and a reduction that turns an instance with $n$ variables and $m$ clauses into weighted clique sums
with $S=s\,n$ edge bits ($s$ edge bits per variable), and let $2^{\gamma n}$ be the running time of the best known
algorithm for the problem. Write $\barr=\gamma/s$ for the \emph{bar}: a clique-sum algorithm whose cost is $2^{\eK S}$
beats the known algorithm only if $\eK<\barr$. In Section~\ref{sec:part2} the number of nodes that is available at instance size
$m$ is limited by the balancing of the reduction, $K\le m^{1/q}$ with a constant $q\ge 2$ (the balancing exponent of
Section~\ref{sec:part2}). We use only the following main-term reading of Section~\ref{sec:part2}, in which all lower-order terms are dropped.

\begin{quote}
\textbf{Threshold model.} An algorithm whose exponent on $K$-node instances is $\eK(K)$ gains at size $m$ only if
$\eK(K)<\barr$ for some $K$ with $q\log_2K\le\log_2 m$; up to lower-order terms, it gains at size $m$ if there is such
a $K$. The threshold $m^*$ is the least size at which it gains.
\end{quote}

\begin{proposition}[Threshold dichotomy]\label{prop:p3-dichotomy}
In the threshold model:
\begin{enumerate}[label=(\alph*)]
\item if $\eK(K)\ge C(\log_2K)^{-\alpha'}$ for all $K$, with constants $C,\alpha'>0$, then
$\log_2m^*\ge q\,(C/\barr)^{1/\alpha'}$;
\item if $\eK(K)\le CK^{-\beta}$ for all $K$, with constants $C,\beta>0$, then, up to the lower-order terms dropped
by the model, $\log_2 m^*\le (q/\beta)\log_2(C/\barr)+q$.
\end{enumerate}
\end{proposition}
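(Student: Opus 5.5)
Both parts follow by unwinding the threshold model, which reduces everything to a statement about a single $K$: the algorithm gains at size $m$ exactly when some $K$ with $q\log_2K\le\log_2m$ has $\eK(K)<\barr$. So $\log_2m^*$ is, up to rounding, $q\log_2K^*$, where $K^*$ is the least admissible $K$ with $\eK(K^*)<\barr$. Part (a) needs a lower bound on $\log_2K^*$ and part (b) an upper bound. No fact about algorithms enters; only the monotone inequalities in the hypotheses are used.

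For (a), suppose the algorithm gains at size $m$. The model gives a $K$ with $q\log_2K\le\log_2m$ and $\eK(K)<\barr$. The hypothesis then gives $C(\log_2K)^{-\alpha'}\le\eK(K)<\barr$. Since $\alpha'>0$, this rearranges to $\log_2K>(C/\barr)^{1/\alpha'}$. Hence $\log_2m\ge q\log_2K>q(C/\barr)^{1/\alpha'}$. This holds for every $m$ at which the algorithm gains, in particular for $m^*$. The only care needed is at $K\le1$, where $\log_2K\le0$ and the hypothesis is vacuous or meaningless; clique sums have $K\ge3$, so $\log_2K>0$.

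For (b), we exhibit a good $K$. Take the least integer $K$ with $CK^{-\beta}<\barr$, that is, with $K>(C/\barr)^{1/\beta}$. Then $\log_2K\le\log_2\bigl((C/\barr)^{1/\beta}\bigr)+1=(1/\beta)\log_2(C/\barr)+1$, since the least integer above a real $x\ge1$ is at most $2x$. If $C/\barr<1$, every $K\ge1$ already works, and the bound reads with $\log_2K$ at most the constant $\log_2 3$ from $K\ge3$; this is absorbed in the lower-order terms the model drops. For this $K$, $\eK(K)\le CK^{-\beta}<\barr$. Choose $m$ with $\log_2m=q\log_2K$, rounded up to an integer size. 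By the second clause of the threshold model (``up to lower-order terms, it gains at size $m$ if there is such a $K$''), the algorithm gains at this $m$. So $\log_2m^*\le q\log_2K\le(q/\beta)\log_2(C/\barr)+q$.

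I expect no real obstacle. The one point to state carefully is that (b) uses the sufficiency direction of the threshold model, which holds only up to the dropped lower-order terms. That is why the statement carries the qualifier in (b) and not in (a). Part (a) uses only the necessity direction, which is exact in the model.
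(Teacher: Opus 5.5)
Your proof is correct and is essentially the paper's argument: (a) is the same rearrangement, and in (b) the paper takes $K=\lceil (C/\barr)^{1/\beta}\rceil+1$ where you take $\lfloor (C/\barr)^{1/\beta}\rfloor+1$. Your choice weakens the paper's side condition $(C/\barr)^{1/\beta}\ge2$ to $C/\barr\ge1$, but the regime $C/\barr<1$ is not ``absorbed in lower-order terms'', since the right-hand side of (b) becomes negative for small $C/\barr$; like the paper, simply restrict (b) to $C/\barr\ge1$, which is the only regime that matters.
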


\begin{proof}
(a) A gain at size $m$ needs a $K$ with $C(\log_2K)^{-\alpha'}\le\eK(K)<\barr$, i.e.\ $\log_2K>(C/\barr)^{1/\alpha'}$,
and $\log_2 m\ge q\log_2K$. (b) Take $K=\lceil (C/\barr)^{1/\beta}\rceil+1$; then $\eK(K)\le CK^{-\beta}<\barr$ and
$q\log_2K\le (q/\beta)\log_2(C/\barr)+q$ when $(C/\barr)^{1/\beta}\ge2$.
\end{proof}

So a polylogarithmic decay makes $\log m^*$ grow like a power of $1/\barr$, and a polynomial decay makes it grow
like $\log(1/\barr)$. The difference is large exactly where the bar is small. For general 3-SAT through the reduction of
Section~\ref{sec:part2}, $\gamma=0.4092$ and about $1270$ edge bits per variable (Section~\ref{sec:part2} counts both ends of an edge, $U/n\approx2540$)
give $\barr\approx3.2\cdot10^{-4}$ per edge bit. This bar rests on the sharp count, in Section~\ref{sec:part2}, of the bits of Theorem~D of
\cite{anthropic}, an estimate (Table~\ref{tab:p2-general}); a cheaper reduction would raise it.
With $C=1$ and $q=2$, (a) gives $\log_2m^*\ge 6.2\cdot10^3$ for $\alpha'=1$ and
$\ge1.9\cdot10^7$ for $\alpha'=1/2$, while (b) gives $\log_2 m^*\le 234$ for $\beta=0.1$. The case $C=1$, $\alpha'=1$
is exactly what Theorem~\ref{thm:p3-grid} below proves for grid algorithms (Corollary~\ref{cor:p3-grid-threshold});
the other two values only show the shape.
With the actual constants of the recursion of \cite{anthropic} (where $\alpha'=1/2$ for the pieces of its
Appendix~B), the thresholds of Section~\ref{sec:part2} are larger still. The constants that Section~\ref{sec:part2} improves (the bar through
sparsification and the parity kill, the per-level overheads, the balancing exponent $q$) move $C$, $\barr$ and $q$,
but not the type of decay. A method ``subtler than the recursion'' therefore means one thing precisely: an exponent
that decays polynomially in $K$. A recursion with constant growth $K_{l+1}=gK_l$ and a constant per-level factor
$\rho$ would be such a method, with $\beta=\log(1/\rho)/\log g$; the squaring $K_{l+1}=K_l^2$ of \cite{anthropic}
gives only polylogarithmic decay. The rest of this section shows that the squaring is forced for grid levels and for balanced homogeneous two-merge
levels of rewrite-and-merge.

\subsection{The rewrite-and-merge class}\label{sec:p3-class}

A lower bound needs a model. We want the narrowest one that still contains every algorithm of this kind that we know,
so that its lower bounds apply to them, and we want its cost to charge no more than \cite{anthropic} charges, so
that the bounds also apply to that paper's own accounting. The definition below abstracts one level of the
recursion: rewrite configurations of equality tests by exact identities, then evaluate every term by merging groups
of nodes. Three remarks after it say what the class contains, why balance is needed, and where it stops.

\paragraph{Instances.} A \emph{weighted clique sum} has nodes $[K]$, on each edge $e$ a string of $w_e\lambda$ bits
($w_e>0$), and at each node $a$ a table $W_a$, an arbitrary function of $a$'s copies of the strings on its edges; its
value is the sum over all copies of the product of the $W_a$ and of the equality tests $[\mathrm{eq}_e]$ between the
two copies of each edge string. Its size is $S=\lambda\sum_ew_e$. The \emph{exponent} of an algorithm is
$\limsup_{\lambda\to\infty}\sup_W \log_2(\text{time})/S$, and $\eK_{\mathcal C}(K)$ is the infimum of the exponents
over the algorithms of a class $\mathcal C$ on uniform instances ($w_e=1$, $S=\lambda\binom K2$). Every algorithm
reads a node table, and the largest one is indexed by $(K-1)\lambda=2S/K$ bits; so $\eK(K)\ge 2/K$, the \emph{input floor}.

\begin{definition}[Rewrite-and-merge level]\label{def:p3-level}
A level of a rewrite-and-merge algorithm on a weighted clique sum is given by the following data.
\begin{enumerate}[label=(L\arabic*)]
\item \emph{Configurations.} A partition of the edge set into configurations $C$, with edge set $E(C)$ and node set
$V(C)$; $f_C$ is the product of the equality tests of $E(C)$, a tensor whose legs are the nodes of $V(C)$ (the leg of
$x$ is $x$'s copies of the strings of $E(C)$). A single edge is a configuration.
\item \emph{Block identities.} Every string is cut into blocks of $m=m(\lambda)$ bits, $m\to\infty$ or $m=\lambda$.
For each configuration $C$ and block there is an exact identity $f_C|_{\text{block}}=\sum_{i\in I_C}\pi_{C,i}$ (up to a
nonzero scalar), with $|I_C|\le 2^{o(m)}$, where each piece $\pi_{C,i}$ is a tensor with the same legs. The level's
identity is the product of these sums over all (configuration, block) pairs; a \emph{term} $Q$ chooses one piece for
each pair, and every choice is a term (product identities).
\item \emph{Merges.} Partitions $P_1,\dots,P_p$ of $[K]$. Merge $k$ is usable only if every configuration meets at
most two groups of $P_k$.
\item \emph{Evaluation.} Each term $Q$ is evaluated under one usable merge $k(Q)$ of the algorithm's choice. A piece
of a configuration that meets two groups $G,G'$ of $P_k$ is written by a fixed rank decomposition across
$(V(C)\cap G\mid V(C)\cap G')$; its index becomes part of the merged edge $(G,G')$, and its \emph{cost}
$\pIIIcost_k(\pi)$ is the $\log_2$ of that rank. Pieces inside one group are multiplied into that group's table. The
merged instance is the weighted clique sum on the groups of $P_k$ whose edge $(G,G')$ carries the sum of the costs of
the pieces on it.
\item \emph{Cost.} The cost of the level is the sum over terms of the cost of building the merged node tables plus
the cost of solving the merged instance. Building a table costs at least its number of entries; no other charge for
building is made. The merged instance is solved by an algorithm of the class or by a base method, and the cost of
an algorithm of the class depends only on the weights of its instance.
\end{enumerate}
$\RM_p$ is the class of algorithms all of whose levels have at most $p$ merges. A partition is \emph{balanced} if all
its groups have the same size and it has between $2$ and $K/2$ groups. A level is \emph{homogeneous} if for each
merge $P_k$ every pair of groups of $P_k$ carries the same multiset of configuration types (a type being the
configuration together with the cells $G\cap H$, $G\in P_1$, $H\in P_2$, in which its nodes lie), and each type
occurs on each pair in a number of (configuration, block) incidences that tends to infinity with $\lambda$.
$\pIIIRMbh$ denotes the algorithms of $\RM_2$ all of whose levels are balanced and homogeneous.
\end{definition}

The cost in (L4) is the paper's ``costs $c$ bits'' of \cite[Sec.~2.2]{anthropic}: a function costs $c$ bits if it
is a sum over $\theta\in\{0,1\}^c$ of products of a function of each side, and the least such $c$ is at least the
$\log_2$ of the rank. So the class charges at most what the paper charges.

\begin{remark}[What the class contains]\label{rem:p3-contains}
The level of \cite{anthropic} is a $\pIIIRMbh$ level: the nodes form a $g\times g$ grid, the configurations are the
rectangles (two diagonal edges $[x=v][y=u]$, every edge whose endpoints differ in both row and column being a
diagonal of exactly one rectangle) and the edges within a row or column, the merges are rows and columns, there is
one block per string ($m=\lambda$) and $|I|=\mathrm{poly}(\lambda)$, and every pair of rows carries $g$ straight edges
and $g(g-1)/2$ rectangles. The homogeneity clause (many configurations of a type per pair of groups) holds as
$g\to\infty$, or at fixed $g$ in the per-block form; at fixed $g$ with one block per string the paper's level is
outside the clause as stated. One difference in accounting: the paper keeps the internal strings of a merged node as
settings and sums them by brute force per table entry, which costs at least one operation per entry, so the paper's
cost is at least the class's cost and every lower bound for the class applies to the paper's algorithm. Our
constructions of Section~\ref{sec:part1} at $K=4,6,9,16$ are balanced homogeneous two-merge levels with a tree or a $K=4$
method on top. The algorithm of Appendix~D of \cite{openai} uses the polynomial method and is not in the class.
\end{remark}

The balance condition in the definition is not a convenience: without it the size charge of (L5) can be gamed.

\begin{remark}[Balance is necessary]\label{rem:p3-balance}
Without balance the class bound fails. Take $P_1=\{[K]\setminus\{u\},\{u\}\}$. The group table of $K-1$ nodes has
$2^{(K-1)\lambda}=2^{2S/K}$ entries, and the merged instance has two nodes and $2S/K$ bits, so the modelled cost is
$2^{2S/K}$ and $\eK_{\RM_2}(K)\le 2/K$, the input floor. The true cost of building that table is the cost of a
$(K-1)$-node clique sum with external legs, which a size charge does not see. Balance excludes this; a stronger
building charge for large groups would be the alternative.
\end{remark}

The other restriction, (L3), limits which configurations a merge can use; it is the one that matters for what the
class leaves open.

\begin{remark}[Higher-arity pieces are inside the class; hyperedges are not]\label{rem:p3-hyper}
(L1) allows configurations of any size, subject only to (L3). So pieces of high arity are in the class as long as
their configuration meets at most two groups of each merge. What (L3) excludes is a configuration that meets three
or more groups of a merge, which would become a hyperedge of the merged instance. In particular, summing the terms
of a product set of block choices gives coarser pieces of a coarser configuration inside the class, while summing
terms over a set of choices that is not a product set gives a tensor on several blocks' legs, i.e.\ a piece of a
configuration that may meet three groups.
\end{remark}

\subsection{Consequences of the definition}\label{sec:p3-basic}

This subsection collects four facts that follow directly from the definition: every identity has a piece that is
expensive across a given split (the pigeonhole floor), sharing work across terms cannot go below the sizes of the
tables (the sharing cap), a configuration can at best split its cost evenly between two merges (the two-merge floor),
and few edges lie in configurations that could trade at all (the count of four-cell edges).

Let $C$ be a configuration and $(A\mid V(C)\setminus A)$ a split of its nodes across which $r$ edges of $C$ cross.
Then $f_C|_{\text{block}}$ has rank $2^{rm}$ across the split, since it is the identity matrix on the crossing strings
tensored with factors on one side.

\begin{proposition}[Pigeonhole floor]\label{prop:p3-pigeon}
In every block identity of (L2), some piece has $\log_2$-rank at least $rm-\log_2|I_C|$ across the split.
\end{proposition}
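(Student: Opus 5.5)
The plan is to use subadditivity of matrix rank across the fixed split, together with the rank of the left-hand side just computed before the statement. Fix the configuration $C$, a block, and the split $(A\mid V(C)\setminus A)$, and flatten every tensor with legs $V(C)$ into a matrix whose rows are indexed by the legs in $A$ and whose columns by the rest. Flattening is linear, so the block identity of (L2), $f_C|_{\text{block}}=\beta^{-1}\sum_{i\in I_C}\pi_{C,i}$ for some nonzero scalar $\beta$, becomes an identity of matrices of the same form.

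First I would record the rank of the left-hand side. Only the bits of the current block enter, so the $r$ crossing edges contribute $rm$ bits. The flattening of $f_C|_{\text{block}}$ is therefore the identity matrix on $2^{rm}$ crossing values, tensored with factors that live on one side only: equality tests of edges inside $A$ or inside its complement, and parts of the tensor that do not depend on the block. Each such factor has rank one across the split. Since rank is multiplicative under Kronecker products, the flattening has rank exactly $2^{rm}$, and multiplying by the nonzero scalar $\beta$ does not change this.

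Next I would apply subadditivity. The rank of a sum is at most the sum of the ranks, so
\[
  2^{rm}\;=\;\rk\Bigl(\sum_{i\in I_C}\pi_{C,i}\Bigr)\;\le\;\sum_{i\in I_C}\rk\pi_{C,i}\;\le\;|I_C|\,\max_i\rk\pi_{C,i}.
\]
Hence some piece has $\log_2$-rank at least $rm-\log_2|I_C|$ across the split.

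The only step needing care is the exact rank claim for $f_C|_{\text{block}}$. Two points must be checked: that each crossing edge is one that actually has an endpoint on each side of the split, and that multiplying by the edges lying on one side cannot lower the rank. The second point holds because those factors are $0/1$ equality tests that are nonzero somewhere. I expect no real obstacle here, since the argument is the pigeonhole version of the flattening bound in Proposition~\ref{pI:prop:ladder}(b).
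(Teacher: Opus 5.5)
Your proof is correct and follows the paper's argument. The paper likewise notes, just before the statement, that $f_C|_{\text{block}}$ has rank $2^{rm}$ across the split, because it is the identity on the crossing strings tensored with one-sided factors, and then concludes by subadditivity of rank over the $|I_C|$ pieces.
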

\begin{proof}
Rank is subadditive and the pieces sum to a tensor of rank $2^{rm}$.
\end{proof}

The class charges a term by the sizes of its merged tables. One might hope to do better by sharing computation
between terms; the next proposition shows that sharing cannot go below the table sizes.

\begin{proposition}[Sharing cap]\label{prop:p3-sharing}
Let a term $T_Q$ be evaluated in any way that is bilinear across some split $(X\mid Y)$ of the nodes, i.e.\ as
$\sum_\sigma t_X(\sigma)\,t_Y(\sigma)$ where $t_X$ depends on the tables of $X$ and $t_Y$ on those of $Y$; for
instance, by a contraction tree whose tables are materialised, computed by any algorithm and shared with other terms
in any way. Then the number of strings $\sigma$, and hence the cost of writing the table $t_X$, is at least
$2^{\sum_\pi \pIIIcost_{X}(\pi)}$, where $\pIIIcost_X(\pi)$ is the $\log_2$-rank of the piece $\pi$ across the split
induced by $(X\mid Y)$ on its legs (raw edges counted as pieces of full rank).
\end{proposition}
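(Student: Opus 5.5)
The plan is a rank argument on the flattening across $(X\mid Y)$. View the term $T_Q$ as a multilinear function of the node tables: its value is the contraction of the tables with the fixed tensor $F_Q$ that is the product of the chosen pieces $\pi$ and of the raw equality tests. Group the table legs of the nodes in $X$ against those of $Y$. An evaluation $\sum_\sigma t_X(\sigma)\,t_Y(\sigma)$ that is bilinear across the split, with $t_X$ a function of the tables of $X$ and $t_Y$ of those of $Y$, must hold for all tables. Restricting it to tables that are product functions (indicators of single settings) yields a factorisation of the flattening of $F_Q$ across $(X\mid Y)$ through the index $\sigma$. Hence the number of strings $\sigma$ is at least the rank of that flattening. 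If $t_X,t_Y$ are only multilinear rather than linear in the tables, I would first take the multilinear part of degree one in each table, which still factors through $\sigma$ and equals the full form.

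Next I compute this rank. $F_Q$ is a tensor product of pieces and raw edges, and these live on disjoint sets of leg copies. Under the flattening across $(X\mid Y)$, each factor $\pi$ becomes a matrix whose rows are its legs at nodes of $X$ and whose columns are its legs at nodes of $Y$, that is, the split induced by $(X\mid Y)$ on $V(C)$. Factors entirely on one side are vectors of rank one, provided they are nonzero; a term with a zero factor is zero and costs nothing. A raw edge across the cut is an identity matrix of full rank $2^{w_e\lambda}$. Since rank is multiplicative under Kronecker products, as in the proof of Proposition~\ref{pI:prop:ladder}(b), the flattening rank is $\prod_\pi 2^{\pIIIcost_X(\pi)}$, which gives the exponent $\sum_\pi\pIIIcost_X(\pi)$.

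Then the cost statement follows. The table $t_X$ has one entry per $\sigma$, so writing it costs at least that number, whatever algorithm computed it and however it is shared with other terms. For a materialised contraction tree, take $X$ to be the leaf set of an inner node and observe that the final pairing is exactly such a bilinear form.

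The main obstacle is the first step. The argument must say precisely that $t_X$ depends only on the tables of $X$, so that for the rank bound the entries $t_X(\sigma)$ may be regarded as arbitrary functions of them, and that restricting to product tables is legitimate. I would handle this by evaluating the identity at indicator tables and reading off the matrix identity $\mathrm{flat}(F_Q)=\sum_\sigma u_\sigma v_\sigma^{\top}$ entrywise.
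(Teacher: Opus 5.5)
Your proof is correct and is essentially the paper's argument. The term's rank across $(X\mid Y)$ is the product of the pieces' ranks because they sit on disjoint legs, and the bilinear evaluation factors the flattening through $\sigma$; you read this factorisation off at indicator tables, where the paper uses that products of node tables span all tensors on the legs. Your extra step of extracting a degree-one part of $t_X,t_Y$ is unnecessary, since evaluating at indicator tables already handles $t_X,t_Y$ of arbitrary form.
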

\begin{proof}
A term is a tensor product of its pieces on disjoint legs, so its rank across $(X\mid Y)$ is the product of the
pieces' ranks across the induced splits. Pairing the term with arbitrary node tables of $X$ and $Y$ gives a matrix,
indexed by the tables of $X$ and of $Y$, that factors through the strings $\sigma$; since products of node tables
span all tensors on the legs, its rank is the rank of the term across $(X\mid Y)$. Writing $t_X$ costs at least
its number of entries.
\end{proof}

In the class the merged edge weights of (L4) are exactly these log-ranks, so Proposition~\ref{prop:p3-sharing} says
that sharing computation across terms can lower the cost of an algorithm at most to the cost of writing the
materialised tables of its terms; the size game of Section~\ref{sec:p3-size} computes that floor. Summing terms
before contracting does not escape it when the sum is over a product set of block choices, since that only produces
a coarser Signed Network Sum to which the proposition applies again; a non-product grouping is a higher-arity piece
(Remark~\ref{rem:p3-hyper}), to which the proposition applies with a cost we cannot evaluate. A sharper form
(the rank equals the number of strings in a minimal decomposition) is not needed here.

\begin{proposition}[Two-merge floor]\label{prop:p3-twomerge}
Let $C$ meet two groups of each of two merges, with induced splits $S_1,S_2$ of $V(C)$ across which $r_1,r_2$ of its
edges cross. For every $\theta\in[0,1]$, taking in a fraction $\theta$ of the blocks a piece of
Proposition~\ref{prop:p3-pigeon} for $S_1$ and in the rest one for $S_2$, every such term pays on $C$ at least
$\theta r_1-o(1)$ per $\lambda$ under merge~1 and at least $(1-\theta)r_2-o(1)$ under merge~2. In particular the
per-level factor of any two-merge identity on $C$ satisfies $\rho\ge1/2$.
\end{proposition}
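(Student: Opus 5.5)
The plan is to reduce everything to Proposition~\ref{prop:p3-pigeon} applied blockwise, and then to use multiplicativity of rank across a split over the product structure of (L2).

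\emph{Per-term bound.} Fix a term $Q$. On $C$ it chooses one piece in each of the $\lambda/m$ blocks. The restriction of the term to $C$ is the tensor product over blocks of the chosen pieces, and these act on disjoint legs. So its $\log_2$-rank across $S_1$ is the sum over blocks of the pieces' $\log_2$-ranks across $S_1$, and likewise for $S_2$. This is the fact already used in the proof of Proposition~\ref{prop:p3-sharing}. The merged edge weight under merge $k$ is this sum of costs $\pIIIcost_k$, by (L4).

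\emph{The two halves.} In a fraction $\theta$ of the blocks the term takes a piece supplied by Proposition~\ref{prop:p3-pigeon} for $S_1$, of $\log_2$-rank at least $r_1m-\log_2|I_C|$ across $S_1$. The other pieces have nonnegative cost. Summing over the $\theta\lambda/m$ such blocks gives at least $\theta\lambda r_1-(\theta\lambda/m)\log_2|I_C|$. Since $|I_C|\le2^{o(m)}$, this is $\theta r_1-o(1)$ per $\lambda$. Symmetrically, the blocks of the remaining fraction $1-\theta$ give $(1-\theta)r_2-o(1)$ per $\lambda$ under merge~2. When $m=\lambda$ there is a single block, so $\theta\in\{0,1\}$. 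The same computation then holds with the $o(1)$ coming from $\log_2|I_C|/\lambda$.

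\emph{Consequence for $\rho$.} This is the step I expect to need care, because of what "per-level factor" means. I take it as the ratio, on the worst term, of the cost the term pays on $C$ under the cheaper merge to the raw bits $|E(C)|\lambda$. The argument must show that a term with the cost of every block small under both merges does not exist. For this I would argue per block with the adversary's choice. By Proposition~\ref{prop:p3-pigeon} applied to each split, every identity has in each block a piece of cost at least $r_1m-o(m)$ across $S_1$, and one of cost at least $r_2m-o(m)$ across $S_2$. Because every choice is a term (product identities), the adversary may select, in each block independently, either type. The term with fraction $\theta$ then pays at least $\min(\theta r_1,(1-\theta)r_2)$ per $\lambda$ under the better merge. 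Choosing $\theta=r_2/(r_1+r_2)$ gives $r_1r_2/(r_1+r_2)$. In the case relevant to the statement, a crossing with $r_1=r_2=r$ equal to the number of edges of $C$ crossed by both splits, this is $r/2$. That is half of the raw cost $r$ per $\lambda$, so $\rho\ge1/2$.

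The main obstacle is to make the normalization explicit. If $S_1$ and $S_2$ cross different edge sets, the raw cost to compare with is $|E(C)|$, and the bound becomes $\rho\ge r_1r_2/((r_1+r_2)|E(C)|)$. I would therefore state the final claim for configurations in which every edge crosses both splits, as the crossing does, where it reads exactly $\rho\ge1/2$. I would also check that edges of $C$ crossing only one split only increase the cost, so they do not weaken the bound.
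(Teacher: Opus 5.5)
Your per-term bound is the paper's argument. A term is a product of pieces on disjoint legs, so its costs add over blocks. Each block carrying a piece of Proposition~\ref{prop:p3-pigeon} for $S_k$ contributes at least $r_km-\log_2|I_C|$, and $\log_2|I_C|=o(m)$.

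The gap is in the \emph{in particular}. You normalise the per-level factor by the raw bits $|E(C)|\lambda$ and optimise $\theta$, which gives $r_1r_2/((r_1+r_2)|E(C)|)$. You then propose to claim $\rho\ge1/2$ only for configurations all of whose edges cross both splits. That proves less than the statement, which is asserted for every $C$ meeting two groups of each merge. The check you plan also fails in your own normalisation. An edge crossing only $S_1$ raises $r_1$ and $|E(C)|$ together: a rectangle plus one such edge gives $2/5$. The triangle of Example~\ref{ex:p3-triangle} meets two groups of each merge with $r_1=r_2=2$ and $|E(C)|=3$, and gives $1/3$.

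The normalisation is the wrong one. An edge that does not cross $S_k$ lies inside a group of merge~$k$ and is never paid under that merge. So in your measure even leaving $C$ unrewritten has factor $\min(r_1,r_2)/|E(C)|$, which is $2/3$ for the triangle. The factor should compare what merge~$k$ pays on $C$ after rewriting with its full cost $r_k\lambda$ without rewriting. This is how (F1) uses it: every merged edge of merge~1 carries at least $\theta$ times its full weight. With $\theta=1/2$ your own bounds give at least $r_k/2$ per $\lambda$ under each merge, for every $C$. No optimisation over $\theta$ and no restriction on $C$ is needed.

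Second, the single-block case $m=\lambda$, which is the recursion of \cite{anthropic} as written, is not settled by allowing only $\theta\in\{0,1\}$. A term then carries one piece on $C$, full across one split and possibly free across the other, so the cheaper merge may pay nothing on $C$. Neither the bound for $0<\theta<1$ nor $\rho\ge1/2$ in the sense above follows configuration by configuration, and your per-block adversary has no blocks to mix over. The paper instead takes the fraction $\theta$ across the configurations of one type on a pair of groups. The homogeneity clause of Definition~\ref{def:p3-level} makes these numerous, and every choice of pieces is a term. On a merged edge the costs then add over configurations exactly as they add over blocks.
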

\begin{proof}
Costs add over blocks, and $\log_2|I_C|=o(m)$. When $m=\lambda$ the fractions are taken over the configurations of a
type on a pair of groups, which homogeneity makes numerous.
\end{proof}

For the rectangle (two diagonal edges of a $2\times2$ grid) this is the observation of \cite[App.~H]{anthropic} that
one piece needs $2\lambda-O(\log\lambda)$ bits row-wise, which leaves it at most about $(4\rho-2)\lambda$
column-wise; the proposition is its general form, for every configuration.

\paragraph{Four-cell reduction.} Under two merges $P_1,P_2$ the nodes of a configuration meeting at most two groups
$G,G'$ of $P_1$ and $H,H'$ of $P_2$ fall into at most four cells, $11=G\cap H$, $12=G\cap H'$, $21=G'\cap H$,
$22=G'\cap H'$. Both splits are unions of cells, so grouping each cell into one party changes neither the ranks of
pieces across $S_1,S_2$ nor which decompositions exist. Hence $f_C$ is the graph tensor of a multigraph on the four
cells, with multiplicities $d$ (on $11$--$22$), $a$ ($12$--$21$), $v_1$ ($11$--$21$), $v_2$ ($12$--$22$), $h_1$
($11$--$12$), $h_2$ ($21$--$22$), times a factor internal to the cells. The split $S_1=\{11,12\}\mid\{21,22\}$ crosses
$d,a,v_1,v_2$ and $S_2=\{11,21\}\mid\{12,22\}$ crosses $d,a,h_1,h_2$. Every question about trading cost between two
merges is a question about this six-parameter tensor.

\begin{proposition}[Count of four-cell edges]\label{prop:p3-count}
Let $c_k$ be the largest group size of $P_k$. At most $K(c_1c_2+c_1+c_2)/2$ edges lie in configurations with four
nonempty cells. For balanced partitions with $n_k$
groups, the fraction $t$ of edges in four-cell configurations satisfies
$t\le (c_1c_2+c_1+c_2)/(K-1)=\bigl(K/(n_1n_2)+1/n_1+1/n_2\bigr)\,K/(K-1)$.
\end{proposition}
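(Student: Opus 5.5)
The plan is a double count over the endpoints of edges, organised by cells. The key observation is that a configuration with four nonempty cells spans two groups $G,G'$ of $P_1$ and two groups $H,H'$ of $P_2$. Every node it contains therefore lies in $G\cup G'$ and in $H\cup H'$. So for each edge $\{x,y\}$ of such a configuration, the endpoint $y$ is constrained by $x$: either $y$ lies in the same $P_1$-group or the same $P_2$-group as $x$, or $y$ lies in the unique "opposite" cell of the configuration relative to $x$.

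First I will fix a node $x$ and bound the number of edges at $x$ that can lie in four-cell configurations. Such an edge $\{x,y\}$ falls into one of three cases.

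\begin{enumerate}[label=(\roman*)]
\item $y$ lies in $x$'s own $P_1$-group. There are at most $c_1$ such $y$.
\item $y$ lies in $x$'s own $P_2$-group. There are at most $c_2$ such $y$.
\item $y$ differs from $x$ in both partitions, so $y\in G'\cap H'$ where $G'$ and $H'$ are determined by $y$.
\end{enumerate}

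Case (iii) is where the main obstacle lies, since a priori $G'\cap H'$ could be any cell. A naive bound counts all $y$ in all other cells, which gives $K$ per node rather than $c_1c_2$. I need that the configuration $C$ containing $x$ is unique, because configurations partition the edges and hence $x$ belongs to a bounded set of them. I would try to argue that at $x$ the edges of case (iii) from four-cell configurations all go into a single opposite cell. That cell has at most $\min(c_1,c_2)\le c_1c_2$ nodes.

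If that argument fails, the fallback is to charge per configuration instead. A four-cell configuration $C$ is contained in a union of four cells, each of size at most $c_1c_2$ (for balanced partitions, $|G\cap H|\le\min(c_1,c_2)$). Then I would sum the degree inside $V(C)$ over the nodes of $C$ and use that distinct configurations through $x$ use disjoint edge sets. The aim is a per-node count of at most $c_1c_2+c_1+c_2$.

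Summing over the $K$ nodes counts each edge twice, which gives at most $K(c_1c_2+c_1+c_2)/2$ edges. For the balanced case, $c_k=K/n_k$, and dividing by the $\binom K2=K(K-1)/2$ edges gives
\[
t\le\frac{c_1c_2+c_1+c_2}{K-1}=\Bigl(\frac{K}{n_1n_2}+\frac1{n_1}+\frac1{n_2}\Bigr)\frac{K}{K-1},
\]
which is routine algebra.

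The main obstacle is the per-node bound in case (iii), that is, bounding the cross-cell neighbours by $c_1c_2$ rather than by $K$. I would attack it first by showing that a node lies in at most one four-cell configuration whose other cells are determined, using that $C$ meets only two groups of each merge.
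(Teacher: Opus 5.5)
Your frame is right: count per node in three cases, double count, then the balanced algebra. Cases (i), (ii) and the final summation are fine. The gap is case (iii), which is the whole content of the proposition, and your main route to it rests on a false claim. A node does not lie in a unique four-cell configuration, and its case-(iii) neighbours need not lie in a single opposite cell, because configurations partition the edges, not the nodes.

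In the grid level of Remark~\ref{rem:p3-contains} (rows and columns as the two merges, rectangles as configurations), the node $(r,c)$ is a corner of one rectangle for every $r'\neq r$, $c'\neq c$. Its case-(iii) neighbours are therefore the $(a-1)(b-1)$ nodes $(r',c')$, spread over that many different cells, so a bound of $\min(c_1,c_2)$ is far too strong. The fallback of charging per configuration does not repair this. It never says which cells the many configurations through $x$ occupy, so it yields no per-node bound.

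The missing idea uses the nonemptiness of a \emph{side} cell, not uniqueness of the configuration. Let $x\in G\cap H$, and let $\{x,y\}$ be an edge of a four-cell configuration $C$ with $y\in G'\cap H'$, $G'\neq G$, $H'\neq H$. Then the groups met by $C$ are exactly $G,G'$ and $H,H'$, so its four cells are $G\cap H$, $G\cap H'$, $G'\cap H$, $G'\cap H'$, and all of them are nonempty. In particular $G\cap H'\neq\emptyset$. Every node of $G$ lies in exactly one group of $P_2$, so $H'$ is one of at most $|G|\le c_1$ groups of $P_2$ that meet $G$. Hence $y$ lies in a union of at most $c_1$ groups of size at most $c_2$, giving at most $c_1c_2$ choices, however many configurations contain $x$.

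This is the paper's argument. With it, your three cases give $c_1c_2+c_1+c_2$ per node and the rest of your proof goes through. As a check, in the grid $c_1c_2=K$, which fits almost every edge there lying in a four-cell configuration.
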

\begin{proof}
Fix an endpoint $u\in G\cap H$ of such an edge. If the other endpoint lies in another group of $P_1$ and of $P_2$,
it lies in a cell $G'\cap H'$ of the same configuration with $G\cap H'\neq\emptyset$, so $H'$ is one of at most $c_1$
groups that meet $G$, and there are at most $c_1c_2$ such endpoints. Otherwise it lies in $H$ (at most $c_2$
choices) or in $G$ (at most $c_1$). A configuration with four nonempty cells meets two groups of each merge.
\end{proof}

For two partitions into groups of size at most $c$ this gives at most $K(c^2+2c)/2$ edges, a fraction $O(c^2/K)$.
So a level in which a constant fraction of the bits can trade needs groups of size $\Omega(\sqrt K)$, that is, at
most $O(\sqrt K)$ merged nodes: the squaring of \cite{anthropic}, forced by the locality of the trade, \emph{provided}
that only four-cell configurations trade. The next section shows that this proviso is a statement about tensors
that flattening ranks cannot decide, and proves the part of it that we need.

\subsection{Which configurations can trade}\label{sec:p3-cells}

The count of four-cell edges is useful only if configurations on fewer cells cannot trade cost between the two
merges. This subsection proves that they cannot (Lemma~\ref{lem:p3-Rprime}). The example first shows why the natural
tool, flattening ranks, is not enough; the proof uses quantum functionals instead.

\begin{example}[Flattening inequalities do not suffice]\label{ex:p3-triangle}
Let $C$ be a triangle with one node in each of the cells $11,12,22$ ($d=h_1=v_2=1$, cell $21$ empty), with $N=2^m$
values per block string. Then $S_1$ isolates the party $22$ and $S_2$ isolates the party $11$, and $f_C$ is the matrix
multiplication tensor $\langle N,N,N\rangle$. Consider pieces of the two forms
$A=\alpha(11)\otimes M(12,22)$ and $B=\beta(22)\otimes M'(11,12)$ with $M,M'$ of rank $N^2$: $A$ is full across $S_1$
(rank $N^2$) and free across $S_2$ (rank $1$), $B$ the reverse. Each satisfies every three-party flattening-rank
inequality (for $A$, $r_{22}\le r_{11}r_{12}$ holds with $r_{12}=N^2$), and a family containing one of each kind
passes the pigeonhole test of Proposition~\ref{prop:p3-pigeon} for both splits. If a family of $2^{o(m)}$ such
pieces summed to $f_C$, the half-half mixture would make every term pay half of the triangle's bits under each
merge, a trade on a configuration with no rectangle. So no argument that uses only the flattening ranks of the
pieces can show that three-cell configurations do not trade; the invariant that excludes such families is the
quantum functional.
\end{example}

We use the quantum functionals $\pIIIF$ of Christandl, Vrana and Zuiddam \cite{cvz19}, defined for complex tensors
and a probability vector $\theta$ on the parties, with the following properties: $\pIIIF$ is monotone under
restriction, additive on direct sums, multiplicative on tensor products and normalised ($\pIIIF$ of the rank-$r$
unit tensor is $r$); and $\pIIIF(T)\le\prod_x R_x(T)^{\theta_x}$, where $R_x$ is the flattening rank of party $x$
against the rest. A sum $T+T'$ is a restriction of $T\oplus T'$ (each party adds its two components), so
$\pIIIF(\sum_iT_i)\le\sum_i\pIIIF(T_i)$. On a graph tensor, in which every edge is an $N\times N$ identity matrix
between its two parties, $\pIIIF=N^{\sum_{\{x,y\}}(\theta_x+\theta_y)}$, the sum over edges between distinct parties
(edges inside a party contribute a vector, i.e.\ a factor $1$).

\begin{lemma}[Configurations with at most three cells do not trade]\label{lem:p3-Rprime}
Under two merges, let $C$ be a configuration with at most three nonempty cells. For every block identity of (L2)
on $C$ there is a piece that pays, under each merge whose split crosses $C$, its full cost up to $O(\log_2|I_C|)$ per
block: $\pIIIcost_1\ge r_1m-2\log_2|I_C|$ and $\pIIIcost_2\ge r_2m-2\log_2|I_C|$.
\end{lemma}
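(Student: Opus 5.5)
The plan is to apply the quantum functional $\pIIIF$ to the block identity $f_C|_{\text{block}}=\sum_{i\in I_C}\pi_{C,i}$, with the weight vector $\theta$ chosen so that $\pIIIF(f_C)$ is the geometric mean of the two flattening ranks that matter. By the four-cell reduction, $f_C$ is, up to a factor internal to the cells, the graph tensor of a multigraph on at most three parties (the nonempty cells). Both splits $S_1,S_2$ are unions of cells.

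\emph{Cases by cells.} If only one split crosses $C$, or both splits induce the same bipartition of the cells, the claim follows directly from Proposition~\ref{prop:p3-pigeon}. The case of two cells is always of this kind. With exactly three nonempty cells out of $\{11,12,21,22\}$, check the four possibilities. Each of $S_1,S_2$ isolates exactly one cell, and the two isolated cells $x$ (for $S_1$) and $y$ (for $S_2$) are distinct. For example, with cells $11,12,22$ we get $x=22$ and $y=11$, as in Example~\ref{ex:p3-triangle}. The edges crossing $S_1$ are exactly the edges of $C$ at $x$, so $r_1$ of them, and those crossing $S_2$ are the $r_2$ edges at $y$.

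\emph{The functional.} Put $\theta_x=\theta_y=1/2$ and weight $0$ on the third cell $z$. On the graph tensor, $\pIIIF=N^{\sum(\theta_a+\theta_b)}$ over edges between distinct parties. An edge $xz$ contributes $1/2$, an edge $yz$ contributes $1/2$, an edge $xy$ contributes $1$, and internal edges contribute nothing. The exponent is therefore exactly $(r_1+r_2)/2$, so $\pIIIF(f_C|_{\text{block}})=2^{(r_1+r_2)m/2}$. Subadditivity of $\pIIIF$ over sums, together with the scalar being nonzero, gives a piece $\pi$ with $\pIIIF(\pi)\ge 2^{(r_1+r_2)m/2}/|I_C|$. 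The upper bound by flattening ranks then gives $\pIIIF(\pi)\le R_x(\pi)^{1/2}R_y(\pi)^{1/2}=2^{(\pIIIcost_1(\pi)+\pIIIcost_2(\pi))/2}$, since $R_x$ is the rank across $S_1$ and $R_y$ the rank across $S_2$. Hence $\pIIIcost_1+\pIIIcost_2\ge (r_1+r_2)m-2\log_2|I_C|$. If also $\pIIIcost_1\le r_1m$ and $\pIIIcost_2\le r_2m$, subtracting one cap from the sum yields each stated inequality, with the factor $2$ appearing naturally.

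\emph{Main obstacle.} The step I expect to need care is the caps $\pIIIcost_k(\pi)\le r_km$. A piece is an arbitrary tensor with the same legs as $f_C$, so its rank across $S_1$ is a priori bounded only by the dimension of the $x$-side. That dimension also includes the strings of edges internal to cell $x$. I would handle this with the four-cell reduction. Since the cells are single parties, the internal equality factors can be split off: restrict every piece to the diagonal of each internal edge and contract it there. This is a restriction, so it is compatible with $\pIIIF$ and with the identity. After it, the $x$-leg has dimension $2^{r_1m}$ and the $y$-leg $2^{r_2m}$, so the caps hold. The piece's cost under the merges is unchanged, because internal strings never cross a split. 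If this reduction does not go through for a given family, a fallback is to weaken the conclusion by the internal dimension. The costs that matter, however, are ranks across splits that internal edges do not cross, so I expect the reduction to work.
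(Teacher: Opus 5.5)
Your proposal is correct and follows the paper's proof. The ingredients match: the same case split; the functional with $\theta=(1/2,1/2,0)$ on the cell isolated by $S_1$, the cell isolated by $S_2$, and the corner; sub-additivity; and caps obtained by first restricting away the strings of edges inside cells. The paper fixes those strings to one value, and your diagonal contraction is an equally valid linear map on each cell's leg.

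One small correction: that restriction can lower a piece's rank across $S_1$ or $S_2$. So the costs are not ``unchanged''; they are only bounded below by the log-ranks of the restricted piece. That is all you need, provided you select the piece from the restricted identity, as the paper does.
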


\begin{proof}
With at most two nonempty cells, either $C$ crosses at most one merge, and Proposition~\ref{prop:p3-pigeon} gives
the piece, or the cells are opposite ($11$ and $22$, or $12$ and $21$), the two splits coincide on $V(C)$, and again
Proposition~\ref{prop:p3-pigeon} gives it. With three nonempty cells, let the empty one be $21$; the other cases are
symmetric. On $V(C)$ the split $S_1$ is $\{11,12\}\mid\{22\}$, which isolates $x=22$, and $S_2$ is
$\{11\}\mid\{12,22\}$, which isolates $y=11$; the third cell $z=12$ is the corner. Then $r_1$ is the number of
edges of $C$ at $x$ that leave $x$'s cell, and $r_2$ the same for $y$.

First fix the strings of all edges inside cells to one value; this restricts $f_C|_{\text{block}}$ to the graph
tensor $f'$ of the three-party multigraph, and each piece $\pi$ to a tensor $\pi'$ with $\sum\pi'=f'$ and with ranks
across $S_1,S_2$ at most those of $\pi$. Take $\theta=(1/2,1/2,0)$ on $(x,y,z)$. On $f'$ an $x$--$y$ edge contributes
$N^{1}$ and an $x$--$z$ or $y$--$z$ edge $N^{1/2}$, so $\pIIIF(f')=2^{m(r_1+r_2)/2}$. For a piece,
$\pIIIF(\pi')\le R_x(\pi')^{1/2}R_y(\pi')^{1/2}$, and the flattenings at $x$ and at $y$ are the splits $S_1$ and
$S_2$; so $\pIIIF(\pi')\le 2^{(\pIIIcost_1(\pi)+\pIIIcost_2(\pi))/2}$. By sub-additivity some piece has
$(\pIIIcost_1+\pIIIcost_2)/2\ge m(r_1+r_2)/2-\log_2|I_C|$.

It remains to split this sum, and here the isolation of single cells is essential: in $f'$ the leg of $x$ has
dimension exactly $2^{mr_1}$, so no restricted piece is over-full across $S_1$, $\log_2R_x(\pi')\le mr_1$, and
likewise $\log_2R_y(\pi')\le mr_2$. With the bound on the sum this gives $\log_2R_x(\pi')\ge mr_1-2\log_2|I_C|$ and
$\log_2R_y(\pi')\ge mr_2-2\log_2|I_C|$, and the costs of $\pi$ are at least these.
\end{proof}

The step without over-full pieces is the one that fails elsewhere. For the $4$-cycle ($d=a=0$, only $v$- and
$h$-edges), the inequality of Song, Chen, Sun and Hu (Proposition~\ref{prop:p3-scsh} below) applied across the third
split $\{11,22\}\mid\{12,21\}$ gives a piece with $\pIIIcost_1+\pIIIcost_2\ge (v+h)m$ up to $o(m)$, but the side
$\{11,12\}$ of $S_1$ has dimension $2^{m(v+2h_1)}$, so pieces can be over-full across $S_1$ or $S_2$, and a
three-piece family that is consistent with every constraint we know defeats every mixture. So the $4$-cycle is
settled only for families without over-full pieces. The smallest open case of a general statement is $K_4$ minus a
diagonal ($d=1$, $a=0$, $v_1=v_2=h_1=h_2=1$): we argue that the known flattening inequalities and the functional with
$\theta=1/2$ on the two ends of the diagonal allow a family that saves one bit per block under each merge, and
whether such a family exists is open. Neither case is needed below.

The conjectured general form (we call it the rectangle-content bound) would say that only $\min(d,a)$ bits of a
four-cell configuration can be traded between the two merges, one for one, and that $v$-, $h$- and excess diagonal
edges always pay in full; for $d=a=1$ it is the floor $\rho\ge1/2$ of Proposition~\ref{prop:p3-twomerge}. It would
improve the constants of Theorem~\ref{thm:p3-A} below, not its structure.

Lemma~\ref{lem:p3-Rprime}, the two-merge floor and Proposition~\ref{prop:p3-count} together are a counting lemma
with a proved premise: in $\RM_2$, only edges of four-cell configurations can trade, there are at most
$K(c_1c_2+c_1+c_2)/2$ of them, and they save at most half of their bits under each merge.

\subsection{Polylogarithmic decay in balanced homogeneous two-merge algorithms}\label{sec:p3-thmA}

This subsection turns the counting lemma into a bound on the decay of $\eK(K)$: first for balanced homogeneous
two-merge algorithms, with exponent $3$ (Theorem~\ref{thm:p3-A}), then for grid algorithms with the exact exponent $1$
(Theorem~\ref{thm:p3-grid}), which gives the threshold bound of Corollary~\ref{cor:p3-grid-threshold}, and finally for
the recursion of \cite{anthropic} as written, with one block per string (Theorem~\ref{thm:p3-grid-sb}).

Write $f(n)=1/\eK_{\pIIIRMbh}(n)$. We need one condition on base methods: a base method used on $n$-node instances
has exponent at least $(\log_2n)^{-3}$. By the input floor this holds for every method on at most $2^{11}$ nodes,
since $2/n\ge(\log_2n)^{-3}$ there (the ratio $(n/2)/(\log_2n)^3$ is at most $1$ on $[2,2^{11}]$, with value $1$ at
$n=2$ and $1024/1331$ at $n=2^{11}$), and for larger $n$ it holds for every method whose exponent is at least
$1/1331$, for instance every contraction tree, whose largest table crosses a cut with between $n/3$ and $2n/3$ nodes
on one side.

Consider a level on $K$ nodes with two balanced merges, $n_k$ groups of size $c_k=K/n_k$ in $P_k$
($2\le n_k\le K/2$). The fraction of edges internal to the groups of $P_k$ is $(c_k-1)/(K-1)\le1/n_k$; write
$a_k\ge1-1/n_k$ for the fraction that crosses $P_k$, and $t$ for the fraction in four-cell configurations. A term is
evaluated under one merge, and its merged instance under merge $k$, with $B_k$ bits, costs at least
$2^{\eK(n_k)B_k}$ if it has uniform weights. The cost of the level is at least the largest, over terms, of the
cheaper of its two merges. Two inequalities follow.

\begin{enumerate}[label=(F\arabic*)]
\item $f(K)\le f(n_1)/a_1+f(n_2)/a_2$. The adversary takes, for every configuration crossing both merges, the two
pieces of Proposition~\ref{prop:p3-twomerge} in proportions $\theta$ and $1-\theta$, and for every configuration
crossing one merge the piece of Proposition~\ref{prop:p3-pigeon}. Every merged edge of merge~1 then carries at least
$\theta$ times its full weight, which is the same for all pairs of groups by balance, so, bounding the merged instance through its shortest edge as in the proof of Theorem~\ref{thm:p3-grid-sb}, it
costs at least the uniform instance with $\theta a_1S$ bits; likewise for merge~2 with $(1-\theta)a_2S$. Equalising
$\eK(n_1)\theta a_1=\eK(n_2)(1-\theta)a_2$ gives $\eK(K)\ge \bigl(f(n_1)/a_1+f(n_2)/a_2\bigr)^{-1}$.
\item $f(K)\le\max_k f(n_k)/(a_k-t)$. The adversary takes, for every configuration with at most three cells, the
piece of Lemma~\ref{lem:p3-Rprime}, and for the four-cell configurations a piece depending only on the type. By
homogeneity the merged instances are uniform, and under merge $k$ they carry at least $(a_k-t)S$ bits.
\end{enumerate}
A level with one merge is (F2) with $t=0$. All bounds hold up to factors $1-o(1)$ as $\lambda\to\infty$, which do not
affect exponents.

\begin{theorem}\label{thm:p3-A}
For every $K\ge2$, every algorithm of $\pIIIRMbh$ whose base methods satisfy the base condition has exponent at least
$(\log_2K)^{-3}$ on uniform $K$-node instances; that is, $f(K)\le(\log_2K)^3$.
\end{theorem}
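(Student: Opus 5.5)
We argue by strong induction on $K$, using (F1) and (F2) at the top level of the algorithm; the base condition covers the base methods. For $K\le 2^{11}$ the input floor $2/K\ge(\log_2K)^{-3}$ already gives the claim, so we may assume $K>2^{11}$ and that $f(n)\le(\log_2n)^3$ for all $n<K$. A level with two balanced merges has $n_k\le K/2$ groups, so the induction hypothesis applies to both merged instances; a merged instance solved by a base method is covered by the base condition. With $L=\log_2K$ and $x_k=\log_2n_k\le L-1$, the claim reduces to an inequality in $x_1,x_2$. We split into two cases according to how much of the edge set can trade.

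\emph{Case 1: one merge has few groups, say $n_1\le n_2$ and $n_1n_2\le K\cdot\phi$ for a threshold $\phi$ to be fixed.} Then $x_1+x_2\le L+\log_2\phi$, and (F1) gives
\[
f(K)\le \frac{x_1^3}{1-1/n_1}+\frac{x_2^3}{1-1/n_2}.
\]
The sum of cubes with $x_1+x_2\le L+\log_2\phi$ and $x_k\le L-1$ is maximised at the extreme point $x_2=L-1$, $x_1=1+\log_2\phi$. This gives roughly $(L-1)^3+(\log_2\phi+1)^3$, up to the factors $1/(1-1/n_k)\le 2$. These factors are dangerous at $n_1=2$. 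There $x_1=1$ and $a_1\ge1/2$, so the first summand is at most $2$. The second is at most $(L-1)^3(1+2/K)$, since $n_2\le K/2$, and $(L-1)^3+2+O(L^3/K)\le L^3$ because $L^3-(L-1)^3\approx3L^2$. In general we would need $3L^2$ to dominate $2(\log_2\phi+1)^3$, which forces $\log_2\phi=O(L^{2/3})$.

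\emph{Case 2: both merges have many groups, $n_1n_2>K\phi$.} Proposition~\ref{prop:p3-count} gives $t\le(K/(n_1n_2)+1/n_1+1/n_2)(1+O(1/K))$, so $t\lesssim1/\phi+2/\min n_k$. Inequality (F2) then gives $f(K)\le\max_k x_k^3/(1-1/n_k-t)$ with $x_k\le L-1$. This is at most $L^3$ provided $1-1/n_k-t\ge((L-1)/L)^3\approx1-3/L$, that is, provided $1/\phi$ and $1/\min n_k$ are both $\lesssim 1/L$.

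The main obstacle is the middle regime, where $n_1$ is small but larger than $2$ and the losses $1/n_1$ in $a_1$ or in $t$ are not negligible. For this regime we interpolate. If $\min n_k\ge cL$, we use (F2) with $\phi=cL$, so that $t=O(1/L)$ and the margin $3/L$ absorbs it; choosing $c$ large absorbs the constants. If $n_1<cL$, then $x_1\le\log_2(cL)$, and (F1) gives $f(K)\le(1+2/n_1)(L-1)^3+2(\log_2(cL))^3$. For $n_1\ge3$ the loss $\frac{2}{n_1}L^3$ is fatal against $3L^2$. In that case we would use (F1) only when $n_2$ is not close to $K/2$; note that $n_1n_2\le K\cdot cL$ forces $n_2\le cLK/n_1$, and that alone does not bound $x_2$ away from $L-1$. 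This is the step we expect to need real care. First we would try to sharpen the crossing fraction to $a_k=1-(c_k-1)/(K-1)$ and apply (F2) with the exact $t$ from Proposition~\ref{prop:p3-count}. When $n_1$ is small, a balanced $P_2$ with $n_2\le K/2$ forces $c_2\ge2$. We would also use that the four-cell fraction is at most $(c_1c_2+c_1+c_2)/(K-1)$, which is small unless $c_2\gtrsim n_1$, in which case $x_2\le L-\log_2n_1$ and the cube gains a factor $(1-\log_2n_1/L)^3$ that beats $1+2/n_1$. Balancing these two subcases with $n_1$ as the parameter is where the exponent $3$, rather than $1$, should come from.
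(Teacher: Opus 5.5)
Your architecture is the paper's: strong induction from the input-floor base $K\le2^{11}$, (F2) when few edges lie in four-cell configurations, and (F1) otherwise, with Proposition~\ref{prop:p3-count} turning ``$t$ large'' into ``some $n_k=O(L)$ or $n_1n_2=O(KL)$''. But the argument is not finished, and the obstacle at which you stop does not exist.

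\textbf{The regime $3\le n_1<cL$.} There you bound (F1) by $(1+2/n_1)(L-1)^3+\dots$, so you attach the factor $1/a_1$ to the summand that carries $(L-1)^3$. In (F1) the factor $1/a_1$ multiplies only $f(n_1)$. The input floor gives $f(n_1)\le n_1/2$, so
\[
f(n_1)/a_1\le n_1^2/(2(n_1-1))\le n_1<cL,
\]
a loss of order $L$. The other summand carries $1/a_2\le n_2/(n_2-1)$, which depends on $n_2$ alone, so split on $n_2$:
\begin{itemize}
\item if $n_2\ge4L$, it is at most $(L-1)^3\cdot 4L/(4L-1)\le L^3-2.7L^2+3.3L$;
\item if $n_2<4L$, the input floor bounds it by $n_2<4L$.
\end{itemize}
In both cases the total is at most $L^3$ for $L\ge11$. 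This is the paper's Case~2a, and it needs none of the refinements of your last paragraph. The same slip occurs in your Case~1 at $n_1=2$: $n_2\le K/2$ bounds $x_2$, not $1/a_2$. An upper bound on $1/a_2$ needs a lower bound on $n_2$, which is the same split $n_2\ge4L$ versus $n_2<4L$.

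\textbf{Where the exponent $3$ is used.} It is not the outcome of a balancing over $n_1$. It makes the margin $L^3-(L-1)^3\approx3L^2$ large enough to absorb the additive losses: the $O(L)$ summand of a small merge above, and the one regime where both summands use the induction hypothesis ($n_1,n_2\ge cL$, $n_1n_2<cKL$). In that regime the convex $l_1^3+l_2^3$ peaks at the vertex $(L-1,\log_2(cL)+1)$, and you must check
\[
(1+\delta)(\log_2(cL)+1)^3+3L+\delta L^3\le 3L^2,\qquad \delta=1/(cL-1),
\]
for all $L\ge 11$, since the input-floor base stops at $L=11$. Your asymptotic requirement $\log_2\phi=O(L^{2/3})$ does not give this. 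In your setup $\phi=cL$ is tied to the same $c$, so ``choosing $c$ large'', which helps in the (F2) case, hurts here. With $c=3.1$ the inequality holds at $L=11$ ($306.1\le363$, as in the paper's Case~2b), but with $c=8$ it already fails there.
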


\begin{proof}
Strong induction on $K$; write $L=\log_2K$ and $l_k=\log_2n_k\le L-1$. For $K\le2^{11}$ the input floor gives
$f(K)\le K/2\le L^3$. Let $L\ge11$; an algorithm on $K$ nodes either is a base method, covered by the base condition,
or starts with a level, to whose merged instances the induction hypothesis $f(n_k)\le l_k^3$ applies, together with the
input floor $f(n_k)\le n_k/2$.

\emph{Case 1: $t\le1/L$.} Use (F2). If $n_k\ge4L$, then $a_k-t\ge1-1/(4L)-1/L=1-1.25/L$ and
$f(n_k)/(a_k-t)\le(L-1)^3/(1-1.25/L)\le L^3$, because $(1-1/L)^3\le1-3/L+3/L^2\le1-1.25/L$ for $L\ge2$. If $n_k<4L$,
then $f(n_k)\le n_k/2<2L$ and $a_k-t\ge1-1/2-1/11>0.4$, so $f(n_k)/(a_k-t)<5L\le L^3$.

\emph{Case 2: $t>1/L$.} By Proposition~\ref{prop:p3-count},
$K/(n_1n_2)+1/n_1+1/n_2>(K-1)/(KL)\ge0.99/L$, so one of the three terms exceeds $0.33/L$: $n_1<3.1L$, or
$n_2<3.1L$, or $n_1n_2<3.1KL$.

\emph{Case 2a: $n_1<3.1L$} (the case $n_2<3.1L$ is symmetric). Use (F1). The first term is
$f(n_1)/a_1\le(n_1/2)/(1-1/n_1)=n_1^2/(2(n_1-1))\le n_1<3.1L$. If $n_2\ge4L$, the second is at most
$(L-1)^3\cdot4L/(4L-1)\le(L-1)^3(1+0.3/L)\le L^3-2.7L^2+3.3L$, and the total is at most $L^3-2.7L^2+6.4L\le L^3$. If
$n_2<4L$, the second is at most $n_2<4L$ and the total is below $7.1L\le L^3$.

\emph{Case 2b: $n_1,n_2\ge3.1L$ and $n_1n_2<3.1KL$.} Then $l_1+l_2<L+l_0$ with $l_0=\log_2(3.1L)$, and
$l_k\in[l_0,L-1]$. By (F1) and $1/a_k\le1+\delta$, $\delta=1/(3.1L-1)$,
$f(K)\le(1+\delta)(l_1^3+l_2^3)$. The convex function $l_1^3+l_2^3$ on this polygon is largest at a vertex, which is
$(L-1,l_0+1)$ or its mirror image, so $f(K)\le(1+\delta)\bigl((L-1)^3+(l_0+1)^3\bigr)$. Since
$(L-1)^3\le L^3-3L^2+3L$, this is at most $L^3$ provided
\[(1+\delta)(l_0+1)^3+3L+\delta L^3\le3L^2 .\]
At $L=11$ the left side is $232.9+33+40.2=306.1$ and the right side $363$; the difference of the two sides grows
with $L$.
\end{proof}

The exponent $3$ is an artefact of the proof. The square grid with ideal pieces (pieces full in one direction and
free in the other, which reach $\rho=1/2$) attains $f(K)=\log_2K$, so no bound of the form $f(K)\le(\log_2K)^{\gamma}$
with $\gamma<1$ can hold if such pieces exist; the true exponent for $\pIIIRMbh$ lies between $1$ and $3$ in that
case, and the pieces of \cite[App.~B]{anthropic}, which exist, give $\eK(K)$ of order $(\log K)^{-1/2}$.

\paragraph{Grid algorithms: the exact exponent.} For the levels that all known constructions use, the induction
closes with exponent $1$, because the bookkeeping of a grid is exact. A \emph{grid level} on $K=ab$ nodes
($a,b\ge2$) places the nodes on an $a\times b$ grid; $P_1$ is the partition into the $a$ rows, $P_2$ into the $b$
columns; the configurations are the raw row edges, the raw column edges, and the rectangles (the two diagonal edges
of each $2\times2$ subgrid), each rectangle expanded by its own exact per-block identity, with $\lambda/m\to\infty$
blocks per string. A \emph{grid algorithm} uses only grid levels (with any $a,b$ and any piece families, which may
change from level to level) and base methods of exponent at least $1/\log_2n$ on $n$ nodes. Contraction trees and
brute force satisfy the base condition for all $n\ge2$, since their exponent is at least $\lceil n/3\rceil
(n-\lceil n/3\rceil)/\binom n2\ge 1/\log_2 n$. The recursion of \cite{anthropic} with its brute-force or tree base
and its rectangle identities applied per block, and the per-block constructions of Section~\ref{sec:part1} (the $K=4$ constructions
are $2\times2$ levels), are grid algorithms once the grouping of the merged nodes (their Lemma~8) and the padding
to a $(K,L)$-clique are admitted as steps, which Theorem~\ref{thm:p3-grid-sb} below does; the recursion as written,
with one block per string, is covered by that theorem, and the rank-counted grouping of Section~\ref{sec:part1}, which acts on the
whole string, is not covered.

\begin{theorem}\label{thm:p3-grid}
Every grid algorithm has exponent at least $1/\log_2K$ on $K$-node instances: $f(K)\le\log_2K$.
\end{theorem}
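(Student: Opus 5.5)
We argue by strong induction on $K$, as in Theorem~\ref{thm:p3-A}, but we use the exact bookkeeping of a grid in place of the crude bounds (F1), (F2). If the algorithm on $K$ nodes is a base method, the base condition of a grid algorithm gives $f(K)\le\log_2K$ directly. Otherwise it starts with an $a\times b$ grid level, $K=ab$, and its merged instances are grid algorithms on $a$ and on $b$ nodes. By induction we may use $f(a)\le\log_2a$ and $f(b)\le\log_2b$, and these hold also for weighted instances bounded below through their shortest edge, as in (F1).

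\emph{Exact edge count.} In the grid, $\binom K2=b\binom a2\cdot$ (column edges) $+\,a\binom b2$ (row edges) $+\,2\binom a2\binom b2$ (rectangle diagonals). We count it as follows: there are $a\binom b2$ row edges, $b\binom a2$ column edges and $2\binom a2\binom b2$ diagonal edges. The row merge crosses the column edges and the diagonals, and the column merge crosses the row edges and the diagonals. The adversary picks, on every rectangle and in a fraction $\theta$ of its blocks, the pigeonhole piece for the row split, and in the remaining blocks the piece for the column split (Proposition~\ref{prop:p3-twomerge}). Raw edges are single-edge configurations and pay in full under the merge they cross. The row-merged instance is then a uniform clique sum on $a$ nodes with at least $\lambda\bigl(b\binom a2\cdot b\,\cdot\,\text{(per pair)}\bigr)$ bits. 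Concretely, each pair of rows carries $b$ column edges and $b(b-1)$ diagonals, so it has at least $\lambda(b+\theta b(b-1))$ bits per merged edge, up to $1-o(1)$. Likewise each pair of columns has at least $\lambda(a+(1-\theta)a(a-1))$ bits. Since there are $\lambda/m\to\infty$ blocks, every $\theta$ is realised, so no homogeneity clause is needed.

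\emph{Closing the induction.} With $S=\lambda\binom K2$, the term costs at least
\[
\min\Bigl(\tfrac{1}{\log_2 a}\tbinom a2 (b+\theta b(b-1)),\ \tfrac{1}{\log_2 b}\tbinom b2(a+(1-\theta)a(a-1))\Bigr)\lambda .
\]
We choose $\theta$ to equalise the two costs. The claim $f(K)\le\log_2a+\log_2b$ then reduces to the following inequality in two variables: with $x=\binom a2 b$, $y=\binom b2 a$ and $D=\binom a2\binom b2$ (so $\binom K2=x+y+2D$), the equalised value $\min$ should be at least $(x+y+2D)/\log_2(ab)$. Write $A=\log_2a$ and $B=\log_2b$. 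Setting $u=x+2\theta D$ and $w=y+2(1-\theta)D$ with $u+w=\binom K2$, the optimum gives $\min(u/A,w/B)\ge (u+w)/(A+B)$, by the mediant inequality, whenever the equalising $\theta$ lies in $[0,1]$. At the boundary, for instance when $x/A>(y+2D)/B$, we take $\theta=0$, and the cost is $\min(x/A,(y+2D)/B)=(y+2D)/B$. In that case we must still show $(y+2D)/B\ge\binom K2/(A+B)$, which follows from $x/A>(y+2D)/B$ by the same mediant argument. So the per-term cost is at least $S/\log_2K$, which gives $f(K)\le\log_2K$.

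\emph{Expected obstacle.} The delicate points are, first, justifying that a weighted merged instance with unequal edge weights costs at least $2^{\eK(n)\cdot(\text{min weight})\binom n2\lambda}$. We handle this by a restriction argument: fix extra bits of heavier edges to reduce to the uniform instance with the minimal weight, as cited for Theorem~\ref{thm:p3-grid-sb}. Second, the rectangle pieces must pay $\theta\cdot 2$ bits per block under rows and $(1-\theta)\cdot 2$ under columns, losing only $o(m)$. This holds for a rectangle because it is a four-cell configuration with $d=a=1$ and $r_1=r_2=2$, and Proposition~\ref{prop:p3-pigeon} applies per block.
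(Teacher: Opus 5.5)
Your setup is the paper's: the same adversary (in a fraction $\theta$ of the blocks of every rectangle, the piece full across the row split, in the rest the piece full across the column split), the same per-pair bit counts $\lambda(b+\theta b(b-1))$ and $\lambda(a+(1-\theta)a(a-1))$, and the same equalisation. When the equalising $\theta$ lies in $[0,1]$, this gives $\binom K2/(A+B)$. The gap is your boundary case.

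If $x/A>(y+2D)/B$, the mediant inequality gives $(y+2D)/B<(x+y+2D)/(A+B)$. That is the opposite of what you need. So if this case could occur your induction would not close, and ``the same mediant argument'' cannot be what handles it.

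The missing ingredient is the chord inequality $(c-1)/(K-1)\le\log_2c/\log_2K$ for $1\le c\le K$. It holds because $c\mapsto\log_2c/\log_2K$ is concave and agrees with $(c-1)/(K-1)$ at $c=1$ and $c=K$. Since $x/\binom K2=(a-1)/(K-1)$, your boundary condition $x/A>(y+2D)/B$ is equivalent to $(a-1)/(K-1)>\log_2a/\log_2K$, which the inequality excludes. Likewise, the condition at the other end, $(x+2D)/A\ge y/B$, is equivalent to $(b-1)/(K-1)\le\log_2b/\log_2K$, which it guarantees. So once the induction bounds $1/A$ and $1/B$ are substituted for the exponents, the equalising $\theta$ always lies in $[0,1]$, and the mediant step then suffices.

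The paper works with the actual exponents $\eK(a)$ and $\eK(b)$, for which the boundary case can occur. It disposes of that case with the same inequality: at $\theta=1$ the row merge carries a fraction $1-i_1=1-(b-1)/(K-1)\ge\log_2a/\log_2K$ of the bits, so $f(K)\le f(a)/(1-i_1)\le\log_2K$.

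Without this inequality the argument is incomplete. The sentence you wrote in its place asserts a false implication.
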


\begin{proof}
Strong induction on $K$. If the algorithm is a base method, this is the base condition. Otherwise it starts with an
$a\times b$ grid level. The fraction of edges inside rows is $i_1=(b-1)/(K-1)$, inside columns $i_2=(a-1)/(K-1)$,
and the rectangle diagonals make up the rest, $t=(a-1)(b-1)/(K-1)$; so $i_1+i_2+t=1$ exactly. Let the adversary take,
in every rectangle and in the same proportions $\theta$ and $1-\theta$ of its blocks, the piece full across the row
split and the piece full across the column split (Proposition~\ref{prop:p3-twomerge}). Under the row merge every
merged edge then carries the same number of bits, its column edges in full and $\theta$ of its rectangle bits, so by
restriction the merged instance on $a$ nodes costs at least $2^{\eK(a)(i_2+\theta t)S}$; under the column merge,
$2^{\eK(b)(i_1+(1-\theta)t)S}$. Write $x_1=\eK(a)$, $x_2=\eK(b)$. If the $\theta$ that equalises the two exponents lies
in $[0,1]$, the common value is $x_1x_2(i_1+i_2+t)/(x_1+x_2)=x_1x_2/(x_1+x_2)$, so
$f(K)\le f(a)+f(b)\le\log_2a+\log_2b=\log_2K$. Otherwise, say it exceeds $1$; then at $\theta=1$ the row merge is the
cheaper one and $f(K)\le f(a)/(1-i_1)$. Since $\log_2c/\log_2K$ is concave in $c$ and agrees with $(c-1)/(K-1)$ at
$c=1$ and $c=K$, we have $i_1=(b-1)/(K-1)\le\log_2b/\log_2K$, hence $1-i_1\ge\log_2a/\log_2K$ and
$f(K)\le\log_2a\cdot\log_2K/\log_2a=\log_2K$. The case below $0$ is symmetric.
\end{proof}

The bound cannot be improved by this argument alone, because it would be attained by pieces whose existence is open.

\begin{remark}[tightness]
For $K=4^{2^j}$ and in the cost model that charges tables by their size only, the bound is attained by square grids
with pieces that are full in one direction and free in the other, if such pieces exist: on a $g\times g$ grid the
equalised value gives $f(g^2)=2f(g)$ exactly when the pieces reach $\rho=1/2$.
\end{remark}

Through the dichotomy, Theorem~\ref{thm:p3-grid} turns into a lower bound on the threshold. This is the bound behind
the size $2^{6000}$ of the introduction. It holds for every choice of pieces and every schedule; only a cheaper
reduction, which raises the bar $\barr$, could lower it.

\begin{corollary}\label{cor:p3-grid-threshold}
In the threshold model, every grid algorithm, in particular the recursion of \cite{anthropic} with its rectangle
identities applied per block or as written (Corollary~\ref{cor:p3-sb-anthropic} below) and any piece families, has $\log_2 m^*\ge q/\barr$; for general 3-SAT through the reduction of
Section~\ref{sec:part2} ($\barr\approx3.2\cdot10^{-4}$ per edge bit, $q\ge2$) this is $\log_2 m^*\ge6.2\cdot10^3$. For the single-block form of the recursion as written the bound holds up to a factor $1-10^{-21}$
(Corollary~\ref{cor:p3-sb-anthropic} below).
\end{corollary}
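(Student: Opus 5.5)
The plan is to combine Theorem~\ref{thm:p3-grid} with part~(a) of the threshold dichotomy (Proposition~\ref{prop:p3-dichotomy}), taking $C=1$ and $\alpha'=1$. Theorem~\ref{thm:p3-grid} says $f(K)\le\log_2K$, that is, $\eK(K)\ge(\log_2K)^{-1}$ for every grid algorithm on $K$-node uniform instances. This holds whatever the piece families, the grid shapes $a\times b$ at each level, and the base methods, provided the base condition holds. In the threshold model a gain at size $m$ needs some $K$ with $q\log_2K\le\log_2m$ and $\eK(K)<\barr$. The lower bound forces $1/\log_2K<\barr$, i.e.\ $\log_2K>1/\barr$. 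Hence $\log_2m\ge q\log_2K>q/\barr$, which is the first claim.

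For the numerical claim I would check the bar. The reduction of Section~\ref{sec:part2} uses $U/n\approx2540$ bits counted from both ends, so about $1270$ edge bits per variable, and the competitor is Liu's $\gamma=0.4092$. This gives $\barr=\gamma/s\approx0.4092/1270\approx3.2\cdot10^{-4}$. With $q\ge2$ we get $q/\barr\ge2/(3.2\cdot10^{-4})\approx6.2\cdot10^3$. I would stress that the bar is the sharp-count estimate of Table~\ref{tab:p2-general}. The corollary is therefore exact relative to that bar, and only a cheaper reduction (a larger $\barr$) moves it.

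The remaining step is the inclusion of the recursion of \cite{anthropic}. With its rectangle identities applied per block, it is a grid algorithm directly by the discussion preceding Theorem~\ref{thm:p3-grid}: brute-force and tree bases satisfy the base condition, and the grouping and padding of Lemma~8 are admitted steps. The form as written, with one block per string, is not literally a grid algorithm. For it I would invoke Corollary~\ref{cor:p3-sb-anthropic}, which gives $\eK(K)\ge(1-10^{-21})/\log_2K$. Rerunning the same two-line deduction with $C=1-10^{-21}$ yields $\log_2m^*\ge(1-10^{-21})q/\barr$, the stated factor.

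This last point is the only real obstacle. Everything rests on that later corollary, which must control the Lemma~8 grouping and the single-block homogeneity gap. Its proof is not available at this point, so the plan simply cites it as a forward reference.
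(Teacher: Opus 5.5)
Your proposal is correct and follows the paper's own proof: it combines Theorem~\ref{thm:p3-grid} with Proposition~\ref{prop:p3-dichotomy}(a) at $C=1$, $\alpha'=1$, and for the single-block recursion it repeats the same deduction with $C=1-\delta$, $\delta<8.5\cdot10^{-22}$, taken from Corollary~\ref{cor:p3-sb-anthropic}. One small point of attribution: the admission of the Lemma~8 grouping and the padding as steps is actually carried out in the grouping-level case of the proof of Theorem~\ref{thm:p3-grid-sb}, so the per-block form also relies on that later argument, not only the single-block form.
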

\begin{proof}
Proposition~\ref{prop:p3-dichotomy}(a) with $C=1$, $\alpha'=1$ and Theorem~\ref{thm:p3-grid}.
\end{proof}

For comparison only (no lower bound of this form is proved for them), the decay that the pieces actually known
achieve, $\alpha'=1/2$, would correspond to $\log_2 m^*\approx q/\barr^{2}\approx1.9\cdot10^7$; Section~\ref{sec:part2} has the exact
numbers. So a usable threshold needs levels that are not grids.

\paragraph{One block per string.} When a whole string is one block ($m=\lambda$, as in the recursion of
\cite{anthropic}, whose pieces depend on $\lambda$), the mixture of Proposition~\ref{prop:p3-twomerge} must be taken
across rectangles instead of across blocks, and a row pair and a column pair share only one rectangle. The merged
instance is then only nearly uniform, so we state the induction for weighted instances and use only their shortest
edge. Write $\beta(I)$ for the least number of bits on an edge of an instance~$I$.

\begin{definition}[Single-block grid algorithms]\label{def:p3-sb}
A \emph{single-block grid algorithm} is built from the following steps, each of which solves the merged instances of
its terms by a single-block grid algorithm that may depend on the term and on $\lambda$:
\begin{enumerate}[label=(\roman*)]
\item a \emph{grid level} as above with one block per string ($m=\lambda$);
\item a \emph{grouping level}: one balanced merge into $n$ groups of $c=K/n$ nodes, with single edges as
configurations (Lemma~8 of \cite{anthropic}, case~(ii));
\item \emph{padding}: lengthening edge strings by bits that the tables ignore;
\item a \emph{base method} satisfying the weighted base condition: on every $n$-node instance $I$ it costs at least
$2^{\beta(I)\binom n2/\log_2n}$.
\end{enumerate}
We assume a function $\sigma(\lambda)=o(\lambda)$ with $\log_2|I_C|\le\sigma(\lambda)$ for every identity used at
parameter $\lambda$. Let $h_A$ be the largest number of grid and grouping levels on a chain of nested steps of $A$,
and define $\Pi_A$ recursively: $\Pi_A=1$ for a base method and for a $2\times2$ level; a padding step or a grouping
level has the largest $\Pi$ of the algorithms it runs; an $a\times b$ grid level with $a,b\ge3$ has that largest
value divided by $1-\bar\eps_{a,b}$, where $\bar\eps_{a,b}=2/a^2+2/b^2$.
\end{definition}

Brute force and contraction trees satisfy the weighted base condition, by the bound on their largest table used for
the base condition above, since every edge carries at least $\beta(I)$ bits; and every algorithm satisfies it on
$4$ nodes, since a node table of $I$ is indexed by at least $3\beta(I)=\beta(I)\binom42/\log_24$ bits (the input floor).

\begin{lemma}[Rounding]\label{lem:p3-sturm}
For integers $\mu,\nu\ge1$ and $\theta\in[0,1]$, the matrix $X_{ij}=\lfloor(i+j+1)\theta\rfloor-\lfloor(i+j)\theta\rfloor$
($0\le i<\mu$, $0\le j<\nu$) has entries in $\{0,1\}$, row sums between $\lfloor\nu\theta\rfloor$ and
$\lceil\nu\theta\rceil$, and column sums between $\lfloor\mu\theta\rfloor$ and $\lceil\mu\theta\rceil$.
\end{lemma}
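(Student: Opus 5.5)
The plan is to reduce everything to the one-dimensional Sturmian sequence $s_k=\lfloor(k+1)\theta\rfloor-\lfloor k\theta\rfloor$, $k\ge0$, since $X_{ij}=s_{i+j}$ depends only on $i+j$. Three facts about $s$ are needed. First, each $s_k$ lies in $\{0,1\}$. Second, the window sums telescope. Third, the telescoped differences obey a floor-function bound.

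For the entries, I would write $k\theta=\lfloor k\theta\rfloor+\{k\theta\}$. Then $s_k=\lfloor \theta+\{k\theta\}\rfloor$. Since $0\le\theta+\{k\theta\}<2$ for $\theta\in[0,1]$, we get $s_k\in\{0,1\}$. The endpoint $\theta=1$ needs a separate line: there $s_k=1$ for all $k$, which also fits.

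For the row sums, fix a row $i$ and sum over $j$. The sum telescopes:
\[\sum_{j=0}^{\nu-1}X_{ij}=\lfloor(i+\nu)\theta\rfloor-\lfloor i\theta\rfloor .\]
The key step is the standard inequality: for real $x$ and $y$,
\[\lfloor x+y\rfloor-\lfloor x\rfloor\in\{\lfloor y\rfloor,\lfloor y\rfloor+1\},\]
because $\lfloor x\rfloor+\lfloor y\rfloor\le\lfloor x+y\rfloor\le\lfloor x\rfloor+\lfloor y\rfloor+1$. Apply it with $x=i\theta$ and $y=\nu\theta$. This puts the row sum in $\{\lfloor\nu\theta\rfloor,\lfloor\nu\theta\rfloor+1\}$.

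One point needs care: the value $\lfloor\nu\theta\rfloor+1$ is only allowed by the lemma when $\nu\theta\notin\mathbb Z$, for then it equals $\lceil\nu\theta\rceil$. So the case $\nu\theta\in\mathbb Z$ must be checked separately. In that case $\lfloor i\theta+\nu\theta\rfloor=\lfloor i\theta\rfloor+\nu\theta$ exactly. Hence the row sum equals $\nu\theta=\lfloor\nu\theta\rfloor=\lceil\nu\theta\rceil$, as required. The column sums are the same argument with the roles of $i,\mu$ and $j,\nu$ exchanged, because $X$ is symmetric in $i+j$.

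I expect no real obstacle. The only delicate points are the two boundary cases just handled: $\theta=1$ for the entries, and the integrality of $\nu\theta$ (or $\mu\theta$) for the sums.
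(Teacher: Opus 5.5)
Your proof is correct and is essentially the paper's argument: the paper also telescopes each row sum to $\lfloor(i+\nu)\theta\rfloor-\lfloor i\theta\rfloor$ (and each column sum likewise with $\mu$), and uses that $\lfloor x+y\rfloor-\lfloor x\rfloor$ is $\lfloor y\rfloor$ or $\lfloor y\rfloor+1$ and equals $y$ when $y$ is an integer. Your separate line for $\theta=1$ is not needed, since $\theta+\{k\theta\}<2$ already holds there, but it does no harm.
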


\begin{proof}
The entries are $0$ or $1$ because $0\le\theta\le1$. The sum of row $i$ telescopes to
$\lfloor(i+\nu)\theta\rfloor-\lfloor i\theta\rfloor$, and $\lfloor x+y\rfloor-\lfloor x\rfloor$ is $\lfloor y\rfloor$
or $\lfloor y\rfloor+1$, and equals $y$ when $y$ is an integer. Columns are the same with $\mu$ in place of $\nu$.
\end{proof}

With the rounding lemma the adversary can spread the two kinds of pieces over the rectangles almost evenly, and the
proof of Theorem~\ref{thm:p3-grid} goes through for weighted instances, at a loss that is small when the grid sides
are large.

\begin{theorem}[One block per string]\label{thm:p3-grid-sb}
Let $A$ be a single-block grid algorithm all of whose grid levels have both sides at least $3$ or are $2\times2$.
For every $\lambda$ and every instance $I$ on $K$ nodes,
\[\log_2\mathrm{cost}(A,I)\ \ge\ \bigl(\beta(I)-h_A\sigma(\lambda)\bigr)\binom K2\Big/\bigl(\Pi_A\log_2K\bigr).\]
In particular $1/\eK(K)\le\Pi_A\log_2K$ on uniform instances, and if the grid levels on every chain of $A$ satisfy
$\sum_l\bar\eps_{a_l,b_l}\le\delta<1$, then $1/\eK(K)\le\log_2K/(1-\delta)$.
\end{theorem}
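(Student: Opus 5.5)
The plan is a structural induction over the steps of $A$: grid levels, grouping levels, padding and base methods. The hypothesis is the weighted inequality itself, $\log_2\mathrm{cost}(A,I)\ge(\beta(I)-h_A\sigma)\binom K2/(\Pi_A\log_2K)$ for every instance $I$. Base methods satisfy it by the weighted base condition, with $h=0$ and $\Pi=1$. Padding can only raise $\beta$, so I reduce it to the step it runs; this uses the fact that the cost of an algorithm of the class depends only on the weights of its instance.

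\emph{Grouping levels.} For a grouping level into $n$ groups of $c$ nodes, every merged edge carries at least $c^2\beta(I)$ bits. The merged instance on $n$ nodes then has $\beta'\ge c^2\beta$. Its contribution is $c^2\beta\binom n2/\log_2n$, and this must be compared with $\beta\binom K2/\log_2K$ minus the bits that stay inside the groups. Here I plan to reuse the concavity step of Theorem~\ref{thm:p3-grid}: the internal fraction $(c-1)/(K-1)$ is at most $\log_2c/\log_2K$, so the required bound holds. One unit of $h$ absorbs nothing here, since single edges have trivial identities.

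\emph{Grid levels.} For an $a\times b$ grid level I mimic the proof of Theorem~\ref{thm:p3-grid}, but take the mixture across rectangles rather than across blocks. Index the rows $0\le i<\binom a2$ and columns $0\le j<\binom b2$ of rectangles by row pairs and column pairs. Lemma~\ref{lem:p3-sturm} then gives a $0/1$ assignment $X$: the piece full across the row split where $X=1$, and the piece full across the column split elsewhere, each taken from Proposition~\ref{prop:p3-pigeon} with loss $\log_2|I_C|\le\sigma$.

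Under the row merge, a merged edge between two rows carries $b$ raw column edges in full. It also carries, from its $\binom b2$ rectangles (one per column pair), at least $\lfloor\binom b2\theta\rfloor$ full-row pieces, each of cost at least $2\beta-2\sigma$. So its weight is at least $(b+\theta\cdot2\binom b2)(\beta-\sigma)$, minus an additive rounding of one rectangle, about $2\beta$. Relative to the ideal $b^2\theta$ share of the $b^2$ edge bits, this rounding is a fraction of order $2/b^2$. Symmetrically, under the column merge the loss is of order $2/a^2$. These are the quantities that $\bar\eps_{a,b}$ records.

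The only change from Theorem~\ref{thm:p3-grid} is then that the shortest merged edge is at least $(1-\bar\eps_{a,b})$ times the uniform value. Applying the induction hypothesis to the merged instances, equalising $\theta$ (or taking the endpoint case with concavity) gives the factor $\Pi$ divided by $1-\bar\eps$ and one extra $\sigma$ per level, hence $h\sigma$. For $2\times2$ levels there is exactly one rectangle, and $\theta\in\{0,1\}$ suffices, because the endpoint case of the earlier proof already works at $\theta=1$ or $0$ when $a=b=2$. This is why $\Pi$ is not increased there.

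\emph{Main obstacle.} I expect the rounding estimate to be the hardest step. I have to make sure the floor loss of one rectangle is charged relative to the edge's total weight. That total includes the raw edges and the rectangles of every column pair, and it must come out at most $\bar\eps_{a,b}$ uniformly in $\theta$. This includes small $\theta$, where $\lfloor\binom b2\theta\rfloor$ may be $0$. There I will bound the relative loss by (loss)/(weight), using weight $\ge b\beta$ from the raw edges, and check that $2\beta/(b^2\beta)$-type ratios suffice. The corollaries follow at once: on uniform instances, $1/e(K)\le\Pi_A\log_2K$ as $\sigma/\lambda\to0$, and $\prod(1-\bar\eps_l)\ge1-\sum\bar\eps_l\ge1-\delta$.
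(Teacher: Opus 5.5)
Your induction skeleton, the use of Lemma~\ref{lem:p3-sturm} to spread the two pieces over the rectangles, and the grouping step (concavity of $\log_2c/\log_2K$) follow the paper. Two steps fail as you state them.

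\emph{The $2\times2$ level.} With a single rectangle every term is pure. The algorithm evaluates it under the merge that its piece does not cross in full. For the row-full piece it merges columns, and that merged edge carries only the two row edges, so your argument yields $2\beta$ against the required $\beta\binom42/\log_24=3\beta$. The endpoint case of Theorem~\ref{thm:p3-grid} presupposes that the equalising $\theta$ lies outside $[0,1]$. For $a=b=2$ it is $1/2$, which one rectangle cannot realise. The paper settles $2\times2$ levels by the input floor instead: any algorithm on four nodes reads a table indexed by at least $3\beta(I)=\beta(I)\binom42/\log_24$ bits, so it satisfies the weighted base condition at $n=4$. That is why $\Pi$ is not raised there, and why $2\times b$ levels with $b\ge3$ are excluded.

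\emph{The rounding.} Your claim is that the shortest merged edge is at least $(1-\bar\eps_{a,b})$ times the uniform value, uniformly in $\theta$. This is false. Just below $\theta\binom b2=1$ the floor is $0$, so the edge may carry only $b(\beta-\sigma)$ bits against an ideal $(b+2)(\beta-\sigma)$. That is a relative loss near $2/(b+2)$, which exceeds $\bar\eps_{b,b}=4/b^2$ for $b\ge4$. Your estimate via weight $\ge b\beta$ gives only $2/b$, not $O(1/b^2)$. Even at the equalising $\theta$ of the ideal bounds, a one-rectangle loss costs the row merge a fraction $\eps_1\log_2K/\log_2a$ of its value. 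For $a=100$, $b=3$ this is about $0.27$, against $\bar\eps_{100,3}\approx0.22$.

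The missing idea is to keep the loss additive and re-equalise. Write $\nu=\binom b2$, let $\Pi'$ be the largest $\Pi$ of the algorithms run on the merged instances, put $x_1=1/(\Pi'\log_2a)$ and $x_2=1/(\Pi'\log_2b)$, and take $i_1,i_2,t$ as in the proof of Theorem~\ref{thm:p3-grid}.
\begin{itemize}
\item A merged row edge has at least $(b+2\theta\nu-2)(\beta-\sigma)$ bits.
\item Since $b+2\theta\nu-2\ge1$, the $(h-1)\sigma$ of the induction hypothesis is absorbed. This gives $\log_2\mathrm{cost}_{\mathrm{row}}\ge(\beta-h\sigma)\binom K2\,x_1(i_2+\theta t-\eps_1)$ with $\eps_1=2\binom a2/\binom K2$, the lost rectangles summed over all $\binom a2$ merged edges.
\item Symmetrically, the column bound is $x_2(i_1+(1-\theta)t-\eps_2)$.
\item Since $x_1,x_2$ do not depend on $\theta$, the crossing point has value $(1-\eps_1-\eps_2)/(\Pi'\log_2K)$.
\item At the endpoint $\theta=1$ one has $\eps_1=(2/b^2)(1-i_1)$, and concavity gives $(1-2/b^2)/(\Pi'\log_2K)$.
\item Finally, $\eps_1+\eps_2\le2/a^2+2/b^2=\bar\eps_{a,b}$.
\end{itemize}
The $1/b^2$ is the loss measured against all $\binom K2$ edges, not against one merged edge at a general $\theta$.

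A minor point: single-edge identities in a grouping level need not be trivial. Subtract $\sigma$ there too, which is what $h_A$ pays for, and use $c^2\ge1$.
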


\begin{proof}
Induction on the structure of $A$; write $\beta=\beta(I)$, $\sigma=\sigma(\lambda)$, $h=h_A$, and assume
$\beta>h\sigma$, since otherwise there is nothing to prove. Each level costs at least the cost of solving the merged
instance of any one of its terms under the merge chosen for that term. We use Proposition~\ref{prop:p3-pigeon} in
the form: a configuration whose edges crossing a split carry $r$ bits has a piece of $\log_2$-rank at least
$r-\sigma$ across it. For a base method the claim is the weighted base condition, and for a $2\times2$ level it is the
input floor shown above. Padding raises $\beta$, so the claim for the algorithm that is run gives it.

\emph{Grouping level.} Every one of the $c^2$ edges between two groups pays at least $\beta-\sigma$, so every merged
edge has at least $c^2(\beta-\sigma)$ bits, and the induction hypothesis (with at most $h-1$ levels and a factor at most
$\Pi_A$) gives $\log_2\mathrm{cost}\ge(\beta-h\sigma)c^2\binom n2/(\Pi_A\log_2n)$, using $c^2\ge1$. Here
$c^2\binom n2=(1-i)\binom K2$ with $i=(c-1)/(K-1)$, and $i\le\log_2c/\log_2K$, since $\log_2x/\log_2K$ is concave and
agrees with $(x-1)/(K-1)$ at $x=1$ and $x=K$. So $(1-i)/\log_2n\ge1/\log_2K$, which is the claim.

\emph{Grid level, $a,b\ge3$.} Let $\Pi'$ be the largest $\Pi$ of the algorithms run on its merged instances, so
$\Pi_A=\Pi'/(1-\bar\eps_{a,b})$, and put $x_1=1/(\Pi'\log_2a)$, $x_2=1/(\Pi'\log_2b)$. With $i_1,i_2,t$ as in the
proof of Theorem~\ref{thm:p3-grid}, let $\eps_1=2\binom a2/\binom K2$ and $\eps_2=2\binom b2/\binom K2$. For every
rectangle there is a piece of $\log_2$-rank at least $2\beta-\sigma$ across the row split and one across the column
split. Index the $\mu=\binom a2$ row pairs by $i$ and the $\nu=\binom b2$ column pairs by $j$; given $\theta\in[0,1]$,
let the term $Q$ take in rectangle $(i,j)$ the first piece if $X_{ij}=1$ (Lemma~\ref{lem:p3-sturm}) and the second
otherwise. Every choice of pieces is a term, so $Q$ is one. Under the row merge a merged edge carries the $b$ column
edges between its rows and at least $\lfloor\theta\nu\rfloor\ge\theta\nu-1$ rectangles with the first piece, so it has
at least $b(\beta-\sigma)+\lfloor\theta\nu\rfloor(2\beta-\sigma)\ge b(\beta-\sigma)+2\lfloor\theta\nu\rfloor(\beta-\sigma)
\ge(b+2\theta\nu-2)(\beta-\sigma)$ bits. Since
$b+2\theta\nu-2\ge1$, the induction hypothesis gives
\[\log_2\mathrm{cost}_{\mathrm{row}}\ge(\beta-h\sigma)\binom K2\,R(\theta),\qquad R(\theta)=x_1(i_2+\theta t-\eps_1),\]
because $b\binom a2=i_2\binom K2$, $2\nu\binom a2=t\binom K2$ and $2\binom a2=\eps_1\binom K2$. In the same way,
with at least $\mu-\lceil\theta\mu\rceil\ge(1-\theta)\mu-1$ rectangles of the second piece on each column pair,
$\log_2\mathrm{cost}_{\mathrm{col}}\ge(\beta-h\sigma)\binom K2\,C(\theta)$ with
$C(\theta)=x_2(i_1+(1-\theta)t-\eps_2)$. The bounds hold for every $\theta$, because $x_1,x_2$ do not depend on the
term. $R$ increases and $C$ decreases. If they meet at some $\theta\in[0,1]$, the common value $V$ satisfies
$V/x_1+V/x_2=i_1+i_2+t-\eps_1-\eps_2=1-\eps_1-\eps_2$, so $V=(1-\eps_1-\eps_2)/(\Pi'\log_2K)$. If $R(1)<C(1)$, take
$\theta=1$: since $1-i_1=b(a-1)/(K-1)$ we have $\eps_1=(2/b^2)(1-i_1)$, and $1-i_1\ge\log_2a/\log_2K$ as above, so
$R(1)\ge(1-2/b^2)/(\Pi'\log_2K)$; the case $C(0)<R(0)$ is symmetric with $2/a^2$. Finally
$\eps_1+\eps_2\le\bar\eps_{a,b}$, since $a(a-1)/(K(K-1))\le a^2/K^2=1/b^2$ and likewise for $b$. So in every case
$\min(R,C)\ge(1-\bar\eps_{a,b})/(\Pi'\log_2K)=1/(\Pi_A\log_2K)$, which is the claim. The last statement follows from
$\prod_l(1-\bar\eps_l)\ge1-\sum_l\bar\eps_l$.
\end{proof}

Applied to the recursion of \cite{anthropic} as it is written, where the grid sides are astronomically large, the
loss is negligible.

\begin{corollary}\label{cor:p3-sb-anthropic}
The recursion of \cite{anthropic} as written (single block, Theorem~A with the grids $g_l=2^{3l+x_0}K_l$ of its (40),
$x_0\ge18$) has exponent at least $(1-\delta)/\log_2K_l$ on $K_l$-node instances for every $l$, with
$\delta=\sum_{l\ge0}4/g_l^2<8.5\cdot10^{-22}$. Hence Corollary~\ref{cor:p3-grid-threshold} holds for it with
$\log_2m^*\ge(1-\delta)\,q/\barr$, and for general 3-SAT still $\log_2m^*\ge6.2\cdot10^3$.
\end{corollary}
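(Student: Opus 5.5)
The plan is to obtain the corollary as a direct application of Theorem~\ref{thm:p3-grid-sb}. The first step casts the recursion of \cite{anthropic} as written as a single-block grid algorithm in the sense of Definition~\ref{def:p3-sb}. Each level of (40) takes a $(g_l^2,\lambda)$-clique and arranges it on a $g_l\times g_l$ grid, with one block per string and the rectangle pieces of App.~B, so that $|I_C|\le(2\lambda)^c$ and $\sigma(\lambda)=c\log_2(2\lambda)=o(\lambda)$; this is step (i). It then applies Lemma~8 case~(ii), a balanced grouping into $K_l$ groups with single edges as configurations; this is step (ii). The passage to a $(K_l,L)$-clique is padding, step (iii). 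The base case is brute force or a tree, which satisfies the weighted base condition as remarked after the definition; this is step (iv). One point needs care. The merged row and column networks carry piece bits that differ from edge to edge, and the grouping is applied to this weighted instance. Definition~\ref{def:p3-sb} and Theorem~\ref{thm:p3-grid-sb} are stated for weighted instances through $\beta(I)$, so nothing further is needed beyond this identification. All grid sides are $g_l\ge 2^{x_0}\ge 2^{18}\ge3$, so the side hypothesis of the theorem holds.

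The second step bounds $\delta$. On any chain of nested steps the grid levels are exactly those of the schedule, one per $l$, each with $a=b=g_l$ and hence $\bar\eps_{g_l,g_l}=4/g_l^2$. The sum over a chain is therefore at most $\sum_{l\ge0}4/g_l^2$. I would use $g_l=2^{3l+x_0}K_l\ge 2^{3l+18}\cdot 2^{18}$, which follows from $K_l\ge K_0=2^{x_0}$. This gives $\sum_l 4\cdot 2^{-2(3l+36)}=4\cdot2^{-72}/(1-2^{-6})$, which is about $8.6\cdot10^{-22}$. That number lies just above the stated $8.5\cdot10^{-22}$. The bound must therefore be tightened by using $K_1=g_0^2=2^{72}$ for the terms with $l\ge1$, after which only the $l=0$ term, $4\cdot2^{-72}\approx8.47\cdot10^{-22}$, matters and the rest is negligible. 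Theorem~\ref{thm:p3-grid-sb} then gives $1/\eK(K_l)\le\log_2K_l/(1-\delta)$, which is the claimed exponent bound. The $h_A\sigma$ loss vanishes per $\lambda$ as $\lambda\to\infty$, because $h_A$ is fixed for fixed $l$.

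The threshold statement follows as in Corollary~\ref{cor:p3-grid-threshold}. I would apply Proposition~\ref{prop:p3-dichotomy}(a) with $C=1-\delta$ and $\alpha'=1$, which gives $\log_2 m^*\ge(1-\delta)q/\barr$. Numerically this is still $\ge6.2\cdot10^3$, since $q/\barr\ge2/(3.2\cdot10^{-4})=6250$. I expect the identification in the first step to be the main obstacle, that is, checking that every step of the written recursion, in particular Lemma~8 acting on the unequal merged weights and the padding, is faithfully one of steps (i) to (iv). The arithmetic for $\delta$ is routine.
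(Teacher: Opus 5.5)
Your proposal is correct and follows the paper's proof. You identify each level as a $g_l\times g_l$ single-block grid level followed by the Lemma~8 grouping and padding, apply Theorem~\ref{thm:p3-grid-sb} with $\bar\eps=4/g_l^2$, and finish with Proposition~\ref{prop:p3-dichotomy}(a) at $C=1-\delta$. Your catch that the crude bound $g_l\ge2^{3l+36}$ overshoots $8.5\cdot10^{-22}$, so that the doubly exponential growth ($g_1=2^{93}$) is needed, is exactly what the paper uses via Theorem~A(a) of \cite{anthropic}.

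One step the paper states explicitly that you leave implicit: the written recursion keeps the internal strings of a merged node as settings, which costs at least the class's size charge (Remark~\ref{rem:p3-contains}), so the lower bound for the model transfers to the algorithm as written.
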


\begin{proof}
Each level of that recursion is a $g_l\times g_l$ grid level with one piece per rectangle and $|I|\le(2\lambda)^c$
(so $\sigma=c\log_2(2\lambda)$), followed, under the merge with the smaller total, by the grouping of the $g_l$ merged
nodes into $K_l$ groups of $2^{3l+x_0}$ nodes (its Lemma~8, case~(ii)) and the padding to a $(K_l,L)$-clique; the base
is brute force. Keeping the internal strings of a merged node as settings costs at least the size charge
(Remark~\ref{rem:p3-contains}). So it is a single-block grid algorithm with $\bar\eps=4/g_l^2$ at level $l$, and every
chain meets each level once. By Theorem~A(a) of \cite{anthropic}, $\log_2g_l=3l+x_0+(3x_0+6)2^l-6l-2x_0-6$, which is
$36$ at $l=0$ for $x_0=18$ and grows faster than $2^l$, so $\delta\le4\cdot2^{-72}(1+2^{-100})<8.5\cdot10^{-22}$. The
loss $h\sigma=O(l\log\lambda)$ is $o(\lambda)$. The threshold statement is Proposition~\ref{prop:p3-dichotomy}(a)
with $C=1-\delta$ and $\alpha'=1$.
\end{proof}

For general single-block grid algorithms the factor depends on the grid sides. If every grid level has both sides at
least $g_0$, a chain meets at most $\log_2K/\log_2g_0$ grid levels, so $\delta\le4\log_2K/(g_0^2\log_2g_0)$; at
$\log_2K=1/\barr\approx3.1\cdot10^3$, the range that matters for Corollary~\ref{cor:p3-grid-threshold}, this is about
$10^{-3}$ for $g_0=1024$, but for $g_0\le16$ the product $\Pi_A$ is no longer close to $1$, so the corollary in the
single-block form is proved here only for large sides. The loss is real for unbalanced grids: on an $a\times b$ grid
with $a\ll b$ a column pair holds only $\binom a2$ rectangles, and one of them is lost to rounding.
Grids with a side $2$ are excluded except at $K=4$, where the input floor gives the bound for every algorithm. On a
$2\times b$ level with $b\ge3$ every column pair holds a single rectangle, so a term gives each merged column edge
either all of the rectangle's bits or none, and a bound in terms of the shortest edge gains nothing from mixing; the
method gives only $1/\eK(2b)\lesssim 2/\eK(b)$. We do not know whether the bound $1/\eK(K)\le\log_2K$ can fail for
such levels; that would need an algorithm that is markedly cheaper on an instance with two edge lengths than on the
uniform instance of the mean length, which is the robustness question of Remark~\ref{rem:p3-homog} for grids.

\begin{remark}[Homogeneity and the general balanced case]\label{rem:p3-homog}
(F1) holds for every balanced level, homogeneous or not. (F2) needs homogeneity: four-cell configurations can sit
on a few pairs of groups while $t$ stays small. For instance, with $P_1$ consisting of groups $G_1,\dots,G_{n_1}$ of
size $c$ and $P_2$ of pairs $\{u,v\}$ with $u\in G_a$, $v\in G_{a+1}$ (cyclically), every pair $(G_a,G_{a+1})$ is
entirely in four-cell configurations while $t$ is about $3c/K$; under merge~1 the merged instance is a clique with a
Hamiltonian cycle of light edges. The class has no lower bound for such near-uniform instances except restriction to
a uniform sub-clique, which loses a constant factor in the exponent per level, too much for a polylogarithmic
induction. The general balanced case of Theorem~\ref{thm:p3-A} therefore needs a robustness lemma for near-uniform
weights (Section~\ref{sec:p3-open}).
\end{remark}

For balanced homogeneous two-merge algorithms, Theorem~\ref{thm:p3-A} gives the following consequences, weaker in the
threshold but valid for a larger class than the grid algorithms.

\begin{corollary}\label{cor:p3-threshold}
Let a problem be reduced to uniform clique sums with $s$ edge bits per variable and solved by an algorithm of
$\pIIIRMbh$ whose base methods satisfy the base condition. Then:
\begin{enumerate}[label=(\alph*)]
\item on $K$-node instances the running time is at least $2^{S/(\log_2K)^3}$ with $S=sn$; with $K\le m$ and $m$
polynomial in $n$, this is $2^{\Omega(sn/\log^3n)}$, so no running time $2^{n^{1-c'}}$ with $c'>0$ arises when $s$
is bounded below by a positive constant;
\item in the threshold model of Section~\ref{sec:p3-dichotomy}, $\log_2m^*\ge q\,\barr^{-1/3}$; the threshold grows
exponentially in a power of $1/\barr$, not polynomially as under polynomial decay.
\end{enumerate}
\end{corollary}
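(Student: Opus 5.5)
The plan is to read both parts off Theorem~\ref{thm:p3-A} and Proposition~\ref{prop:p3-dichotomy}, with no new combinatorics.

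For (a), Theorem~\ref{thm:p3-A} gives exponent at least $(\log_2K)^{-3}$ per edge bit on uniform $K$-node instances. By the definition of the exponent, the running time on instances with $S=sn$ edge bits is then at least $2^{(1-o(1))S/(\log_2K)^3}$; the factor $1-o(1)$ as $\lambda\to\infty$ does not matter for exponents, and we state the bound as $2^{S/(\log_2K)^3}$ in that sense. Next we use that $K\le m$ and $m\le n^{O(1)}$, so $\log_2K=O(\log n)$. This gives $S/(\log_2K)^3=\Omega(sn/\log^3n)$.

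It then remains to see that a bound $2^{n^{1-c'}}$ is impossible. If $s\ge s_0>0$, then $sn/\log^3 n$ exceeds $n^{1-c'}$ for all large $n$, since $n^{c'}$ beats $\log^3 n$. So no time $2^{n^{1-c'}}$ can be achieved.

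The only point needing care is the range of $K$. The reduction may choose $K$ depending on $n$, so the bound must be uniform in $K$. It is, because Theorem~\ref{thm:p3-A} holds for every $K\ge2$ with no hidden constant, and the base condition on base methods is assumed.

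For (b), we apply Proposition~\ref{prop:p3-dichotomy}(a) with $C=1$ and $\alpha'=3$. The hypothesis $\eK(K)\ge(\log_2K)^{-3}$ for all $K$ is exactly Theorem~\ref{thm:p3-A}. This yields $\log_2m^*\ge q\,\barr^{-1/3}$.

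The comparison with polynomial decay is part~(b) of the same proposition. There $\log_2m^*=O(\log(1/\barr))$. Here the lower bound on $m^*$ is $2^{q\barr^{-1/3}}$, which is exponential in a power of $1/\barr$.

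I expect no real obstacle. The one thing to check is that the threshold model applies to the class $\pIIIRMbh$ as stated, that is, the algorithm is used on uniform clique sums produced by the reduction with $K\le m^{1/q}$. This is the setting of Section~\ref{sec:p3-dichotomy} and is assumed in the hypothesis.
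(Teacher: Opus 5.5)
Your proposal is correct and is the paper's proof: (a) is read off Theorem~\ref{thm:p3-A}, and (b) is Proposition~\ref{prop:p3-dichotomy}(a) with $C=1$ and $\alpha'=3$. One small overstatement: the absence of a hidden constant in $(\log_2K)^{-3}$ does not by itself make the $1-o(1)$ factor uniform in $K$, since that factor comes from $\lambda\to\infty$ at fixed $K$; the paper states the corollary at the same level of precision, so this does not set your argument apart from it.
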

\begin{proof}
(a) is Theorem~\ref{thm:p3-A}; (b) is Proposition~\ref{prop:p3-dichotomy}(a) with $C=1$ and $\alpha'=3$.
\end{proof}

For general 3-SAT with $\barr\approx3.2\cdot10^{-4}$, (b) gives only $\log_2K^*\ge\barr^{-1/3}\approx15$, i.e.\
$\log_2m^*\ge29$ at $q=2$, which is not galactic; the quantitative statement is the one for grid algorithms,
Corollary~\ref{cor:p3-grid-threshold}. Among balanced homogeneous two-merge levels the proved exponent is $3$;
lowering it to $1$ for the whole class is the quantitative open problem of this section.

\subsection{\texorpdfstring{Size arguments at fixed $K$}{Size arguments at fixed K}}\label{sec:p3-size}

Theorem~\ref{thm:p3-A} is asymptotic. At a fixed small $K$ the natural lower-bound tool is the size of the tables
that the terms must materialise, which by Proposition~\ref{prop:p3-sharing} bounds every evaluation of them, shared or
not. We define the \emph{size game} of a one-level layout: the edges of $K$ are raw or grouped into crossings (a
crossing is the configuration of two disjoint edges $[x=v][y=u]$ on four nodes, the rectangle of the grid); every
crossing is rewritten by a per-block family of pieces with given $\log_2$-ranks across the splits of its four legs;
a term chooses any contraction tree; the value is the maximum over the adversary's mixtures of the minimum over
trees of the largest table, as an exponent of $N=2^\lambda$, and the exponent per edge bit is the value divided by
$\binom K2$. Only sizes enter: no entropy of naming the pieces and no bilinear cost of building the tables. Two
families are used: \emph{ideal pieces}, costs $(2,0,2)$ and $(0,2,2)$ across (rows, columns, internal split), which
satisfy every constraint we know; and the pieces of \cite[App.~B]{anthropic}, with costs
$(\sqrt{2+u},\sqrt{2-u})$ and the least internal cost $|p-q|$ that the rank inequality below allows.

\begin{table}[ht]
\centering
\begin{tabular}{@{}lccccc@{}}
\toprule
$K$ (layout) & input floor & ideal pieces & App.~B pieces & tree (raw) \\
\midrule
$4$ ($2\times2$) & $3$ & $3$ & $3.414$ & $4$ \\
$6$ ($2\times3$) & $5$ & $[6.667,\,6.725]$ & $6.988$ & $8$ \\
$9$ ($3\times3$) & $8$ & $\ge13.90$ & $\ge14.69$ & $18$ \\
\bottomrule
\end{tabular}
\caption{Size-game values (exponents of $N$; divide by $\binom K2$ for the exponent per edge bit). The ideal value
at $K=6$ is a search value ($6.667$) and a grid bound with a Lipschitz estimate ($6.725$); the other values with
nontrivial pieces at $K=6,9$ are search values, i.e.\ lower estimates of the maximum over mixtures. Computed by
\texttt{calc/r123\_s2\_sizegame.py}.}
\label{tab:p3-size}
\end{table}

At $K=6$ the $2\times3$ grid with ideal pieces has value between $6.667$ and $6.725$ ($\eK=0.444$--$0.448$), with
the pieces of Appendix~B $6.988$ ($\eK=0.466$), against $8$ for the best tree ($\eK=8/15=0.533$) and $5$ for the input
floor. Among $280$ random layouts of one to seven crossings at $K=6$ with ideal pieces the best value found is $7.0$,
none below the grid. So among the layouts tried the grid is the best. As discussed after Table~\ref{pI:tab:eK}, even with ideal pieces one-level layouts move away from the input floor as $K$ grows, while staying a constant factor
below the tree.

By Proposition~\ref{prop:p3-sharing}, shared computation can lower the cost of a construction at most to its
size-game value. For the pieces of Appendix~B at $\omega=2$ the constructions of Section~\ref{sec:part1} cost $3.547$ (per-block accounting; $3.566$ in the grid model used at larger $K$), at
$7.357$--$7.360$ and $15.66$--$15.73$ at $K=4,6,9$, against size floors $3.414$, $6.988$ and $14.69$:
relative gaps of the construction's cost of $3.7\%$, $5.0\%$, and $6.2$--$6.6\%$: a saving of bounded relative size, not one that grows with $K$ as polynomial decay would need.

\paragraph{Why no size argument decides a fixed $K\ge6$ (argued).} A term is a tensor product of pieces on disjoint
legs, and the flattening rank of such a product across any cut is the product of the factors' ranks across the
induced splits. Hence every linear inequality between flattening log-ranks (subadditivity, complement symmetry,
the inequality of Proposition~\ref{prop:p3-scsh}, the inequalities of Cadney, Huber, Linden and Winter
\cite{chlw}, under any grouping of the nodes into parties) holds for every term as soon as it holds for every piece
under the induced grouping of its four legs. The only non-multiplicative constraint is the pigeonhole, which is per
block. Functional pigeonholes obtained by restricting to a sub-clique give floors of at most $K'-1\le5$ at $K=6$,
below $6.667$. So the size question at any fixed $K$ reduces to the question which families of four-leg pieces
exist, and no constraint at $K\ge6$ excludes what $K=4$ allows. At $K=4$ the ideal pieces reach the input floor
exactly, so every size argument is tight there; whether they exist is the question of the per-level factor in the
window $\rho\in[1/2,1/\sqrt2]$ (Proposition~\ref{prop:p3-twomerge} and \cite[App.~B]{anthropic}), equivalently of
the $K=4$ cost in $[3,3.547]$ at $\omega=2$. A one-parameter family $\{(2,x,2-x),(x,2,2-x)\}$, $0<x<1$, is also
admissible and might give less than $6.667$ at $K=6$; it has not been computed.

\paragraph{One level of pair pieces (argued).} With the mixture $1/2$ on every crossing, every split crossing costs at
least $1$; any contraction tree over $K$ leaves has a node with between $K/3$ and $2K/3$ leaves, and at that cut at
most $5K^2/36$ of the roughly $K^2/4$ crossings are unsplit, so the largest table is at least $K^2/9-K/4$ and
$\eK(K)\ge2/9-O(1/K)$ for every one-level layout with ideal pair pieces, against the tree's limit $4/9$. One level is
worth at most a factor $2$, the two-merge floor seen from the other side.

\paragraph{\texorpdfstring{The third cut at $K=4$.}{The third cut at K=4.}} A crossing on four nodes has three splits; the two merges use two of them. The
following is the four-party rank inequality conjectured by Cadney, Huber, Linden and Winter \cite[Hypothesis~1]{chlw}
and proved by Song, Chen, Sun and Hu \cite[Thm.~9]{scsh}, read for a pure four-party state, i.e.\ a four-tensor.

\begin{proposition}[\cite{scsh}]\label{prop:p3-scsh}
For every complex tensor $T$ with four legs $A,B,C,D$,
$\rk(T;BC\mid AD)\le\rk(T;AB\mid CD)\cdot\rk(T;AC\mid BD)$.
\end{proposition}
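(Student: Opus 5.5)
The plan is to obtain the statement as a direct translation of \cite[Thm.~9]{scsh} into the language of flattenings, not to reprove it. If $T=0$ every flattening has rank $0$ and the inequality is $0\le0$. Otherwise all three ranks are invariant under multiplying $T$ by a nonzero scalar. So I normalise $T$ and read it as a unit vector $|T\rangle\in\mathcal H_A\otimes\mathcal H_B\otimes\mathcal H_C\otimes\mathcal H_D$, a pure four-party state.

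\emph{Step 1: flattening ranks are ranks of marginals.} For a bipartition $XY\mid ZW$ of the four legs, the flattening of $T$ is a matrix $M$ from $\mathcal H_Z\otimes\mathcal H_W$ to $\mathcal H_X\otimes\mathcal H_Y$. The reduced state on $XY$ is $\rho_{XY}=\operatorname{tr}_{ZW}|T\rangle\langle T|=MM^{*}$. Hence $\rk\rho_{XY}=\rk M=\rk(T;XY\mid ZW)$, and this also equals $\rk\rho_{ZW}$ (equality of Schmidt ranks of complementary parts). The proposition therefore becomes $\rk\rho_{BC}\le\rk\rho_{AB}\cdot\rk\rho_{AC}$ for a pure state on $ABCD$.

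\emph{Step 2: match the cited form.} \cite[Hypothesis~1]{chlw}, proved in \cite[Thm.~9]{scsh}, is stated for a tripartite mixed state $\rho_{ABC}$. Depending on the labelling used there, it takes the form $\rk\rho_{BC}\le\rk\rho_{AB}\,\rk\rho_{AC}$, possibly with some marginals replaced by their complements. To pass between the two settings I use purification. Our pure $|T\rangle$ restricts to the mixed state $\rho_{ABC}=\operatorname{tr}_D|T\rangle\langle T|$. Its two- and one-party marginals have the same ranks as the complementary marginals of $|T\rangle$, for instance $\rk\rho_{AB}=\rk\rho_{CD}$. Conversely, every mixed $\rho_{ABC}$ arises this way from some $|T\rangle$. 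So the two formulations are equivalent after a relabelling of legs, and each of the three ranks in the proposition is one of the three distinct two-versus-two cut ranks.

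The main obstacle is bookkeeping rather than mathematics. I must check that, under this dictionary, the inequality in \cite{scsh} puts the product on the two cuts $AB\mid CD$ and $AC\mid BD$ and the single rank on $BC\mid AD$, and not on some permutation of them. Since the statement is symmetric under the relabelling $A\leftrightarrow D$, $B\leftrightarrow C$, which fixes all three cuts, I expect any labelling discrepancy to be absorbed by a permutation of the legs. That permutation should be made explicit.

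Finally, I will note two points on scope. The tensors of the paper are rational or real, and their flattening ranks over $\mathbb Q$ coincide with those over $\mathbb C$, so the complex statement applies to them. The proposition is used only as a constraint on the three splits of a four-leg piece.
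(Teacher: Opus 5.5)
Your proposal is correct and follows the paper's own route: the paper gives no proof beyond citing \cite[Thm.~9]{scsh} and reading that inequality for a pure four-party state, that is, a four-tensor, which is exactly your flattening-rank and purification dictionary. One correction to the bookkeeping: the relabelling $A\leftrightarrow D$, $B\leftrightarrow C$ fixes each of the three cuts, so it cannot move which cut carries the single rank. What absorbs any discrepancy is that permutations of the legs act on the three cuts through all of $S_3$ (for instance $A\leftrightarrow C$ exchanges $AB\mid CD$ and $BC\mid AD$ and fixes $AC\mid BD$); since the inequality is quantified over all tensors, any placement of the cuts in the cited form therefore yields the stated one.
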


For a piece with costs $p,q$ on the two crossing splits and $r$ on the internal split this gives $r\ge|p-q|$: a
piece that is full on one merge split and free on the other is full on the third. With this inequality the
three-cut size game at $K=4$ (each of the three matchings of $K_4$ raw or rewritten, a term choosing one of the three
cuts) is no longer conditional: over all two-piece families on a grid of step $1/4$ in the parameters, the value is
at least $3$, the two-cut ideal value, so the third cut never lowers the size floor below the input floor (computed on
the grid, not proved for all real parameters). Without the per-matching structure, flattening inequalities alone
admit families below the input floor, and what excludes them is again a quantum functional, as in
Example~\ref{ex:p3-triangle}.

\subsection{Three escapes and the splits that block them}\label{sec:p3-escapes}

Three ways around the obstructions of this section fail for one reason: each leaves intact a split of the strings of a
configuration across which $f_C|_{\text{block}}$ has full rank, so Proposition~\ref{prop:p3-pigeon} (whose proof
applies verbatim to any split of the strings) gives a piece full across it, and the escape has no evaluation in
which that piece is cheap. In the first two this gives $\rho=1$, in the third $\rho\ge2/3$.

\paragraph{Identity search.} In $AG(2,3)$ with rows, columns and diagonals $x-y=\mathrm{const}$ as merges $k=0,1,2$,
two of the lines $x+y=\mathrm{const}$, $\{a_i\}$ and $\{b_i\}$ ($i\in\mathbb Z_3$), give a configuration $C$ of six
edges that merge $k$ meets in the groups $\{a_i,b_{i+k}\}$, each pair of groups carrying one edge of each triangle:
the candidate for $\rho<1/2$ with three orthogonal merges (Section~\ref{sec:p3-open}). $C$ meets
three groups of each merge, outside (L3); the identities searched for it, and every rewrite of its crossings one at
a time, are of the following form. A \emph{bundle} is the pair of edges of $C$ between two groups of $P_k$; a piece
is \emph{$k$-local} if it is a tensor product of one factor on the strings of each bundle, and it is charged
$\pIIIcost_k(\pi)$, the sum of the factors' $\log_2$-ranks across their pairs of groups, as (L4) would charge each
bundle on its own. Let $X_{ij}$, $Y_{ij}$ be the strings of $a_ia_j$, $b_ib_j$ at $a_i$, $b_i$, call them forward if
$j=i+1$ and backward if $j=i-1$, and let $Q$ split the twelve strings into forward and backward ones.

\begin{proposition}\label{prop:p3-direction}
(a) The rank of $f_C|_{\text{block}}$ across $Q$ is $N^6$. (b) Every $k$-local piece $\pi$ has
$\log_2\rk(\pi;Q)=\pIIIcost_k(\pi)$. (c) Hence, if every piece of a block identity on $C$ is $k$-local for some $k$,
some piece costs at least $6m-\log_2|I_C|$ under every merge for which it is local, and the term taking it in every
block pays the full $6\lambda-o(\lambda)$ bits of $C$ under whichever merge evaluates it: $\rho=1$ on $C$. An
identity of products of factors on pairs of strings inside groups has at least $N^6$ pieces.
\end{proposition}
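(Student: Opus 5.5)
The plan is to exploit the cyclic structure of $\mathbb Z_3$: every pair of groups of merge $k$ is $\{i,i+1\}$ for a unique $i$, so the split of each bundle between its two groups will turn out to be the forward/backward split $Q$ restricted to that bundle. Parts (a) and (b) are then bookkeeping about which copy of which edge string lies where. Part (c) follows from the pigeonhole argument of Proposition~\ref{prop:p3-pigeon} applied to the split $Q$ of the strings instead of a split of the nodes, which, as noted at the start of this subsection, works verbatim.

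\emph{Step (a).} The edge $a_ia_{i+1}$ has the copy $X_{i,i+1}$ at $a_i$, which is forward, and the copy $X_{i+1,i}$ at $a_{i+1}$, which is backward; the same holds for $b_ib_{i+1}$ with $Y$. So each of the six equality tests of $C$ pairs exactly one forward string with one backward string. Hence $f_C|_{\text{block}}$, flattened across $Q$, is a Kronecker product of six $N\times N$ identity matrices, up to a permutation of the strings on each side. Its rank is therefore $N^6$.

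\emph{Step (b).} Fix a merge $k$ and a pair of groups. After relabelling, the groups are $G_i=\{a_i,b_{i+k}\}$ and $G_{i+1}=\{a_{i+1},b_{i+k+1}\}$. Their bundle consists of $a_ia_{i+1}$ and $b_{i+k}b_{i+k+1}$, which has four strings. The strings held in $G_i$ are $X_{i,i+1}$ and $Y_{i+k,i+k+1}$, both forward, and those held in $G_{i+1}$ are both backward. So the split of the bundle's strings across its pair of groups coincides with $Q$ restricted to the bundle. A $k$-local piece is a tensor product over the three bundles of factors on disjoint string sets. Rank across $Q$ is multiplicative over such a product, since $Q$ separates each factor exactly along its own group split. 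Hence $\log_2\rk(\pi;Q)$ is the sum of the factors' log-ranks across their pairs of groups, which is $\pIIIcost_k(\pi)$.

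\emph{Step (c).} The pieces of a block identity sum to $f_C|_{\text{block}}$ up to a scalar, and rank across $Q$ is subadditive. By (a), some piece therefore has $\log_2\rk(\pi;Q)\ge 6m-\log_2|I_C|$. By (b), this piece costs at least that much under every merge for which it is local. Now take the term that chooses this piece in every block; by (L2) every such choice is a term. Under whichever merge evaluates it, the term pays at least $6\lambda-(\lambda/m)\log_2|I_C|=6\lambda-o(\lambda)$ bits on $C$, because $\log_2|I_C|=o(m)$. This gives $\rho=1$.

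\emph{Last claim, the expected obstacle.} For an identity built from products of factors on pairs of strings inside groups, I would show that every such piece has rank $1$ across $Q$. Subadditivity then forces at least $N^6$ pieces. The step to check carefully is that a pair of strings held within one group of some merge is never separated by $Q$. A group $\{a_i,b_{i+k}\}$ holds both forward and backward strings, so this is not automatic. What I would try first is to read ``inside groups'' as lying on one side of a bundle. In that reading the factor sits inside one half of a bundle, so the piece is $k$-local with every bundle factor of rank $1$, and (b) gives rank $1$ across $Q$. If the intended reading allows pairs that mix forward and backward strings, I would instead bound the rank of each piece across $Q$ by $N^{\#\text{mixed pairs}}$. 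I would then recount, using the fact that a rank-$N^6$ sum needs the number of pieces times the largest piece rank to be at least $N^6$.
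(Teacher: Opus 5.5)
Your proof is correct and is essentially the paper's: (a) holds because each equality pairs a forward string with a backward one, (b) holds because $Q$ restricted to each bundle of merge $k$ is the split between its two groups and ranks multiply over the bundle factors, and (c) is the pigeonhole of Proposition~\ref{prop:p3-pigeon} across $Q$. For the last claim your first reading is the intended one: the paper's ``pair factors inside groups have cost $0$'' means $k$-local pieces of cost $0$, hence of rank $1$ across $Q$ by (b). You should drop the fallback rather than pursue it, because under the mixed reading the claim is false: multiplying rank-$7$ decompositions of the two triangles (each a copy of $\langle 2,2,2\rangle$) writes $f_C$ at $N=2$ as $49<64$ products of factors, each on the two strings at one node.
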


\begin{proof}
The directions $j-i$ and $i-j$ of $X_{ij}$ and $X_{ji}$ differ since $3$ is odd, and likewise for $Y$; so all six
equalities cross $Q$, which gives (a). The bundle of merge $k$ on $\{a_i,b_{i+k}\}$, $\{a_j,b_{j+k}\}$ has the strings
$X_{ij},Y_{i+k,j+k}$ of direction $j-i$ on one group and $X_{ji},Y_{j+k,i+k}$ of direction $i-j$ on the other: shifts
preserve direction, so $Q$ restricted to a bundle is the split between its groups, and ranks across $Q$ multiply over
the factors, (b). (c) is Proposition~\ref{prop:p3-pigeon} across $Q$ with (b), since a term evaluated under merge $k$
has only $k$-local pieces on $C$; pair factors inside groups have cost $0$.
\end{proof}

Accordingly a search of the last form ($N=2$, $L$ products per merge) finds the identity of \cite[App.~H]{anthropic}
on the rectangle at $L=1$ and nothing on $C$ (\emph{computed} residuals $0.98$, $0.95$ at $L=1,2$; $3L<64$). The
proof applies to two copies of $K_n$ glued by the $n$ cyclic shifts for every odd $n$ (for even $n$ the edges of
direction $n/2$ do not cross $Q$). Left open on $C$ are pieces coupling different bundles, hyperedges under the
merge, for which Section~\ref{sec:p3-open} gives $\rho\ge1/2$.

\paragraph{Structured merged tables (argued).} A level grows $K$ by a constant factor only if one merge has groups of
constant size $g$; on a grid the other merge then has $g$ groups and the input floor $2/g$. Using that merged tables
are contractions (leaving (L5)) does not help within rewrite-and-merge. The term with, in every rectangle, a piece
full across the row split pays it in full in every step bilinear across a cut inducing that split
(Proposition~\ref{prop:p3-sharing}); so a cheap evaluation keeps the two corners in each column of each rectangle
together, and hence whole columns. Building the column tables is the column merge with its floor; keeping them as
networks leaves each column's internal clique ($1/g$ of the bits) untouched and an instance on all $K$ nodes, so the
recursion does not advance.

\paragraph{The shape of App.~B.} A graded space $V=\bigoplus_EV_E$ is given, $D(E)=\log_2\dim V_{\le E}$, and $h$ is
the least $E$ with $D(E)\ge\lambda$; the strings $x,v$ index operator labels $a,b$ and $y,u$ index states, all of
weight at least $h$; for every integer
$s\in[-H,W]$ the piece $A_s$ factors across the row split $(x,y\mid u,v)$ through $V_{\le s+H}$ and across the column
split through $V_{\le W-s}$, with counted bits $c^{\mathrm{row}}_s\ge D(s+H)-o(\lambda)$, $c^{\mathrm{col}}_s\ge
D(W-s)-o(\lambda)$. In App.~B, $W=\mathrm{wt}\,b+\mathrm{wt}\,y=2h$ ($\mathrm{wt}\,y$ the weight of $y$'s state) and $H=\mathrm{wt}\,a+\mathrm{wt}\,b=2h$ (Lemma~9).
Let $\rho_B=\max_s(c^{\mathrm{row}}_s+c^{\mathrm{col}}_s)/(4\lambda)$, the $\rho$ of \cite[Def.~7]{anthropic}.

\begin{proposition}\label{prop:p3-window}
In this shape let $(W+H)/2\ge\mu h$, $\mu\ge1$, let $D$ be concave on $[h,W+H]$ and $\log_2|I|=o(\lambda)$. Then
$\rho_B\ge\frac12\bigl(1+\frac{\mu-1}{2\mu-1}\bigr)-o(1)$; in particular $\rho_B\ge2/3-o(1)$ when $W,H\ge2h$, as in
App.~B.
\end{proposition}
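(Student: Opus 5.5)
The plan is to reduce the bound to a concavity estimate for the single function $D$, evaluated at the middle of the range of grades. Put $T=W+H$. The piece $A_s$ has counted bits at least $D(s+H)$ across the rows and $D(W-s)$ across the columns, up to $o(\lambda)$. So
\[
  4\lambda\rho_B\ \ge\ \max_{s\in[-H,W]\cap\mathbb Z}\bigl(D(s+H)+D(W-s)\bigr)-o(\lambda),
\]
and it is enough to bound this maximum from below at one well-chosen $s$. The natural choice is the balanced one, $s+H\approx W-s\approx T/2$. There the right-hand side is about $2D(T/2)$. Since $T/2\ge\mu h\ge h$, the point $T/2$ lies in the interval $[h,T]$ on which $D$ is concave.

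The two endpoint values of $D$ come from what has already been proved.
\begin{itemize}
\item \emph{Lower end.} By the definition of $h$, $D(h)\ge\lambda$.
\item \emph{Upper end.} The pieces sum to a scalar multiple of the crossing, which has rank $2^{2\lambda}$ across the row split. Proposition~\ref{prop:p3-pigeon}, with $\log_2|I|=o(\lambda)$, then gives a piece of $\log_2$-rank at least $2\lambda-o(\lambda)$ across that split. Every $A_s$ factors through $V_{\le s+H}\subseteq V_{\le T}$, so its rank across the rows is at most $2^{D(T)}$. Hence $D(T)\ge2\lambda-o(\lambda)$.
\end{itemize}

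Concavity on $[h,T]$ puts $D(T/2)$ above the chord between these two points:
\[
  D(T/2)\ \ge\ D(h)+\frac{T/2-h}{T-h}\bigl(D(T)-D(h)\bigr).
\]
The ratio $(T/2-h)/(T-h)$ is increasing in $T$. At $T/2=\mu h$ it equals $(\mu-1)/(2\mu-1)$, so it is at least that for every admissible $T$. We also need $D(T)-D(h)\ge\lambda-o(\lambda)$. The upper-end bound alone does not give this when $D(h)$ exceeds $\lambda$. Here I would use the minimality of $h$: $D(h-1)<\lambda$, and concavity bounds the single step from $h-1$ to $h$. Alternatively, I would note that replacing $D(h)$ by $\lambda$ only weakens the chord bound, because the chord value is increasing in $D(h)$ when the weight is at most $1$. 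Either way
\[
  D(T/2)\ \ge\ \lambda\Bigl(1+\frac{\mu-1}{2\mu-1}\Bigr)-o(\lambda).
\]
Dividing $2D(T/2)$ by $4\lambda$ gives the claimed $\rho_B\ge\frac12\bigl(1+\frac{\mu-1}{2\mu-1}\bigr)-o(1)$. For $\mu=2$, that is $W,H\ge2h$ and $T/2\ge2h$ as in App.~B, the bound is $\frac12\cdot\frac43=\frac23$.

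The main obstacle is integrality. The balanced $s$ may be a half-integer, so the nearest integer $s$ evaluates $D$ at $T/2\pm\tfrac12$. Concavity controls this loss: the slope of $D$ at $T/2$ is at most $(D(T/2)-D(h))/(T/2-h)$, which is $O(\lambda/h)$, and this is $o(\lambda)$ because $h\to\infty$ with $\lambda$. The same slope estimate is what I would also use to dispose of the case where $D(h)$ overshoots $\lambda$ by more than $o(\lambda)$.
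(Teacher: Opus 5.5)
Your proof is correct and follows the paper's argument. Both use the pigeonhole bound $D(W+H)\ge2\lambda-o(\lambda)$, the balanced shift, and the concavity chord from $h$ to $W+H$ with $D(h)$ replaced by $\lambda$; your second option justifies this replacement, since the chord value is increasing in $D(h)$. The first option is unnecessary and not available anyway, because concavity is only assumed on $[h,W+H]$, not at $h-1$. The argument then finishes with the monotonicity of $(M-h)/(2M-h)$, as yours does.

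The only difference is the rounding. Your slope estimate is stated loosely: it is not $O(\lambda/h)$ in general, since $D((W+H)/2)$ need not be $O(\lambda)$. The paper avoids it by applying the chord directly at $\lfloor (W+H)/2\rfloor$, which lowers the weight $t$ by at most $1/h$. It then uses that $D$ is non-decreasing to bound the other window.
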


\begin{proof}
Put $M=(W+H)/2$. The pieces sum to $d\,[x=v][y=u]$, of rank $2^{2\lambda}$ across the row split, and $A_s$ has rank at
most $2^{D(2M)}$ there, so $D(2M)\ge2\lambda-\log_2|I|$ (Proposition~\ref{prop:p3-pigeon}). The shift
$s^*=\lfloor(W-H)/2\rfloor$ has both windows at least $M'=\lfloor M\rfloor$. With $t=(M'-h)/(2M-h)$, concavity gives
$D(M')\ge(1-t)\lambda+tD(2M)$, so $4\lambda\rho_B\ge c^{\mathrm{row}}_{s^*}+c^{\mathrm{col}}_{s^*}\ge2\lambda(1+t)-o(\lambda)$;
and $(M-h)/(2M-h)$ increases in $M$, so $t\ge(\mu-1)/(2\mu-1)-1/h$.
\end{proof}

Equality in the concavity step holds exactly for $D$ affine on $[h,2M]$ (for $M=2h$: slope $\lambda/(3h)$, a third
of the mean slope below $h$). For $D(E)=cE^\alpha$ and $W,H\ge2h$ the middle piece gives $\rho_B\ge
\max(2^{\alpha-1},2^{-\alpha})\ge1/\sqrt2$, with equality only at $\alpha=1/2$, so App.~B is optimal among power
profiles. Within the shape $\rho_B\in[2/3,1/\sqrt2]$, and below $1/\sqrt2$ one needs a \emph{kinked} profile (fast
growth up to the weight of the strings, slower beyond) with Lemmas~9 and~10 of \cite{anthropic}; we know no such
algebra. A window $H$ as small as the weight of one factor ($\mu=3/2$) would give $5/8$.

\paragraph{Fermionic Fock space (argued).} The exterior algebra on $n$ modes, $D(E)=\log_2\sum_{j\le E}\binom nj$,
has a nearly linear, then flat profile, but even in the shape above with Lemma~9's windows its counted factor is at
best $0.763$ (\emph{computed}, $n=2000$; $0.660$ needs $H=h$). And it has no such pieces: App.~B's $Q_s$ factor in
both orders (Lemma~9(a)), a consequence of vertex-algebra locality, while Wick ordering
$\psi_T\psi^*_S=\sum_{U\subseteq S\cap T}\pm\psi^*_{S\setminus U}\psi_{T\setminus U}$ ($x,v$ encoded by $h$-sets
$T,S$, $y,u$ by states of weight $h$; the term $U=S=T$ is $[x=v][y=u]$) gives pieces that factor in one order only.
The reordered terms with $|U|=j$ factor across the row split through the shared $U$ and a state of weight $j$, at
$\log_2$-rank up to $2\log_2\binom nj=2\lambda-O(\log(n/h))$ for $j=h-1$, and are dense across the column split. So
one piece is nearly full across both splits, unlike the pieces of Example~\ref{ex:p3-triangle}, and $\rho\to1$. The
infinite free-fermion vertex algebra is local but has, by the boson--fermion correspondence, the partition profile;
truncating its modes at $n$ kinks the profile only at weights of order $n^2$, where locality fails.

\subsection{What remains open}\label{sec:p3-open}

The first three items below would sharpen or extend the bounds of this section without changing their message; the
last is the only place where the message itself could change.

\paragraph{The general four-cell case.} The rectangle-content bound of Section~\ref{sec:p3-cells}, that only
$\min(d,a)$ bits of a four-cell configuration can be traded, is open. Its smallest open case is $K_4$ minus a
diagonal; the $4$-cycle is open when over-full pieces are allowed. It would sharpen the exponent of
Theorem~\ref{thm:p3-A}, not its structure.

\paragraph{Non-homogeneous balanced levels.} Theorem~\ref{thm:p3-A} for all balanced two-merge levels needs a
robustness lemma: a clique sum whose weights are uniform except on a sparse set of pairs (such as the cycle of
Remark~\ref{rem:p3-homog}) costs nearly as much as the uniform one, in a form that loses only $1-o(1)$ in the
exponent per level. We believe it, but it requires the level analysis for weighted instances.

\paragraph{Three or more merges.} With $p$ merges the count applies to every pair of them. In the $d$-dimensional
grid with line merges, an edge whose endpoints differ in $r$ coordinates lies in one $r$-subcube, and the facet cuts
of that subcube are unions of groups of all but one of its merges; the pigeonhole then forces every identity to pay at
least $1-1/r$ of the subcube's bits, so line merges are worse than the two-dimensional grid. Three pairwise
orthogonal merges can have a floor as low as $1/3$, but orthogonality again forces squaring. Below $1/2$ one would
need identities beyond those we know: we argue that pieces that are products of factors on disjoint pairs of strings
cannot go below $\rho=1/2$ on the smallest configuration that escapes the known obstructions, the translation matching
of the affine plane $AG(2,4)$ with three of its other parallel classes as merges; and for every configuration whose
nodes split into two halves, each meeting every block of every merge at most once, the uniform quantum functional
proves $\rho\ge1/2$ for all pieces. What is not covered is a piece that is not a product over pairs of strings on
a configuration without such halves: on $AG(2,4)$ this means a hyperedge over four blocks of the third merge, whose
cost the model does not yet define. With $p$ merges the analogue of Lemma~\ref{lem:p3-Rprime} concerns $2^p$ cells.
Even a success here would be a lever on $\rho$, i.e.\ on $\alpha'$, not a change of the type of decay.

\paragraph{The open door: hyperedge configurations.} Inside $\RM_2$ every configuration meets at most two groups of
each merge, and Remark~\ref{rem:p3-hyper} shows that this is exactly what a non-product grouping of terms across
different merged edges violates. We state the role of such configurations as a necessary condition, not a sufficient
one. Within the class, and modulo the counting lemma's premise in the general four-cell case, polynomial decay of
$\eK(K)$ needs a size-game value that decays polynomially, and by Sections~\ref{sec:p3-cells}
and~\ref{sec:p3-size} only pieces of growing arity whose configurations meet three or more groups of a merge
(equivalently, non-product groupings of terms) could give one. Such pieces would also have to be realisable, which no
current tool decides, since rank inequalities for six or more parties are incomplete; their tables would have to be
buildable in time close to their size, which is a bilinear-complexity question; a block would have to need few of
them, since their naming entropy is paid per block; and the constants $C$ and $\beta$ would have to be moderate,
since by Proposition~\ref{prop:p3-dichotomy} the threshold is $\log_2m^*\approx(q/\beta)\log_2(C/\barr)$ plus the
lower-order constants of the reduction. No construction gives evidence of polynomial decay, and for the full
model of Signed Network Sums no barrier is known: the tools used here (ranks, flattenings, quantum functionals) bound
pieces, not the way they are combined beyond pairs of groups.

\section{Open problems}\label{sec:open}

The problems are ordered from the small end to the large end. The first four would sharpen the picture of Sections~\ref{sec:part1}
and~\ref{sec:part3}; the fifth is the only one whose solution could give a usable threshold inside Signed Network Sums; the last
two would lower the thresholds of Section~\ref{sec:part2} without changing their scale.

\begin{enumerate}[leftmargin=*]
\item \emph{The window at four nodes.} At $K=4$ every lower-bound tool we know stops at the input floor $N^3$, and the
best construction at $\omega=2$ uses Appendix~B pieces of \cite{anthropic}. Is there a four-leg piece family with
per-merge factor $\rho$ below $1/\sqrt2$ whose tables are built near their size? Equivalently, close the window
$\rho\in[1/2,1/\sqrt2]$; at fixed $K$ every table-size argument reduces to this question (Section~\ref{sec:part3}). Inside the shape of the pieces of Appendix~B of \cite{anthropic}, every concave profile gives $\rho\ge2/3$
(Proposition~\ref{prop:p3-window}), so $\rho\in[1/2,2/3)$ needs pieces of a different shape, and $\rho\in[2/3,1/\sqrt2)$ a
kinked profile.
\item \emph{Every-$K$ separation.} Do one-level constructions beat every contraction tree at every $K\ge4$, on the
current matrix-multiplication curve and not only at $\omega=2$? At $K=6$ the construction beats the best tree on the current curve
($8.017$ against $9$) but not the width bound $8$.
\item \emph{The four-cell lemma.} Does a configuration on all four cells of two merges trade at most its rectangle
content? The smallest open case is $K_4$ minus one diagonal. It would only sharpen constants in the class barrier.
\item \emph{Robustness and more merges.} Extend the class barrier to balanced but non-homogeneous partitions (a
robustness lemma for near-uniform weights; for grid levels with both sides at least $3$ the near-uniform case is
settled by Theorem~\ref{thm:p3-grid-sb}, and $2\times b$ levels with $b\ge3$ remain open) and to three or more merges (the four-cell lemma for $2^p$ cells; the
open case is non-pair pieces on pairwise orthogonal designs).
\item \emph{The open door.} Is there an exact identity whose configurations meet three or more groups of a merge and
whose pieces are cheap? A usable threshold for general 3-SAT inside Signed Network Sums needs one; it would also have
to be realisable, buildable near its size, and use few pieces per block.
\item \emph{A linear-size kill.} Can the assignments that violate many clauses be eliminated with $O(n)$ edge bits?
This would remove the factor $\log\log n$ from the general bound and lower the general threshold by orders of
magnitude, though not to a usable size.
\item \emph{Clustering for the variable network.} Prove the clustering lemma for the variable network of Theorem~D
of~\cite{anthropic}; it is the condition of row G4, and one of the two conditions of row G7, of the general table of Section~\ref{sec:part2}.
\end{enumerate}

\appendix
\clearpage
\section*{Appendices}\addcontentsline{toc}{section}{Appendices}
\IfFileExists{appendix-part1.tex}{\section{Values of the one-level constructions at $K=9$ and $K=16$}\label{pI:app}

Table~\ref{pI:tab:k916} lists, for each construction, the certified interval over all term profiles (lower end: the
value at the worst profile found in closed form; upper end: the certified bound) on three curves. G is the eight-square
expansion grouped by naming bits, E the grouping counted by ranks, EM the rank count for at most half column-choice
blocks and G above, B the Appendix~B pieces with $m\to\infty$. The finite-$m$ rows use exact partition numbers.
Scripts: calc/r122d\_certify.py, r122d\_tables9.py, r122d\_subst.py, r122d\_bfinite.py.

\begin{table}[h]
\centering
\small
\begin{tabular}{rlccc}
\toprule
$K$ & construction & $\omega=2$ & current curve & VXXZ\\
\midrule
9 & best tree found & $18$ & $20.25$ & $21.344$\\
9 & G & $[17.333,17.721]$ & $[19.125,19.50]$ & $[20.158,20.554]$\\
9 & EM & $[17.269,17.657]$ & $[18.908,19.428]$ & $[19.929,20.478]$\\
9 & E & $[17.204,17.593]$ & $[18.908,19.356]$ & $[19.929,20.402]$\\
9 & B, $m\to\infty$ & $[15.656,15.725]$ & $[16.295,17.614]$ & $[17.176,18.566]$\\
9 & B, $m=256$ / $512$ & $16.456$ / $16.187$ & -- / $18.019$ & --\\
\midrule
16 & best tree found & $60$ & $64$ & $65.376$\\
16 & G & $60$ & $[60,62.25]$ & $[61.28,64.42]$\\
16 & EM & $[59.229,59.615]$ & $[59.229,61.865]$ & $[60.50,64.04]$\\
16 & E & $59.229$ & $[59.229,61.480]$ & $[60.50,63.65]$\\
16 & B, $m\to\infty$ & $[49.94,57.95]$ & $[49.94,59.95]$ & $[51.01,60.92]$\\
16 & B, $m=512$ & $59.718$ & $61.72$ & --\\
\bottomrule
\end{tabular}
\caption{Exponents of $N$ at $K=9$ (width bound $18$) and $K=16$ (width bound $60$).}\label{pI:tab:k916}
\end{table}
}{}
\IfFileExists{appendix-part2.tex}{\section{Computed schedules and the constant \texorpdfstring{$c$}{c}}\label{app:p2}

\subsection{The optimal admissible schedules}\label{app:p2-schedules}

Table~\ref{tab:p2-paths} lists, level by level, the schedules behind columns (X) and (T) of
Table~\ref{tab:p2-density} at densities $1$ and $4.26$: $\log_2K_\ell$, $e_\ell$, and the chosen $\log_2\zeta_\ell$ and
$\log_2s_\ell$, where $s_\ell$ is the least value with (S1) and (S2). The exponents decrease along every schedule,
so (S4) holds, and (S3) holds since $x_0\ge4$ and $c=30$. Values are rounded; the model is that of
\texttt{calc/r123\_levers.py} (main term, $\rho=1/\sqrt2$, $\tau=1$, overheads $3\zeta$ and $3/K$).

\begin{table}[ht]
\centering\small
\begin{tabular}{@{}l r r r r@{\qquad}l r r r r@{}}
\toprule
\multicolumn{5}{@{}l}{density 1, brute-force base, $x_0=7$} & \multicolumn{5}{l}{density 1, tree base, $x_0=8$}\\
$\ell$ & $\log_2K_\ell$ & $e_\ell$ & $\log_2\zeta_\ell$ & $\log_2s_\ell$ & $\ell$ & $\log_2K_\ell$ & $e_\ell$ & $\log_2\zeta_\ell$ & $\log_2s_\ell$\\
\midrule
0 & 7    & 0.500 & $-4.5$  & 12.2 & 0 & 8    & 0.255 & $-4.5$ & 12.3\\
1 & 38.5 & 0.443 & $-6.0$  & 17.0 & 1 & 40.7 & 0.226 & $-6.0$ & 17.1\\
2 & 111  & 0.334 & $-7.0$  & 20.4 & 2 & 116  & 0.171 & $-7.5$ & 21.5\\
3 & 263  & 0.244 & $-8.5$  & 24.6 & 3 & 274  & 0.123 & $-9.0$ & 25.6\\
4 & 575  & 0.175 & $-9.5$  & 27.7 & 4 & 600  & 0.0880 & & \\
5 & 1205 & 0.124 & $-11.0$ & 31.7 & & & & & \\
6 & 2474 & 0.0880 & & & & & & & \\
\midrule
\multicolumn{5}{@{}l}{density 4.26, brute-force base, $x_0=7$} & \multicolumn{5}{l}{density 4.26, tree base, $x_0=8$}\\
\midrule
0 & 7     & 0.500 & $-5.5$  & 14.3 & 0 & 8    & 0.255 & $-5.5$  & 14.4\\
1 & 42.7  & 0.410 & $-6.5$  & 18.2 & 1 & 44.9 & 0.209 & $-7.0$  & 19.3\\
2 & 122   & 0.304 & $-8.0$  & 22.5 & 2 & 128  & 0.153 & $-8.0$  & 22.6\\
3 & 289   & 0.218 & $-9.0$  & 25.7 & 3 & 302  & 0.110 & $-9.5$  & 26.8\\
4 & 629   & 0.156 & $-9.5$  & 27.8 & 4 & 657  & 0.0781 & $-10.0$ & 28.9\\
5 & 1313  & 0.111 & $-10.5$ & 30.8 & 5 & 1372 & 0.0555 & $-11.0$ & 31.9\\
6 & 2687  & 0.0785 & $-11.5$ & 33.9 & 6 & 2808 & 0.0393 & $-13.0$ & 36.9\\
7 & 5442  & 0.0556 & $-12.5$ & 36.9 & 7 & 5689 & 0.0278 & & \\
8 & 10957 & 0.0393 & $-14.0$ & 40.9 & & & & & \\
9 & 21996 & 0.0278 & & & & & & & \\
\bottomrule
\end{tabular}
\caption{Admissible schedules behind Table~\ref{tab:p2-density}; the last row of each block is the top clique,
$\log_2\pIImstar=9\log_2K_L$ (main term, $q=9$).}
\label{tab:p2-paths}
\end{table}

\subsection{The requirements on \texorpdfstring{$c$}{c} on the exact range}

Table~\ref{tab:p2-c} gives, for selected $\lambda$, the least $c$ that each condition of Definition~7
of~\cite{anthropic} requires for the pieces of its Theorem~B, with partition numbers computed exactly; $\Delta$ is the
number of partitions of $0,\dots,4h$ and the last column is the maximum before the $+2$ of the truncation.

\begin{table}[ht]
\centering\small
\begin{tabular}{@{}r r r r r r r r@{}}
\toprule
$\lambda$ & $h$ & $\log_2\Delta$ & $\max(c^{\rm row}+c^{\rm col})$ & (b) & (c1) & (c2) & (c3)\\
\midrule
1   & 2    & 6.1   & 8   & 5.17 & 3.17 & 6.97 & \textbf{27.20}\\
2   & 4    & 9.8   & 14  & 4.17 & 2.04 & 4.38 & 20.76\\
4   & 8    & 15.4  & 21  & 3.23 & 1.68 & 3.56 & 15.31\\
8   & 17   & 24.4  & 34  & 2.84 & 1.53 & 3.20 & 11.44\\
12  & 29   & 33.4  & 46  & 2.63 & 1.50 & 3.12 & 10.05\\
25  & 86   & 61.3  & 85  & 2.53 & 1.49 & 3.09 & 8.37\\
50  & 271  & 113.7 & 158 & 2.50 & 1.52 & 3.12 & 7.43\\
100 & 927  & 216.3 & 303 & 2.64 & 1.55 & 3.18 & 6.84\\
200 & 3361 & 419.1 & 589 & 2.70 & 1.59 & 3.24 & 6.49\\
271 & 5988 & 562.3 & 792 & 2.81 & 1.60 & 3.27 & 6.39\\
\bottomrule
\end{tabular}
\caption{Requirements on $c$ per condition of Definition~7 (\texttt{calc/r123\_trace\_c.py}). The maximum over
$1\le\lambda\le271$ is $27.198$ at $\lambda=1$; beyond, at most $9.59$ (Lemma~\ref{lem:p2-c}).}
\label{tab:p2-c}
\end{table}
}{}
\IfFileExists{appendix-part3.tex}{\section{Size-game computations}\label{app:p3-size}

The values of Table~\ref{tab:p3-size} were computed by \texttt{calc/r123\_s2\_sizegame.py}. A layout lists the
crossings; a term's profile assigns to each crossing type the adversary's mixture parameter; for each profile the
script finds the contraction tree minimising the largest table by dynamic programming over subsets of nodes, and the
value is the maximum over profiles, found by search. Search values are therefore lower estimates of the maximum over
mixtures, and the margins below the tree derived from them are upper bounds. Checks: for raw layouts the value equals
the best tree ($4$, $8$, $18$ at $K=4,6,9$), and ideal pieces at $K=4$ give the input floor $3$. For the $2\times3$
grid with ideal pieces a grid evaluation over profiles with a Lipschitz bound gives the upper value $6.725$ (grid
maximum $6.65$). The worst profiles found are listed in the output file.

\begin{table}[ht]
\centering
\begin{tabular}{@{}cccc@{}}
\toprule
crossings & layouts & least value & median value \\
\midrule
1 & 40 & 8.000 & 8.000 \\
2 & 40 & 7.000 & 7.000 \\
3 & 40 & 7.000 & 7.500 \\
4 & 40 & 7.000 & 7.500 \\
5 & 40 & 7.286 & 7.667 \\
6 & 40 & 7.500 & 8.000 \\
7 & 40 & 7.885 & 8.137 \\
\bottomrule
\end{tabular}
\caption{Random layouts of crossings at $K=6$ with ideal pieces (search values; the least value $6.9999$ for four
crossings is rounded). None is below the $2\times3$ grid ($6.667$--$6.725$). Computed by
\texttt{calc/r123\_s2\_layouts.py}.}
\label{tab:p3-layouts}
\end{table}
}{}

\end{document}